\tikzset{
  vertex/.style={
    circle, inner sep =1.25pt, minimum size=3pt, fill=black, outer sep=0
  }
}
\newcolumntype{C}{>{$}c<{$}}
\newcolumntype{L}{>{$}l<{$}}
\newcolumntype{R}{>{$}r<{$}}
\newcolumntype{M}{>{\centering\arraybackslash$}X<{$}}
\newcolumntype{d}[1]{D{.}{.}{#1}}
\newcommand*{\mat}[1]{\boldsymbol{#1}}
\newcommand*{\coord}[1]{\mathbf{#1}}
\newcommand*{\vece}{\coord{e}}
\newcommand*{\veck}{\coord{k}}
\newcommand*{\vecr}{\coord{r}}
\newcommand*{\vecR}{\coord{R}}
\newcommand*{\vecs}{\coord{s}}
\newcommand*{\vecx}{\coord{x}}
\newcommand*{\vecX}{\coord{X}}
\newcommand*{\vecy}{\coord{y}}
\newcommand*{\binteg}[3]{\int^{\mathrlap{#3}}_{\mathrlap{#2}}\ud{#1}\,}
\newcommand*{\integ}[1]{\!\int\!\ud{#1}\:}
\newcommand*{\iinteg}[2]{\integ{#1}\!\!\integ{#2}}
\DeclareMathOperator{\diag}{diag}
\DeclareMathOperator{\Ei}{Ei}
\DeclareMathOperator{\KummerM}{M}
\DeclareMathOperator{\Laguerre}{L}
\DeclareMathOperator{\LegendreP}{P}
\DeclareMathOperator{\LegendreQ}{Q}
\DeclareMathOperator{\Order}{\mathcal{O}}
\DeclareMathOperator{\SpherHarm}{Y}
\DeclareMathOperator{\Trace}{Tr}
\DeclarePairedDelimiter{\abs}{\lvert}{\rvert}
\DeclarePairedDelimiter{\av}{\langle}{\rangle}
\DeclarePairedDelimiter{\norm}{\lVert}{\rVert}
\DeclarePairedDelimiterX\cbraket[2]{(}{)}{#1\delimsize\vert#2}
\DeclarePairedDelimiterX\cbraketx[2]{(}{)}{#1\delimsize\Vert#2}
\DeclarePairedDelimiterX\braket[2]{\langle}{\rangle}{#1\delimsize\vert#2}
\DeclarePairedDelimiterX\braketx[2]{\langle}{\rangle}{#1\delimsize\Vert#2}
\DeclarePairedDelimiterX\brakket[3]{\langle}{\rangle}{#1\delimsize\vert#2\delimsize\vert#3}
\DeclarePairedDelimiterX\brakketx[3]{\langle}{\rangle}{#1\delimsize\Vert#2\delimsize\Vert#3}
\DeclarePairedDelimiterX\ket[1]{\lvert}{\rangle}{#1}
\DeclarePairedDelimiterX\bra[1]{\langle}{\rvert}{#1}
\DeclarePairedDelimiterX\set[2]{\lbrace}{\rbrace}{#1 : #2}
\newcommand*{\opOneHF}[1]{#1{\gamma}^{\text{HF}}}
\newcommand*{\oneHF}{\opOneHF{}}
\newcommand*{\dens}{\rho}	
\newcommand*{\Complex}{\mathbb{C}}
\newcommand*{\du}{\partial}
\newcommand*{\eqspace}{\phantom{{} = {}}}
\newcommand*{\e}{\mathrm{e}}
\newcommand*{\half}{\frac{1}{2}}
\newcommand*{\Hone}{\mathcal{H}}
\newcommand*{\im}{\textrm{i}}
\newcommand*{\Integers}{\mathbb{Z}}
\newcommand*{\isDefinedAs}{\coloneqq}
\newcommand*{\nfrac}[2]{\nicefrac{#1}{#2}}
\newcommand*{\nhalf}{\nicefrac{1}{2}}
\newcommand*{\perm}{\wp}
\newcommand*{\Reals}{\mathbb{R}}
\newcommand*{\rref}{\vecr_{\text{ref}}}
\newcommand*{\sAdenifeDsi}{\eqqcolon}
\newcommand*{\ud}{\mathrm{d}}
\newcommand{\xmapsback}[1]{\xleftarrow{#1}\raisebox{.4pt}{\hspace{-5pt}$\shortmid$\hspace{3pt}}}
\theoremstyle{definition}
\newtheoremstyle{exercise}{\topsep}{\topsep}{}{}{\bfseries}{.}{.5em}{{\color{red}\thmname{#1}\thmnumber{ #2}\thmnote{ [#3]}}}
\theoremstyle{exercise}
\newtheorem{exercise}{Exercise}[chapter]
\newlist{subexercise}{enumerate}{2}
\setlist[subexercise]{label=\alph*), ref=\theexercise.\alph*}
\newtheorem{example}{Example}[chapter]
\theoremstyle{theorem}
\newtheorem*{theorem}{Theorem}
\definecolor{DarkBlue}{rgb}{0.2,0.2,0.5}
\definecolor{DarkGreen}{rgb}{0,0.4,0}
\definecolor{Indigo}{rgb}{0.51,0,0.51}
\title{Hartree--Fock \& \\ Density Functional Theory}
\author{Klaas Giesbertz}
  \newcommand{\FileVersion}{2022}
  \newcommand{\FileVersion}{\PipelineBuildVersion}
\date{version \FileVersion}
\begin{document}

\maketitle

\tableofcontents


\chapter{Hartree--Fock}

\section{Introduction}

In quantum chemistry the quantum mechanical properties of molecules are studied. The full non-relativistic molecular Hamiltonian treats both the electrons and nuclei quantum mechanically, so reads
\begin{align}
\hat{H}^{\text{full}}
&= \sum_{\mu=1}^{\crampedclap{N_{\text{nuc}}}}\frac{-\hbar^2}{2M_{\mu}}\nabla^2_{\vecR_{\mu}} +
\frac{e^2}{4\pi\epsilon_0}\sum_{\mu=1}^{\crampedclap{N_{\text{nuc}}}}
\sum_{\nu = \mu + 1}^{\crampedclap{N_{\text{nuc}}}}
\frac{Z_{\mu}Z_{\nu}}{\abs{\vecR_{\mu} - \vecR_{\nu}}} - {} \\*
&\eqspace
\frac{e^2}{4\pi\epsilon_0}\sum_{\mu=1}^{\crampedclap{N_{\text{nuc}}}}
\sum_{i=1}^N\frac{Z_{\mu}}{\abs{\vecR_{\mu} - \vecr_i}} +
\frac{-\hbar^2}{2m_e}\sum_{i=1}^N\nabla^2_{\vecr_i} +
\frac{e^2}{4\pi\epsilon_0}\sum_{i=1}^N\sum_{j=1}^{i-1}\frac{1}{\abs{\vecr_i - \vecr_j}} . \notag 
\end{align}
As a convention, we will use latin letters to refer to electrons and greek letters to indicate nuclei. Further, we used the following quantities
\begin{itemize}
\item$N_{\text{nuc}}$ total number of nuclei,
\item$N$ total number of electrons,
\item$\hbar$ Planck's constant, $h$, divided by $2\pi$,
\item $e$ elementary charge,
\item $\epsilon_0$ vacuum permitivitty,
\item $M_{\mu}$ mass of nucleus $\mu$,
\item $Z_{\mu}$ atom number of atom $\mu$, i.e.\ the number of protons,
\item $\vecR_{\mu}$ position of nucleus $\mu$,
\item $m_e$ mass of an electron,
\item $\vecr_i$ position of electron $i$.
\end{itemize}
The solution to the Schrödinger equation of the full molecular Hamiltonian is the full molecular wavefunction, depending on all positions and spin variables of all nuclei and electrons
\begin{align}
\Psi_{\text{full}}\bigl(\vecX_1,\dotsc,\vecX_{N_{\text{nuc}}},\vecx_1,\dotsc,\vecx_N\bigr) ,
\end{align}
where we used combined space-spin coordinates
\begin{align}
\vecX_{\mu} &\isDefinedAs \vecR_{\mu}\Sigma_{\mu} &
&\text{and} &
\vecx_i &\isDefinedAs \vecr_i\sigma_i .
\end{align}
In these combined space-spin coordinates, $\Sigma_{\mu}$ and $\sigma_i$ denote the spin coordinates of nucleus $\mu$ and electron $i$ respectively.

The nuclei are much heavier than the electrons, so we expect the nuclear quantum effects to be small compared to the electronic ones. We will therefore focus on the electronic part of the Hamiltonian by taking the limit $M_{\mu} \to \infty$, which turns the nuclei into classical point charges. This is called the \emph{Born--Oppenheimer approximation}. In fact, Born and Oppenheimer showed more precisely under which assumptions this approximation is valid and how to include the nuclear effects in a second step \autocite{BornOppenheimer1927}. It has become clear that the nuclear part plays a crucial role in chemical reactions and alternative approaches have been put forward \autocite{AbediMaitraGross2010, KylanpaaRantala2011}. Nevertheless, we will focus on the electronic part in this course, as it already provides a significant challenge.

The (electronic) Hamiltonian central in this course will be
\begin{multline}\label{eq:elecHam}
\hat{H} = -\half\sum_{i=1}^N\nabla^2_{\vecr_i} +
\sum_{i=1}^N\underbrace{\sum_{\mu=1}^{\crampedclap{N_{\text{nuc}}}}
\frac{-Z_{\mu}}{\abs{\vecR_{\mu} - \vecr_i}}}_{= v(\vecr_i)} + {} \\
\sum_{1 \leq i < j \leq N}
\underbrace{\frac{1}{\abs{\vecr_i - \vecr_j}}}_{ = w(\vecr_i,\vecr_j)} +
\underbrace{\sum_{1\leq\mu<\nu\leq N_{\text{nuc}}}\frac{Z_{\mu}Z_{\nu}}{\abs{\vecR_{\mu} - \vecR_{\nu}}}}_{ = E_{\text{nuc}}} ,
\end{multline}
which is expressed now in \acf{a.u.}: $\hbar = e = m_e = 4\pi\epsilon_0 = 1$,
so
\begin{itemize}
\item length is measured in Bohr, $a_0 = \frac{4\pi\epsilon_0\hbar^2}{m_ee^2}$.
\item energy is measured in \unit{Hartree} $= \frac{m_e}{\hbar^2}\bigl(\frac{e^2}{4\pi\epsilon_0}\bigr)^2$, so the ground state energy of hydrogen is -\unit[\nhalf]{Hartree} = -\unit[13.6]{eV}.
\end{itemize}
Note that the interaction of the electrons with the nuclei has now become a simple local (=multiplicative) potential $v_{\text{nuc}}(\vecr_i)$ and only the interaction, $w(\vecr_i,\vecr_j)$, among the electrons is the remaining complicating term. The interaction between the nuclei has reduced to a constant $E_{\text{nuc}}$, which only yields a shift in the energy (gauge) and is typically only added at the end of the calculation as it does not affect the eigenfunctions of the Hamiltonian. The electronic wavefunction now only contains the space-spin coordinates of the electrons
\begin{align}
\Psi(\vecx_1,\dotsc,\vecx_N) ,
\end{align}
so the Born--Oppenheimer approximation provides a significant simplification.

Though we have simplified our problem significantly, it is still impossible to solve exactly in general. We therefore need to resort to approximate solutions. To gain a better understanding of these approximate solutions, we should first state more precisely in which function space our solution should reside.

\subsection{1-particle space}
We will first characterise the 1-particle space, since the $N$-particle space can be constructed from it in a relatively straightforward manner. As we want the wavefunction, $\psi(\vecr)$, to be a probability amplitude, its square needs to be integrable.
\mkbibfootnote{We actually also want the kinetic energy to be finite, so we can actually work in a subspace of $L^2$ in which we demand that the gradient does not become too large. This is the following Sobolev space
\begin{align*}
H^1(\Reals^3) \isDefinedAs
\set*{f \colon \Reals^3 \to \Complex}{\integ{\vecr}\bigl(\abs{f(\vecr)}^2 + \abs{\nabla f(\vecr)}^2\bigr) < \infty}
\end{align*}
To be more precise, this is the space for the so-called weak solutions of the Schrödinger equations. For strong solutions also the Laplacian should be bounded, so these solutions should be sought in the subspace $H^2(\Reals^3)$ \autocite{RuggenthalerPenzLeeuwen2015}.
}
The space of square integrable functions $\Reals^3 \to \Complex$ is named $L^2(\Reals^3)$, which is in more mathematical notation
\begin{align}
L^2(\Reals^3) \isDefinedAs
\set*{f \colon \Reals^3 \to \Complex}{\integ{\vecr}\abs{f(\vecr)}^2 < \infty} .
\end{align}
Note that this definition also works for different spaces by replacing $\Reals^3$ by the appropriate space, e.g.\ $\Reals$ for 1D and $[a,b]$ for a 1D box. An important feature of $L^2$ is that it is also a Hilbert space, i.e.\ one can define a proper inner product as
\begin{align}
\braket{f}{g} \isDefinedAs \integ{\vecr}f^*(\vecr)g(\vecr) .
\end{align}
This allows us to use the concept of orthogonality and to expand functions in a basis. More on that in just a moment.

As an electron is a spin-half particle, the value of its wavefunction also depends on the spin variable, which is a point in a 2-dimensional vector space. This 2-dimensional vector space is typically denoted by $\Integers_2$. The two basis elements of $\Integers_2$ are represented in many ways, e.g. $\{-1,1\}$, $\{-\nhalf,\nhalf\}$, $\{0,1\}$, $\{\alpha,\beta\}$ and $\{\uparrow,\downarrow\}$. So could also see $\psi(\vecx)$ as a vector containing two functions 
\( \bigl(\begin{smallmatrix}\psi_{\alpha}(\vecr) \\ \psi_{\beta}(\vecr) \end{smallmatrix}\bigr) \).
The only essential aspect is that it has the structure of a 2-dimensional vector space. The wavefunction of an electron, $\psi \colon \Reals^3 \times \Integers_2 \to \Complex$, needs to reside in the following Hilbert space
\begin{align}
\Hone \isDefinedAs L^2(\Reals^3) \otimes \Integers_2
= \set*{f \colon \Reals^3 \times \Integers_2 \to \Complex}{\integ{\vecx}\abs{f(\vecx)}^2 < \infty} .
\end{align}
The integral over the space-spin coordinate is a short-hand notation for integration of the space variable $\vecr$ and summation over the spin variable $\sigma$
\begin{align}
\integ{\vecx} \isDefinedAs \integ{\vecr}\sum_{\crampedclap{\sigma \in \Integers_2}}
= \integ{\vecr}\sum_{\crampedclap{\sigma \in \{-1,1\}}}
= \integ{\vecr}\sum_{\crampedclap{\sigma \in \{\uparrow,\downarrow\}}} ,
\end{align}
where we have explicitly expressed the basis elements of $\Integers_2$ for some possible choices in the last two equalities.

As we took the tensor product of two Hilbert spaces, $\Hone$ is also a Hilbert space with the inner product
\begin{align}
\braket{f}{g} \isDefinedAs \integ{\vecx}f^*(\vecx)g(\vecx)
= \integ{\vecr}\sum_{\crampedclap{\sigma \in \Integers_2}} f^*(\vecr\sigma)g(\vecr\sigma) .
\end{align}
Since $\Hone$ is a Hilbert space, any element of the Hilbert space (function) can be expanded in a basis
\begin{align}
\psi(\vecx) = \sum_{i=1}^{\infty}\psi_i\phi_i(\vecx) .
\end{align}
So a Hilbert space has the same structure as a vector space though (possibly) infinite dimensional.

Note that you can easily generalise these consideration to general spin-$s$ particles by replacing $\Integers_2$ by $\Integers_{2s+1}$.

\begin{exercise}\label{ex:FourierCoefs}
Show that $\psi_i = \braket{\phi_i}{\psi}$ if we assume that $\{\phi_i\}$ forms an orthonormal basis, i.e.\ $\braket{\phi_i}{\phi_j} = \delta_{ij}$.
\end{exercise}

\subsection{$N$-particle spaces}
To construct an $N$-particle Hilbert space, we simply glue $N$ 1-particle Hilbert spaces together by taking the tensor product
\begin{align}
\Hone^N \isDefinedAs \bigotimes_{i=1}^N\Hone = \underbrace{\Hone \otimes \dotsb \otimes \Hone}_{\text{$N$ times}} ,
\end{align}
which is similar to the construction of $\Reals^3 = \Reals \times \Reals \times \Reals$ out $\Reals$. We can readily construct a basis for $\Hone^N$ by considering all possible products of 1-body basis functions, so a general function $f \in \Hone^N$ can be expanded as
\begin{align}
f(\vecx_1,\dotsc,\vecx_N) = \sum_{\crampedclap{i_1,\dotsc,i_N}} f_{i_1\dotsc i_N}
\underbrace{\phi_{i_1}(\vecx_1)\phi_{i_2}(\vecx_2)\dotsb\phi_{i_N}(\vecx_N)}_%
{\sAdenifeDsi\Phi_{i_1i_2\dotsc i_N}(\vecx_1,\vecx_2,\dotsc,\vecx_N)} .
\end{align}
Compare this construction with the formation of higher dimensional monomials out of monomials in 1D. For example, for 3D monomials we have
\begin{align*}
&1\cdot1\cdot1, \\
&x_1\cdot1\cdot1, &
&1\cdot x_2\cdot1, &
&1\cdot1\cdot x_3 , \\
&x_1^2\cdot1\cdot1, &
&x_1\cdot x_2\cdot1, &
&x_1\cdot 1\cdot x_3, \\
&1\cdot x_2^2\cdot 1, &
&1\cdot x_2\cdot x_3, &
&1\cdot 1\cdot x_3^2 , \text{etc.}
\end{align*}
That is all we need for non-identical particles. For identical quantum particles, however, we need to do some more work, since the expectation value of any operator is not allowed to change upon permutation. The spin-statistics theorem \autocite{Fierz1939, Pauli1940} states that in 3D there are only two options. The wavefunction is either
\begin{description}[labelindent=1em]
\item[symmetric] which corresponds to integer spin particles, i.e.\ bosons, or
\item[anti-symmetric] which are half-integer particles, i.e.\ fermions.
\end{description}
As we should either have a fully symmetric or anti-symmetric wavefunction to describe bosons or fermion respectively, the only thing we need to do is to adapt the product basis to the required symmetry
\begin{align}\label{eq:symBasis}
\Phi^{\pm}_{i_1i_2\dotsc i_N}(\vecx_1,\dotsc,\vecx_N)
\isDefinedAs \ifbool{normSlatDet}{\frac{1}{\sqrt{N!}}}{}\begin{vmatrix}
\phi_{i_1}(\vecx_1)	&\phi_{i_2}(\vecx_1)	&\ldots	&\phi_{i_N}(\vecx_1)	\\
\phi_{i_1}(\vecx_2)	&\phi_{i_2}(\vecx_2)	&\ldots	&\phi_{i_N}(\vecx_2)	\\
\vdots			&\vdots			&\ddots	&\vdots			\\
\phi_{i_1}(\vecx_N)	&\phi_{i_2}(\vecx_N)	&\ldots	&\phi_{i_N}(\vecx_N)
\end{vmatrix}_{\mathrlap{\pm}} .
\end{align}
The minus sign means that we use the determinant (fermions) and the plus sign means that we use the permanent (bosons). The permanent is simply a determinant without the alternating signs. The basis states for the fermions are often referred to as Slater determinants \autocite{Slater1929}, though they were used earlier by Heisenberg \autocite{Heisenberg1926} and Dirac \autocite{Dirac1926b}.

\ifbool{normSlatDet}{The factor $(N!)^{-\nhalf}$ is the standard normalization factor for Slater determinants, though in the proper construction, this factor should be absent \autocite{StefanucciLeeuwen2013}.}{
As we now are working with (anti-)symmetric functions, we need to redefine the inner product as $\Psi(\vecx_1,\vecx_2)$ refers to the same points as $\Psi(\vecx_2,\vecx_1)$. As we should only take the unique part into account, the inner product now becomes
\begin{align}\label{eq:antiSymInnerProd}
\braket{f}{g} \isDefinedAs&
\integ{\vecx_1}\!\!\binteg{\vecx_2}{\vecx_2 > \vecx_1}{}\dotsi\binteg{\vecx_N}{\vecx_N > \vecx_{N-1}}{}f^*(\vecx_1,\vecx_2,\dotsc,\vecx_N)g(\vecx_1,\vecx_2,\dotsc,\vecx_N) \\*
{}=& \frac{1}{N!}\integ{\vecx_1}\!\!\integ{\vecx_2}\dotsi\integ{\vecx_N}
f^*(\vecx_1,\vecx_2,\dotsc,\vecx_N)g(\vecx_1,\vecx_2,\dotsc,\vecx_N) . \notag
\end{align}
In virtually all textbooks and papers, the inner product is not taken over the unique sector, so does not have the $(N!)^{-1}$ factor. To compensate for this error, the Slater determinants gets a $1/\sqrt{N!}$ normalisation factor. A more extensive explanation why the inner product defined in~\eqref{eq:antiSymInnerProd} is the correct one for the probabilistic interpretation of the square of the wavefunction can be found in~\autocite{StefanucciLeeuwen2013}.
}

\begin{exercise}
Show that the symmetrised basis functions in~\eqref{eq:symBasis} are orthonormal, if the one-particle functions $\phi_i(\vecx)$ are also orthonormal.
\end{exercise}

\section{Full~\acf*{CI}}
Combined with the variational principle, the basis expansion quickly leads to the following idea to build approximate wavefunctions,
\mkbibfootnote{The expansion of a solution in a finite number of basis functions, including the convergence considerations of the expansion are known in mathematics as the Galerkin approximation \autocite{Galerkin1915} and was introduced by Walther Ritz in 1908 to whom Galerkin refers.}
which is known in the quantum chemistry community as full \ac{CI}.
\begin{enumerate}[align=left, label=\textbf{Step \arabic*})]
\item Select a finite number of one-electron basis functions which you deem important.

\item Construct the corresponding anti-symmetric $N$-electron basis (Slater determinants), $\Phi_I$.

\item Use the variational principle to optimise the expansion parameters $c_I$ in
\begin{align}
\Psi(\vecx_1,\dotsc,\vecx_N) = \sum_Ic_I\Phi_I(\vecx_1,\dotsc,\vecx_N),
\end{align}
where $I \isDefinedAs i_1,i_2,\dotsc,i_N$ with $i_1 < i_2 < \dotsb < i_N$.
\mkbibfootnote{We could equally well have taken $i_1 > i_2 > \dotsb > i_N$, or a more complicated scheme, as long as the set $\{\Phi_I\}$ is linearly independent.} That is, we demand that the gradient of the energy with respect to the expansion coefficients, $c_I$, vanishes which leads to the secular equations
\begin{align}\label{eq:fullCI}
\sum_J\bigl(H_{IJ} - E\,S_{IJ}\bigr)c_J = 0 .
\end{align} 
\end{enumerate}

\begin{exercise}
Derive the secular full \ac{CI} equations.
\end{exercise}

\begin{exercise}
How many $N$-body determinants can be constructed from $M$ one-electron basis functions?
\textbf{Challenge:} How many permanents can be constructed?
\end{exercise}

\noindent
The number of determinants grows very quickly in full \ac{CI} with the number of one-electron basis functions and number of electrons. There are many selection schemes to use only a subset of the determinants, which lead to a large variety of \ac{CI} methods. Now, let us to try to get away with the a single determinant: the `best' one. With the variational principle at our disposal, we will define the `best' Slater determinant as the one that yields the lowest energy. Only the orbitals can be varied in the Slater determinant, so we will need to optimise the energy with respect to the orbitals. So we would like to have an explicit expression of the energy in terms of the orbitals, $E_{\text{HF}}[\{\phi_i\}]$. As $E_{\text{HF}}$ is now a function of functions, we call it a functional.

This approximation to retain only one Slater determinant is known as \acf{HF}. It was originally proposed by Hartree \autocite{Hartree1928}, but he only took the Pauli exclusion principle into account and not the full anti-symmetry of the wavefunction. This was pointed out independently by Slater \autocite{Slater1929} and by Fock \autocite{Fock1930}. The anti-symmetry leads to an important additional term in the energy expression, called the exchange energy.

\section{Slater--Condon rules}
The Slater--Condon rules \autocite{Slater1929, Condon1930} which deal with the expectation values of (possibly different) Slater determinants (both $N$-body)
\begin{align}
\brakket{\Phi_I}{\hat{O}}{\Phi_J} .
\end{align}
The Slater--Condon rules only deal with the case where both determinants are constructed from the same orthonormal basis. They can readily be extended to a non-orthonormal bases \autocite{Lowdin1955}, but we will not need them and are therefore out of the scope of this course.

First we need to introduce the concept of an $n$-body operator. An $n$-body operator is an operator which acts only on $n$ different particles simultaneously
\begin{subequations}
\begin{align}
\hat{O}_0	&= o_0 = \text{constant} , \\*
\hat{O}_1	&= \sum_i\hat{o}_i = \sum_i\hat{o}(\vecx_i) , \\*
\hat{O}_2	&= \sum_{i < j}\hat{o}_{ij} = \half\sum_{i\neq j}\hat{o}(\vecx_i,\vecx_j) , \\*
		\shortvdotswithin{=}
\hat{O}_n	&= \sum_{\crampedclap{i_1 < \dotsb < i_n}}\hat{o}_{i_1 \dotsb i_n}
= \frac{1}{n!}\sum_{\crampedclap{i_1 \neq \dotsb \neq i_n}}\hat{o}(\vecx_{i_1},\dotsc,\vecx_{i_n}) .
\end{align}
\end{subequations}
The interaction between the nuclei, $E_{\text{nuc}}$, in the electronic Hamiltonian~\eqref{eq:elecHam} is an example of a 0-body operator. The kinetic energy of the electrons, $\hat{T} = -\half\sum_i\nabla^2_i$, and the interaction of the electrons with the nuclei, $\hat{V}_{\text{nuc}} =\sum_iv_{\text{nuc}}(\vecr_i)$, are examples of 1-body operators. The interaction between the electrons, $\hat{W} = \half\sum_{i \neq j}w(\vecr_i,\vecr_j)$ is clearly a 2-body operator. An example of a mixed operator is the total spin operator
\begin{align}
\hat{S}^2 \isDefinedAs \mat{\hat{S}} \cdot \mat{\hat{S}}
= \sum_{i \neq j} \mat{\hat{S}}_i \cdot \mat{\hat{S}}_j + \sum_i\hat{S}^2_i.
\end{align}
The first part on the right-hand side is two-body and the last is one-body.
The advantage of the notion of $n$-body operators is that due to the indistinguishability of quantum particles, that we only need to calculate the expectation value for one set of them and multiply by the number of $n$-body sets to get the full expectation value. In words this sounds rather cryptic, but in formulae it simply means that
\begin{subequations}\label{eq:nBodyReduced}
\begin{align}
\brakket{\Phi}{\hat{O}_0}{\Psi} &= \brakket{\Phi}{o_0}{\Psi} = o_0\braket{\Phi}{\Psi} , \\*
\label{eq:one-body}
\brakket{\Phi}{\hat{O}_1}{\Psi} &= N\brakket{\Phi}{\hat{o}_1}{\Psi} , \\*
\label{eq:two-body}
\brakket{\Phi}{\hat{O}_2}{\Psi} &= \frac{N(N-1)}{2}\brakket{\Phi}{\hat{o}_{12}}{\Psi} , \\*
	&\vdotswithin{=} \notag \\
\brakket{\Phi}{\hat{O}_n}{\Psi} &= \binom{N}{n}\brakket{\Phi}{\hat{o}_{1\dotsc n}}{\Psi} .
\end{align}
\end{subequations}
We assume for the many-body operators that they are symmetric, as the particles are indistinguishable, e.g.\ for the two-body operator we need that $\hat{o}_{ij} = \hat{o}_{ji}$.

\begin{exercise}
Why are the elements which have some of the indices equal excluded from a general $n$-body operator. More concrete, why is the term $i = j$ not included in a two-body operator?
\end{exercise}

\begin{exercise}
Derive the relations in~\eqref{eq:nBodyReduced}. Start with the 0-body operator, then consider the 1-body operator and the 2-body operator. Finally argue that the formula for a general $n$-body operator is correct.
\end{exercise}

\noindent
The Slater--Condon rules can be derived by simply working out the expectation values. For 0-body (constant) and 1-body operators this is relatively straightforward, but more more-body operators this becomes quite a dirty business, in particular to keep track of all the phase factors when dealing with fermions, i.e.\ determinants. Therefore, we introduce a graphical representation of a determinant (which also works for permanents of course)~\autocite{StefanucciLeeuwen2013}. A permanent\slash{}determinant of an $n\times n$ matrix, $\mat{A}$, is defined as
\begin{align}
\abs{\mat{A}}_{\pm} \isDefinedAs \sum_{\perm} (\pm)^{\perm}\prod_{i=1}^nA_{i\,\perm(i)} ,
\end{align}
where the ``$+$'' sign refers to the permanent and the ``$-$'' sign to the determinant. The symbol $\perm$ denotes a permutation of the indices, so $\perm\bigl(1,2,\dotsc,n\bigr) = \bigl(\perm(1),\perm(2),\dotsc,\perm(n) = \bigl(1',2',\dotsc,n'\bigr)$. For the permanent, we have $(+)^{\perm} = 1$ for any permutation. The notation $(-)^{\perm}$ denotes the sign of the permutation, so $(-)^{\perm} = 1$ if an even number of pairs needed to interchanged to achieve the permutation and $(-)^{\perm} = -1$ if an odd number of swaps were needed for the permutation. For example, consider the permutation $\perm(1,2,3,4,5) = (2,1,5,3,4)$. This permutation can be built up by swapping first positions 4 and 5, which we will denote as $(4,5)$. Next we swap positions 3 and 4, so $(3,4)$ and finally we perform $(1,2)$. The complete permutation operation can therefore be expressed as $\perm = (1,2)(3,4)(4,5)$. So 3 swaps are needed which tells us that the sign of the permutation is $-1$. It is helpful to make a graphical representation of the permutation. The right column denotes the indices $i$ and the left column their positions.
\begin{multline}
(1,2)(3,4)(4,5)\vcenter{\hbox{%
\begin{tikzpicture}
  \node[vertex,label=left:$1$] (i1) {};
  \foreach \n/\nmin in {2/1,3/2,4/3,5/4}
    \node[vertex, below=0.5 of i\nmin, label=left:$\n$] (i\n) {};
  \foreach \n in {1,...,5} \node[vertex, right=2 of i\n, label=right:$\n$] (j\n) {};
  \foreach \n in {1,...,5} \draw (i\n) -- (j\n);
\end{tikzpicture}
}}
= (1,2)(3,4)\vcenter{\hbox{%
\begin{tikzpicture}
  \node[vertex,label=left:$1$] (i1) {};
  \foreach \n/\nmin in {2/1,3/2,4/3,5/4}
    \node[vertex, below=0.5 of i\nmin, label=left:$\n$] (i\n) {};
  \foreach \n in {1,...,5} \node[vertex, right=2 of i\n, label=right:$\n$] (j\n) {};
  \foreach \n in {1,2,3} \draw (i\n) -- (j\n);
  \draw (i4) to[out=0,in=180] (j5);
  \draw (i5) to[out=0,in=180] (j4);
\end{tikzpicture}
}} \\
= (1,2)\vcenter{\hbox{%
\begin{tikzpicture}
  \node[vertex,label=left:$1$] (i1) {};
  \foreach \n/\nmin in {2/1,3/2,4/3,5/4}
    \node[vertex, below=0.5 of i\nmin, label=left:$\n$] (i\n) {};
  \foreach \n in {1,...,5} \node[vertex, right=2 of i\n, label=right:$\n$] (j\n) {};
  \foreach \n in {1,2} \draw (i\n) -- (j\n);
  \draw (i3) to[out=0,in=180] (j5);
  \draw (i4) to[out=0,in=180] (j3);
  \draw (i5) to[out=0,in=180] (j4);
\end{tikzpicture}
}}
= \vcenter{\hbox{%
\begin{tikzpicture}
  \node[vertex,label=left:$1$] (i1) {};
  \foreach \n/\nmin in {2/1,3/2,4/3,5/4}
    \node[vertex, below=0.5 of i\nmin, label=left:$\n$] (i\n) {};
  \foreach \n in {1,...,5} \node[vertex, right=2 of i\n, label=right:$\n$] (j\n) {};
  \draw (i1) to[out=0,in=180] (j2);
  \draw (i2) to[out=0,in=180] (j1);
  \draw (i3) to[out=0,in=180] (j5);
  \draw (i4) to[out=0,in=180] (j3);
  \draw (i5) to[out=0,in=180] (j4);
\end{tikzpicture}
}} .
\end{multline}
You might notice that the number of crossings of the lines, $n_c$, is exactly equal to the number of permutations we made. This is not always the case, but it is easy to convince oneself that the parity is always the same. Consider the right-hand side of a permutation graph and interchange two vertices $i$ and $j$
\begin{align}
\vcenter{\hbox{%
\begin{tikzpicture}
  \coordinate (i2) {};
  \foreach \n/\nmin in {3/2,4/3,5/4}
    \coordinate[below=0.6 of i\nmin] (i\n) {};
  \coordinate[above=1.2 of i2] (i1) {};
  \coordinate[below=1 of i5] (i6) {};
  \foreach \n in {2,...,5} \coordinate[right=1 of i\n] (j\n) {};
  \coordinate[above=0.8 of j2] (j1) {};
  \coordinate[below=0.8 of j5] (j6) {};
  \foreach \n in {3,4} \node[vertex, right=2 of j\n] (k\n) {};
  \node [vertex, right=2 of j2, label=right:$i$] (k2) {};
  \node [vertex, right=2 of j5, label=right:$j$] (k5) {};
  \node[vertex, above=0.6 of k2.center, anchor=center] (k1) {};
  \node[vertex, below=0.5 of k5.center, anchor=center] (k6) {};
  \foreach \n in {1,...,5} \draw[dashed, dash phase=10pt] (i\n) -- (j\n);
  \draw[dashed] (i6) to[out=-30,in=210] (j6);
  \draw (j1) to[out=-14.036, in=180] (k1);
  \foreach \n in {2,...,5} \draw (j\n) -- (k\n);
  \draw (j6) to[out=30, in=180] (k6);
\end{tikzpicture}
}}
\xrightarrow{(ij)}\quad
\vcenter{\hbox{%
\begin{tikzpicture}
  \coordinate (i2) {};
  \foreach \n/\nmin in {3/2,4/3,5/4}
    \coordinate[below=0.6 of i\nmin] (i\n) {};
  \coordinate[above=1.2 of i2] (i1) {};
  \coordinate[below=1 of i5] (i6) {};
  \foreach \n in {2,...,5} \coordinate[right=1 of i\n] (j\n) {};
  \coordinate[above=0.8 of j2] (j1) {};
  \coordinate[below=0.8 of j5] (j6) {};
  \foreach \n in {3,4} \node[vertex, right=2 of j\n] (k\n) {};
  \node [vertex, right=2 of j2, label=right:$j$] (k2) {};
  \node [vertex, right=2 of j5, label=right:$i$] (k5) {};
  \node[vertex, above=0.6 of k2.center, anchor=center] (k1) {};
  \node[vertex, below=0.5 of k5.center, anchor=center] (k6) {};
  \foreach \n in {1,...,5} \draw[dashed, dash phase=10pt] (i\n) -- (j\n);
  \draw[dashed] (i6) to[out=-30,in=210] (j6);
  \draw (j1) to[out=-14.036, in=180] (k1);
  \foreach \n in {3,4} \draw (j\n) -- (k\n);
  \draw (j2) to[out=0, in=180] (k5);
  \draw (j5) to[out=0, in=180] (k2);
  \draw (j6) to[out=30, in=180] (k6);
\end{tikzpicture}
}} .
\end{align}
You see that the interchanges of nodes $i$ and $j$ leads to one additional crossing plus an even number of crossings, since both lines attached to $i$ and $j$ need to cross any other line. Therefore, we find the important rule
\begin{align}
(\pm)^{n_c} = (\pm)^{\perm} .
\end{align}
As an example, the permanent\slash{}determinant of a $2\times2$ matrix can be worked out with the help of these graphs as
\begin{align}
\abs{A}_{\pm}
&= \vcenter{\hbox{%
\begin{tikzpicture}
  \node[vertex,label=left:$1$] (i1) {};
  \node[vertex, below=1 of i1, label=left:$2$] (i2) {};
  \foreach \n in {1,2} \node[vertex, right=2 of i\n, label=right:$\n$] (j\n) {};
  \draw (i1) to node[above] {$A_{11}$} (j1);
  \draw (i2) to node[below] {$A_{22}$} (j2);
\end{tikzpicture}
}} + \vcenter{\hbox{%
\begin{tikzpicture}
  \node[vertex,label=left:$1$] (i1) {};
  \node[vertex, below=1 of i1, label=left:$2$] (i2) {};
  \foreach \n in {1,2} \node[vertex, right=2 of i\n, label=right:$\n$] (j\n) {};
  \draw (i1) to[out=0, in=180] node[near start, above] {$A_{11}$} (j2);
  \draw (i2) to[out=0, in=180] node[near start, below] {$A_{22}$} (j1);
\end{tikzpicture}
}} \notag \\
&= A_{11}A_{22} \pm A_{12}A_{21} .
\end{align}

\begin{exercise}
Use the graphical representation to work out the permanent\slash{}determinant for a general $3\times3$ matrix.
\end{exercise}

\noindent
Now we have the necessary ingredients to work out the Slater--Condon rules for \emph{determinants}. For permanents the rules are more complicated, since there is no anti-symmetry which forbids orbitals to be occupied multiple times. We will therefore skip the derivation of the bosonic Slater--Condon rules, as we aim to deal with electrons (fermions) in this course.

\subsection{0-body operators}
The 0-body operators simply reduce to the calculation of the overlap $\braket{\Phi_I}{\Phi_J}$, as the constant can be pulled out of the integration. Let us first consider the term of the determinant without any permutations
\begin{multline}
\integ{\vecx_1}\!\!\integ{\vecx_2}\dotsi\integ{\vecx_N}\phi_{i_1}^*(\vecx_1)\phi_{i_2}^*(\vecx_2)\dotsb\phi_{i_N}^*(\vecx_N)\phi_{j_1}(\vecx_1)\phi_{j_2}(\vecx_2)\dotsb\phi_{j_N}(\vecx_N) \\
= \vcenter{\hbox{%
\begin{tikzpicture}
  \node[vertex,label=left:$i_1$] (i1) {};
  \foreach \n/\nmin in {2/1,3/2,4/3}
    \node[vertex, below=0.7 of i\nmin, label=left:$i_{\n}$] (i\n) {};
  \foreach \n in {1,...,4} \node[vertex, right=1.5 of i\n, label=below:$\vecx_{\n}$] (x\n) {};
  \foreach \n in {1,...,4} \node[vertex, right=1.5 of x\n, label=right:$j_{\n}$] (j\n) {};
  \node[vertex, below=of i4, label=left:$i_N$] (i5) {};
  \node[vertex, below=of x4, label=below:$\vecx_N$] (x5) {};
  \node[vertex, below=of j4, label=right:$j_N$] (j5) {};
  \foreach \n in {1,...,5} \draw (i\n) -- (j\n);
  \foreach \r in {i,j} \node[anchor=center] at ($ (\r4.center) ! 0.4 ! (\r5.center) $){$\vdots$};
  \node[above=0.15 of x1] (int) {$\displaystyle\integ{\vecx}\phi^*_{i_k}(\vecx_k)\phi_{j_k}(\vecx_k)$};
  \draw[<-] (x1) -- +(0,0.45);
\end{tikzpicture}
}}
= \vcenter{\hbox{%
\begin{tikzpicture}
  \node[vertex,label=left:$i_1$] (i1) {};
  \foreach \n/\nmin in {2/1,3/2,4/3}
    \node[vertex, below=0.7 of i\nmin, label=left:$i_{\n}$] (i\n) {};
  \foreach \n in {1,...,4}
    \node[vertex, right=1.5 of i\n, label=above:$\braket{\phi_{i_\n}}{\phi_{j_\n}}$] (x\n) {};
  \foreach \n in {1,...,4} \node[vertex, right=1.5 of x\n, label=right:$j_{\n}$] (j\n) {};
  \node[vertex, below=of i4, label=left:$i_N$] (i5) {};
  \node[vertex, below=of x4, label=above:$\braket{\phi_{i_N}}{\phi_{j_N}}$] (x5) {};
  \node[vertex, below=of j4, label=right:$j_N$] (j5) {};
  \foreach \n in {1,...,5} \draw (i\n) -- (j\n);
  \foreach \r in {i,j} \node[anchor=center] at ($ (\r4.center) ! 0.4 ! (\r5.center) $){$\vdots$};
\end{tikzpicture}
}} \\
= \vcenter{\hbox{%
\begin{tikzpicture}
  \node[vertex,label=left:$i_1$] (i1) {};
  \foreach \n/\nmin in {2/1,3/2,4/3}
    \node[vertex, below=0.7 of i\nmin, label=left:$i_{\n}$] (i\n) {};
  \foreach \n in {1,...,4} \node[vertex, right=1.5 of i\n, label=above:$\delta_{i_{\n}j_{\n}}$] (x\n) {};
  \foreach \n in {1,...,4} \node[vertex, right=1.5 of x\n, label=right:$j_{\n}$] (j\n) {};
  \node[vertex, below=of i4, label=left:$i_N$] (i5) {};
  \node[vertex, below=of x4, label=above:$\delta_{i_Nj_N}$] (x5) {};
  \node[vertex, below=of j4, label=right:$j_N$] (j5) {};
  \foreach \n in {1,...,5} \draw (i\n) -- (j\n);
  \foreach \r in {i,j} \node[anchor=center] at ($ (\r4.center) ! 0.4 ! (\r5.center) $){$\vdots$};
\end{tikzpicture}
}} = \delta_{i_1j_1}\delta_{i_2j_2}\dotsb\delta_{i_Nj_N} \sAdenifeDsi \delta_{IJ} ,
\end{multline}
where we used the orthonormality of the orbitals in the last step, $\braket{\phi_i}{\phi_j} = \delta_{ij}$.
Now consider two different permutations of the indices in both determinants
\begin{align}
\vcenter{\hbox{%
\begin{tikzpicture}
  \node[vertex,label=left:$i_1$] (i1) {};
  \foreach \n/\nmin in {2/1,3/2,4/3}
    \node[vertex, below=0.7 of i\nmin, label=left:$i_{\n}$] (i\n) {};
  \node[vertex, right=1.5 of i1, label=above:$\delta_{i_2j_4}$] (x1) {};
  \node[vertex, right=1.5 of i2, label=above:$\delta_{i_1j_1}$] (x2) {};
  \node[vertex, right=1.5 of i3, label=above:$\delta_{i_3j_2}$] (x3) {};
  \node[vertex, right=1.5 of i4, label=above:$\delta_{i_4j_3}$] (x4) {};
  \foreach \n in {1,...,4} \node[vertex, right=1.5 of x\n, label=right:$j_{\n}$] (j\n) {};
  \node[vertex, below=of i4, label=left:$i_N$] (i5) {};
  \node[vertex, below=of x4, label=above:$\delta_{i_Nj_N}$] (x5) {};
  \node[vertex, below=of j4, label=right:$j_N$] (j5) {};
  \draw (i1) to[out=0, in=180] (x2) to[out=0, in=180] (j1);
  \draw (i2) to[out=0, in=180] (x1) to[out=0, in=180] (j4);
  \draw (i3) -- (x3) to[out=0, in=180] (j2);
  \draw (i4) -- (x4) to[out=0, in=180] (j3);
  \draw (i5) -- (j5);
  \foreach \r in {i,j} \node[anchor=center] at ($ (\r4.center) ! 0.4 ! (\r5.center) $){$\vdots$};
\end{tikzpicture}
}} = 0 ,
\end{align}
as we work with indices in increasing order. For example, $i_2$ can never be equal to $j_4$, since combined with ordering this implies that $i_3 > i_2 = j_4 > j_2$. We also need $i_3 = j_2$ for the integral to be non-zero, which cannot be.

As the indices are ordered, we need the $I = J$ and only when we have the same permutation in both determinants, $\Phi_I$ and $\Phi_J$, we get a contribution to the integral. For example
\begin{align}
\vcenter{\hbox{%
\begin{tikzpicture}
  \node[vertex,label=left:$i_1$] (i1) {};
  \foreach \n/\nmin in {2/1,3/2,4/3}
    \node[vertex, below=0.7 of i\nmin, label=left:$i_{\n}$] (i\n) {};
  \node[vertex, below=1 of i4, label=left:$i_N$] (iN) {}; 
  \foreach \n/\k in {1/2,2/4,3/3,4/1,N/N}
    \node[vertex, right=1.5 of i\k, label=above:$\delta_{i_{\n}j_{\n}}$] (x\k) {};
  \foreach \n in {1,2,3,4,N} \node[vertex, right=1.5 of x\n, label=right:$j_{\n}$] (j\n) {};
  \foreach \n/\k in {1/2,2/4,3/3,4/1,N/N} \draw (i\n) to[out=0, in=180] (x\k) to[out=0, in=180] (j\n);
  \foreach \r in {i,j} \node[anchor=center] at ($ (\r4.center) ! 0.4 ! (\r N.center) $){$\vdots$};
\end{tikzpicture}
}} = \vcenter{\hbox{%
\begin{tikzpicture}
  \node[vertex,label=left:$i_1$] (i1) {};
  \foreach \n/\nmin in {2/1,3/2,4/3}
    \node[vertex, below=0.7 of i\nmin, label=left:$i_{\n}$] (i\n) {};
  \node[vertex, below=1 of i4, label=left:$i_N$] (iN) {}; 
  \foreach \n/\k in {1/3,2/1,3/4,4/N,N/2}
    \node[vertex, right=1.5 of i\k, label=above:$\delta_{i_{\n}j_{\n}}$] (x\k) {};
  \foreach \n in {1,2,3,4,N} \node[vertex, right=1.5 of x\n, label=right:$j_{\n}$] (j\n) {};
  \foreach \n/\k in {1/3,2/1,3/4,4/N,N/2} \draw (i\n) to[out=0, in=180] (x\k) to[out=0, in=180] (j\n);
  \foreach \r in {i,j} \node[anchor=center] at ($ (\r4.center) ! 0.4 ! (\r N.center) $){$\vdots$};
\end{tikzpicture}
}} \!\!\! = \delta_{IJ} . 
\end{align}
The only remaining question is how many of these terms we have. This is simply the number of terms generated by a single Slater determinant, $N!$, as the term in the other determinant needs to correspond to exactly the same permutation. Hence, we find that the Slater determinants are orthonormal, if we use an orthonormal orbital basis
\begin{align}
\braket{\Phi_I}{\Phi_J}
= \delta_{IJ} .
\end{align}

\subsection{1-body operators}
Now we proceed with the one-body operators. The difference compared to the 0-body operators is that the integration over the first coordinate, $\vecx_1$, now contains an operator, $\brakket{\phi_{i_k}}{\hat{o}_1}{\phi_{i_l}}$, so the orbitals it connect do not need to be equal anymore for a finite expectation value. All the other orbitals still need to be equal, so there are two options to consider: 1) one orbital is different in the determinants, or 2) all orbitals are the same.

Let us first consider the case that the orbitals in both determinants are the same, i.e.\ $I = J$. If we only consider the terms where the operator works on $i_1$, we find
\begin{align}
\vcenter{\hbox{%
\begin{tikzpicture}
  \node[vertex,label=left:$i_1$] (i1) {};
  \foreach \n/\nmin in {2/1,3/2,4/3}
    \node[vertex, below=0.7 of i\nmin, label=left:$i_{\n}$] (i\n) {};
  \node[vertex, below=1 of i4, label=left:$i_N$] (iN) {};
  \node[vertex, right=1.5 of i1, label=above:$\brakket{\phi_{i_1}}{\hat{o}_1}{\phi_{i_1}}$] (x1) {};
  \foreach \n/\k in {2,3,4,N}
    \node[vertex, right=1.5 of i\k, label=above:$\delta_{i_{\n}i_{\n}}$] (x\k) {};
  \foreach \n in {1,2,3,4,N} \node[vertex, right=1.5 of x\n, label=right:$i_{\n}$] (j\n) {};
  \foreach \n/\k in {1,2,3,4,N} \draw (i\n) to[out=0, in=180] (x\k) to[out=0, in=180] (j\n);
  \foreach \r in {i,j} \node[anchor=center] at ($ (\r4.center) ! 0.4 ! (\r N.center) $){$\vdots$};
\end{tikzpicture}
}}
&= \vcenter{\hbox{%
\begin{tikzpicture}
  \node[vertex,label=left:$i_1$] (i1) {};
  \foreach \n/\nmin in {2/1,3/2,4/3}
    \node[vertex, below=0.7 of i\nmin, label=left:$i_{\n}$] (i\n) {};
  \node[vertex, below=1 of i4, label=left:$i_N$] (iN) {};
  \node[vertex, right=1.5 of i1, label=above:$\brakket{\phi_{i_1}}{\hat{o}_1}{\phi_{i_1}}$] (x1) {};
  \foreach \n/\k in {2/3,3/4,4/2,N}
    \node[vertex, right=1.5 of i\k, label=above:$\delta_{i_{\n}i_{\n}}$] (x\k) {};
  \foreach \n in {1,2,3,4,N} \node[vertex, right=1.5 of x\n, label=right:$i_{\n}$] (j\n) {};
  \foreach \n/\k in {1,2/3,3/4,4/2,N} \draw (i\n) to[out=0, in=180] (x\k) to[out=0, in=180] (j\n);
  \foreach \r in {i,j} \node[anchor=center] at ($ (\r4.center) ! 0.4 ! (\r N.center) $){$\vdots$};
\end{tikzpicture}
}} \notag \\
&= \brakket{\phi_{i_1}}{\hat{o}_1}{\phi_{i_1}} .
\end{align}
As we keep the orbital $\phi_{i_1}$ fixed, we can only make permutations of the remaining $i_2,\dotsc,i_N$ elements, so we have $(N-1)!$ of those elements. These are the only graphs which lead to a contribution of the form $\brakket{\phi_{i_1}}{\hat{o}_1}{\phi_{i_1}}$. If we would choose different permutations for both determinants, the contribution of the graph is zero.

Here we have singled out the orbital $\phi_{i_1}$, but the same argument works for any orbital $\phi_{i_k}$ as
\begin{align}
\vcenter{\hbox{%
\begin{tikzpicture}
  \node[vertex,label=left:$i_1$] (i1) {};
  \node[vertex, below=0.7 of i1, label=left:$i_2$] (i2) {};
  \node[vertex, below=0.7 of i2, label=left:$i_3$] (i3) {};
  \node[vertex, below=1 of i3, label=left:$i_k$] (ik) {};
  \node[vertex, below=1 of ik, label=left:$i_N$] (iN) {};
  \node[vertex, right=1.5 of i1, label=above:$\brakket{\phi_{i_k}}{\hat{o}_1}{\phi_{i_k}}$] (x1) {};
  \node[vertex, right=1.5 of i2, label=above:$\delta_{i_2i_2}$] (x2) {};
  \node[vertex, right=1.5 of i3, label=above:$\delta_{i_1i_1}$] (x3) {};
  \node[vertex, right=1.5 of ik] (xk) {};
  \node[vertex, right=1.5 of iN] (xN) {};
  \foreach \n in {1,2,3,k,N} \node[vertex, right=1.5 of x\n, label=right:$i_{\n}$] (j\n) {};
  \draw (i1) to[out=0, in=180] (x3) to[out=0, in=180] (j1);
  \draw (i2) -- (j2);
  \draw (i3) to[out=0, in=135] +(0.5,-0.25); \draw (j3)  to[in=45, out=180] +(-0.5,-0.25);
  \draw (ik) to[out=0, in=180] (x1) to[out=0, in=180] (jk);
  \draw (iN) to[out=0, in=225] +(0.5,0.25); \draw (jN) to[in=-45, out=180] +(-0.5,0.25);
  \draw (xk) to[out=0, in=135] +(0.5,-0.25); \draw (xk) to[in=45, out=180] +(-0.5,-0.25);
  \draw (xN) to[out=0, in=225] +(0.5,0.25); \draw (xN) to[in=-45, out=180] +(-0.5,0.25);
  \foreach \r in {i,j} \node[anchor=center] at ($ (\r3.center) ! 0.4 ! (\r k.center) $){$\vdots$};
  \foreach \r in {i,j} \node[anchor=center] at ($ (\r k.center) ! 0.4 ! (\r N.center) $){$\vdots$};
\end{tikzpicture}
}}
&= \vcenter{\hbox{%
\begin{tikzpicture}
  \node[vertex,label=left:$i_k$] (ik) {};
  \node[vertex, below=0.7 of ik, label=left:$i_1$] (i1) {};
  \node[vertex, below=0.7 of i1, label=left:$i_2$] (i2) {};
  \node[vertex, below=1 of i3, label=left:$i_{k-1}$] (ikm) {};
  \node[vertex, below=1 of ikm, label=left:$i_N$] (iN) {};
  \node[vertex, right=1.5 of ik, label=above:$\brakket{\phi_{i_k}}{\hat{o}_1}{\phi_{i_k}}$] (xk) {};
  \node[vertex, right=1.5 of i1, label=above:$\delta_{i_2i_2}$] (x1) {};
  \node[vertex, right=1.5 of i2, label=above:$\delta_{i_1i_1}$] (x2) {};
  \node[vertex, right=1.5 of ikm] (xkm) {};
  \node[vertex, right=1.5 of iN] (xN) {};
  \foreach \n/\nt in {k,1,2,km/k-1,N} \node[vertex, right=1.5 of x\n, label=right:$i_{\nt}$] (j\n) {};
  \draw (ik) -- (jk);
  \draw (i1) to[out=0, in=180] (x2) to[out=0, in=180] (j1);
  \draw (i2) to[out=0, in=180] (x1) to[out=0, in=180] (j2);
  \draw (ikm) to[out=0, in=135] +(0.5,-0.25); \draw (jkm) to[in=45, out=180] +(-0.5,-0.25);
  \draw (iN) to[out=0, in=225] +(0.5,0.25); \draw (jN) to[in=-45, out=180] +(-0.5,0.25);
  \draw (xkm) to[out=0, in=135] +(0.5,-0.25); \draw (xkm) to[in=45, out=180] +(-0.5,-0.25);
  \draw (xN) to[out=0, in=225] +(0.5,0.25); \draw (xN) to[in=-45, out=180] +(-0.5,0.25);
  \foreach \r in {i,j} \node[anchor=center] at ($ (\r2.center) ! 0.4 ! (\r km.center) $){$\vdots$};
  \foreach \r in {i,j} \node[anchor=center] at ($ (\r km.center) ! 0.4 ! (\r N.center) $){$\vdots$};
\end{tikzpicture}
}} \notag \\
&= \brakket{\phi_{i_k}}{\hat{o}_1}{\phi_{i_k}} .
\end{align}
Again, we can permute the other indices $i_m \neq i_k$ in $(N-1)!$ ways without affecting the value. The expectation value for $J=I$ therefore becomes
\begin{align}
\brakket{\Phi_I}{\hat{O}_1}{\Phi_I}
= \frac{N\,(N-1)!}{N!}\sum_{k=1}^N\brakket{\phi_{i_k}}{\hat{o}_1}{\phi_{i_k}}
= \sum_{k=1}^N\brakket{\phi_{i_k}}{\hat{o}_1}{\phi_{i_k}} ,
\end{align}
where we included the pre-factor for the 1-body operators~\eqref{eq:one-body}.

Now consider the case that the orbitals $i_k$ and $j_l$ are different. The only option to get a non-zero contribution to the integral is to connect them to the operator $\hat{o}_1$, so we have
\begin{align}
\vcenter{\hbox{%
\begin{tikzpicture}
  \node[vertex,label=left:$i_1$] (i1) {};
  \node[vertex, below=0.7 of i1, label=left:$i_2$] (i2) {};
  \node[vertex, below=1 of i2, label=left:$i_k$] (ik) {};
  \node[vertex, below=0.7 of ik, label=left:$i_{k+1}$] (ikp) {};
  \node[vertex, below=1 of ikp, label=left:$i_l$] (il) {};
  \node[vertex, below=1 of il, label=left:$i_N$] (iN) {};
  \node[vertex, right=1.5 of i1, label=above:$\brakket{\phi_{i_k}}{\hat{o}_1}{\phi_{j_l}}$] (x1) {};
  \node[vertex, right=1.5 of i2, label=above:$\delta_{i_1j_1}$] (x2) {};
  \node[vertex, right=1.5 of ik, label=above:$\delta_{i_{k+1}j_k}$] (xk) {};
  \node[vertex, right=1.5 of il, label=above:$\delta_{i_Nj_N}$] (xl) {};
  \foreach \n in {kp,N} \node[vertex, right=1.5 of i\n] (x\n) {};
  \foreach \n/\nt in {1,2,k,kp/k+1,l,N} \node[vertex, right=1.5 of x\n, label=right:$j_{\nt}$] (j\n) {};
  \draw (ik) to[out=0, in=180] (x1) to[out=0, in=180] (jl);
  \draw (i1) to[out=0, in=180] (x2) to[out=0, in=180] (j1);
  \draw (ikp) to[out=0, in=180] (xk) to[out=0, in=180] (jk);
  \draw (jkp) -- (xkp) to[out=180, in=45] +(-0.5,-0.25);
  \draw (il) to[out=0, in=225] +(0.5,0.25);
  \draw (i2) to[out=0, in=135] +(0.5,-0.25); \draw (j2) to[in=45, out=180] +(-0.5,-0.25);
  \draw (iN) to[out=0, in=180] (xl) to[out=0, in=180] (jN);
  \draw (xN) to[out=0, in=225] +(0.5,0.25); \draw (xN)  to[in=-45, out=180] +(-0.5,0.25);
  \foreach \r in {i,j} \node[anchor=center] at ($ (\r2.center) ! 0.4 ! (\r k.center) $){$\vdots$};
  \foreach \r in {i,j} \node[anchor=center] at ($ (\r kp.center) ! 0.4 ! (\r l.center) $){$\vdots$};
  \foreach \r in {i,j} \node[anchor=center] at ($ (\r l.center) ! 0.4 ! (\r N.center) $){$\vdots$};
\end{tikzpicture}
}}
&= (-1)^{k-l}\vcenter{\hbox{%
\begin{tikzpicture}
  \node[vertex,label=left:$i_k$] (ik) {};
  \node[vertex, below=0.7 of ik, label=left:$i_1$] (i1) {};
  \node[vertex, below=1 of i1, label=left:$i_{k-1}$] (ikm) {};
  \node[vertex, below=0.7 of ikm, label=left:$i_{k+1}$] (ikp) {};
  \node[vertex, below=1 of ikp, label=left:$i_l$] (il) {};
  \node[vertex, below=1 of il, label=left:$i_N$] (iN) {};
  \node[vertex, right=1.5 of ik, label=above:$\brakket{\phi_{i_k}}{\hat{o}_1}{\phi_{j_l}}$] (xk) {};
  \node[vertex, right=1.5 of i1, label=above:$\delta_{i_1j_1}$] (x1) {};
  \node[vertex, right=1.5 of ikm, label=above:$\delta_{i_{k+1}j_k}$] (xkm) {};
  \node[vertex, right=1.5 of il, label=above:$\delta_{i_Nj_N}$] (xl) {};
  \foreach \n in {kp,N} \node[vertex, right=1.5 of i\n] (x\n) {};
  \foreach \n/\nt in {k,1,km/k-1,kp/k,l/l-1,N} \node[vertex, right=1.5 of x\n, label=right:$j_{\nt}$] (j\n) {};
  \draw (ik) -- (jk);
  \draw (i1) -- (j1);
  \draw (ikp) to[out=0, in=180] (xkm) to[out=0, in=180] (jkp);
  \draw (ikm) to[out=0, in=135] +(0.5,-0.25); \draw (jkm) to[in=45, out=180] +(-0.5,-0.25);
  \draw (il) to[out=0, in=225] +(0.5,0.25); \draw (jl) to[out=180, in=-45] +(-0.5,0.25);
  \draw (xkp) to[out=180, in=45] +(-0.5,-0.25); \draw (xkp) to[out=0, in=135] +(0.5,-0.25);
  \draw (iN) to[out=0, in=180] (xl) to[out=0, in=180] (jN);
  \draw (xN) to[out=0, in=225] +(0.5,0.25); \draw (xN)  to[in=-45, out=180] +(-0.5,0.25);
  \foreach \r in {i,j} \node[anchor=center] at ($ (\r1.center) ! 0.4 ! (\r km.center) $){$\vdots$};
  \foreach \r in {i,j} \node[anchor=center] at ($ (\r kp.center) ! 0.4 ! (\r l.center) $){$\vdots$};
  \foreach \r in {i,j} \node[anchor=center] at ($ (\r l.center) ! 0.4 ! (\r N.center) $){$\vdots$};
\end{tikzpicture}
}} \notag \\
&= (-1)^{k-l}\brakket{\phi_{i_k}}{\hat{o}_1}{\phi_{j_l}} ,
\end{align}
where the $(-1)^{k-l}$ phase factor is the result of the permutation $(12)(23)\dotsb(k-1\,k) \to (-1)^{k-1}$ on the left side to get $i_k$ on the first position without affecting the ordering of the other indices and the permutation $(12)(23)\dotsb(l-1\,l) \to (-1)^{l-1}$ on the left side to get $j_l$ on the first position.

As in the previous diagrams, we can permute the other indices in $(N-1)!$ without affecting (zeroing) the integral and we find
\begin{align}
\brakket{\Phi_I}{\hat{O}_1}{\Phi_J} = (-1)^{k-l}\brakket{\phi_{i_k}}{\hat{o}_1}{\phi_{j_l}} ,
\end{align}
if only the orbitals $i_k$ and $j_l$ differ. Collecting the results for the 1-body operators, we find as Slater--Condon rule
\begin{align}
\brakket{\Phi_I}{\hat{O}_1}{\Phi_J} = \begin{dcases*}
\sum_{k=1}^N\brakket{\phi_{i_k}}{\hat{o}_1}{\phi_{i_k}}	&if $I = J$, \\
(-1)^{k-l}\brakket{\phi_{i_k}}{\hat{o}_1}{\phi_{j_l}}		&if only $i_k \neq j_l$, \\
0											&otherwise .
\end{dcases*}
\end{align}

\subsection{2-body operators}
Now we consider the 2-body operators, so we have a non-trivial integral with two pairs of orbitals
\begin{align}
\brakket{i_ki_l}{\hat{o}_{12}}{j_mj_n}
&\isDefinedAs \brakket{\phi_{i_k}\phi_{i_l}}{\hat{o}_{12}}{\phi_{j_m}\phi_{j_n}} \\
&\isDefinedAs \integ{\vecx_1}\!\!\integ{\vecx_2}
\phi_{i_k}^*(\vecx_1)\phi_{i_l}^*(\vecx_1)\hat{o}(\vecx_1,\vecx_2)\phi_{j_m}(\vecx_1)\phi_{j_n}(\vecx_2) , \notag
\end{align}
where we have omitted the explicit notation of $\phi$, as it is clear that we are always referring to the orbitals $\phi$ and it makes the notation less bulky.
For the two-body operators we will find that the integrals always come in pairs and we will use the following notation for such a pair
\begin{align}
\brakketx{kl}{\hat{o}_{12}}{mn}
&\isDefinedAs \brakket{kl}{\hat{o}_{12}}{mn} - \brakket{kl}{\hat{o}_{12}}{nm} .
\end{align}
As we have now a two body operator even three orbitals in the determinants need to be different to make the integral identically zero. So now we need to consider three cases: 1) no orbitals different, 2) one orbital different, 3) two orbitals different.

Let us start again when both determinants are comprised of the same orbital set, so $I = J$. For simplicity we will first consider only terms where the operator works on $i_1$ and $i_2$
\begin{align}\label{eq:twoBodyNoOrbDifH}
\vcenter{\hbox{%
\begin{tikzpicture}
  \node[vertex,label=left:$i_1$] (i1) {};
  \foreach \n/\nmin in {2/1,3/2,4/3}
    \node[vertex, below=0.7 of i\nmin, label=left:$i_{\n}$] (i\n) {};
  \node[vertex, below=1 of i4, label=left:$i_N$] (iN) {};
  \node[vertex, right=1.5 of i1, label=above:$\brakket{i_1i_2}{\hat{o}_{12}}{i_1i_2}$] (x1) {};
  \node[vertex, right=1.5 of i2] (x2) {};
  \foreach \n/\k in {3,4,N}
    \node[vertex, right=1.5 of i\k, label=above:$\delta_{i_{\n}i_{\n}}$] (x\k) {};
  \foreach \n in {1,2,3,4,N} \node[vertex, right=1.5 of x\n, label=right:$i_{\n}$] (j\n) {};
  \foreach \n/\k in {1,2,3,4,N} \draw (i\n) to[out=0, in=180] (x\k) to[out=0, in=180] (j\n);
  \draw[dashed] (x1) -- (x2);
  \foreach \r in {i,j} \node[anchor=center] at ($ (\r4.center) ! 0.4 ! (\r N.center) $){$\vdots$};
\end{tikzpicture}
}}
&= \vcenter{\hbox{%
\begin{tikzpicture}
  \node[vertex,label=left:$i_1$] (i1) {};
  \foreach \n/\nmin in {2/1,3/2,4/3}
    \node[vertex, below=0.7 of i\nmin, label=left:$i_{\n}$] (i\n) {};
  \node[vertex, below=1 of i4, label=left:$i_N$] (iN) {};
  \node[vertex, right=1.5 of i1, label=above:$\brakket{i_2i_1}{\hat{o}_{12}}{i_2i_1}$] (x1) {};
  \node[vertex, right=1.5 of i2] (x2) {};
  \foreach \n/\k in {3,4,N}
    \node[vertex, right=1.5 of i\k, label=above:$\delta_{i_{\n}i_{\n}}$] (x\k) {};
  \foreach \n in {1,2,3,4,N} \node[vertex, right=1.5 of x\n, label=right:$i_{\n}$] (j\n) {};
  \foreach \n/\k in {1/2,2/1,3,4,N} \draw (i\n) to[out=0, in=180] (x\k) to[out=0, in=180] (j\n);
  \draw[dashed] (x1) -- (x2);
  \foreach \r in {i,j} \node[anchor=center] at ($ (\r4.center) ! 0.4 ! (\r N.center) $){$\vdots$};
\end{tikzpicture}
}} \notag \\
&= \brakket{i_1i_2}{\hat{o}_{12}}{i_1i_2} ,
\end{align}
where we used the symmetry of the 2-body operator. As the remaining indices can be permuted in $(N-2)!$ ways without affecting the integral, we have $2\,(N-2)!$ of the terms. Due to the operator in the integral, it is not necessary to directly connect $i_1$ to $i_1$ and $i_2$ to $i_2$ to have a possible non-zero contribution to the expectation value. Also connecting $i_1$ to $i_2$ can yield a contribution. These are graphs of the form
\begin{align}\label{eq:twoBodyNoOrbDifX}
\vcenter{\hbox{%
\begin{tikzpicture}
  \node[vertex,label=left:$i_1$] (i1) {};
  \foreach \n/\nmin in {2/1,3/2,4/3}
    \node[vertex, below=0.7 of i\nmin, label=left:$i_{\n}$] (i\n) {};
  \node[vertex, below=1 of i4, label=left:$i_N$] (iN) {};
  \node[vertex, right=1.5 of i1, label=above:$\brakket{i_2i_1}{\hat{o}_{12}}{i_1i_2}$] (x1) {};
  \node[vertex, right=1.5 of i2] (x2) {};
  \foreach \n/\k in {3,4,N}
    \node[vertex, right=1.5 of i\k, label=above:$\delta_{i_{\n}i_{\n}}$] (x\k) {};
  \foreach \n in {1,2,3,4,N} \node[vertex, right=1.5 of x\n, label=right:$i_{\n}$] (j\n) {};
  \foreach \n/\k in {1/2,2/1} \draw (i\n) to[out=0, in=180] (x\k) to[out=0, in=180] (j\k);
  \foreach \n/\k in {3,4,N} \draw (i\n) to[out=0, in=180] (x\k) to[out=0, in=180] (j\n);
  \draw[dashed] (x1) -- (x2);
  \foreach \r in {i,j} \node[anchor=center] at ($ (\r4.center) ! 0.4 ! (\r N.center) $){$\vdots$};
\end{tikzpicture}
}}
&= \vcenter{\hbox{%
\begin{tikzpicture}
  \node[vertex,label=left:$i_1$] (i1) {};
  \foreach \n/\nmin in {2/1,3/2,4/3}
    \node[vertex, below=0.7 of i\nmin, label=left:$i_{\n}$] (i\n) {};
  \node[vertex, below=1 of i4, label=left:$i_N$] (iN) {};
  \node[vertex, right=1.5 of i1, label=above:$\brakket{i_2i_1}{\hat{o}_{12}}{i_1i_2}$] (x1) {};
  \node[vertex, right=1.5 of i2] (x2) {};
  \foreach \n/\k in {3,4,N}
    \node[vertex, right=1.5 of i\k, label=above:$\delta_{i_{\n}i_{\n}}$] (x\k) {};
  \foreach \n in {1,2,3,4,N} \node[vertex, right=1.5 of x\n, label=right:$i_{\n}$] (j\n) {};
  \foreach \n/\k in {1/2,2/1} \draw (i\k) to[out=0, in=180] (x\k) to[out=0, in=180] (j\n);
  \foreach \n/\k in {3,4,N} \draw (i\n) to[out=0, in=180] (x\k) to[out=0, in=180] (j\n);
  \draw[dashed] (x1) -- (x2);
  \foreach \r in {i,j} \node[anchor=center] at ($ (\r4.center) ! 0.4 ! (\r N.center) $){$\vdots$};
\end{tikzpicture}
}} \notag \\
&= \brakket{i_1i_2}{\hat{o}_{12}}{i_2i_1} .
\end{align}
Again, by interchanging the remaining nodes we see the there are $2\,(N-2)!$ of these terms. Taking into account that in the last diagrams we got 1 additional crossing, connecting the nodes $i_1$ and $i_2$ to the operator yields the following contribution to the expectation value
\begin{align}
2(N-2)!\bigl(\brakket{i_1i_2}{\hat{o}_{12}}{i_1i_2} - \brakket{i_1i_2}{\hat{o}_{12}}{i_2i_1}\bigr) .
\end{align}
To get any other pair on the first positions, we always make an even number of permutations and the contribution will have exactly the same form as for the $i_1,i_2$ pair. So to get the full expectation value, we simply need to sum over all possible combinations
\begin{align}
\brakket{\Phi_I}{\hat{O}_2}{\Phi_I}
&= \sum_{r < s}\bigl(\brakket{i_ri_s}{\hat{o}_{12}}{i_ri_s} - \brakket{i_ri_s}{\hat{o}_{12}}{i_si_r}\bigr)
\notag \\
&= \sum_{r < s}\brakketx{i_ri_s}{\hat{o}_{12}}{i_ri_s} ,
\end{align}
where we took the pre-factor for the 2-body operator~\eqref{eq:two-body} into account.

Now consider the case that the determinants only differ by one orbital, i.e.\ orbitals $i_k$ and $j_m$. This implies that these orbitals always need to be connected to the operator to have a non-zero contribution to the expectation value. A typical diagram now looks
\begin{multline}\label{eq:twoBodyOneOrbDifH}
\vcenter{\hbox{%
\begin{tikzpicture}
  \node[vertex,label=left:$i_1$] (i1) {};
  \node[vertex, below=0.7 of i1, label=left:$i_2$] (i2) {};
  \node[vertex, below=1 of i2, label=left:$i_k$] (ik) {};
  \node[vertex, below=1 of ik, label=left:$i_{r-1}$] (irm) {};
  \node[vertex, below=0.7 of irm, label=left:$i_r$] (ir) {};
  \node[vertex, below=1 of ir, label=left:$i_m$] (im) {};
  \node[vertex, right=1.5 of i1, label=above:$\brakket{i_ki_r}{\hat{o}_{12}}{j_mi_r}$] (x1) {};
  \foreach \n in {2,k,r,rm,m} \node[vertex, right=1.5 of i\n] (x\n) {};
  \draw[dashed] (x1) -- (x2);
  \foreach \n/\nm in {1,2,r,rm/r-1,k,m} \node[vertex, right=1.5 of x\n, label=right:{$j_{\nm}$}] (j\n) {};
  \draw (ik) to[out=0, in=180] (x1) to[out=0, in=180] (jm);
  \draw (ir) to[out=0, in=180] (x2) to[out=0, in=180] (jrm);
  \draw (i1) to[out=0, in=180] (xk) to[out=0, in=180] (j1);
  \draw (i2) to[out=0, in=180] (xrm) to[out=0, in=180] (j2);
  \draw (jk) to[out=180, in=45] +(-0.5,-0.25);
  \draw (irm) to[out=0, in=225] +(0.5,0.25);
  \draw (jr) to[out=180, in=0] (xm) to[out=180, in=-45] +(-0.5,0.25);
  \draw (im) to[out=0, in=180] (xr) to[out=0, in=225] +(0.5,0.25);
  \foreach \r in {i,j} \node[anchor=center] at ($ (\r2.center) ! 0.4 ! (\r k.center) $){$\vdots$};
  \foreach \r in {i,j} \node[anchor=center] at ($ (\r k.center) ! 0.4 ! (\r rm.center) $){$\vdots$};
  \foreach \r in {i,j} \node[anchor=center] at ($ (\r r.center) ! 0.4 ! (\r m.center) $){$\vdots$};
  \foreach \r in {i,j} \node[anchor=center, below=0 of \r m] {$\vdots$};
\end{tikzpicture}
}}
\!\!\!\!= (-1)^{\mathrlap{k-m}}\;\;\vcenter{\hbox{%
\begin{tikzpicture}
  \node[vertex,label=left:$i_k$] (ik) {};
  \node[vertex, below=0.7 of ik, label=left:$i_1$] (i1) {};
  \node[vertex, below=0.7 of i1, label=left:$i_2$] (i2) {};
  \node[vertex, below=1 of i2, label=left:$i_{r-1}$] (irm) {};
  \node[vertex, below=0.7 of irm, label=left:$i_r$] (ir) {};
  \node[vertex, right=1.5 of ik, label=above:$\brakket{i_ki_r}{\hat{o}_{12}}{j_mi_r}$] (xk) {};
  \foreach \n in {1,2,r,rm} \node[vertex, right=1.5 of i\n] (x\n) {};
  \draw[dashed] (xk) -- (x1);
  \foreach \n/\nt in {1,2,rm/r-2,r/r-1} \node[vertex, right=1.5 of x\n, label=right:{$j_{\nt}$}] (j\n) {};
  \node[vertex, right=1.5 of xk, label=right:{$j_m$}] (jm) {};
  \draw (ik) -- (jm);
  \draw (ir) to[out=0, in=180] (x1) to[out=0, in=180] (jr);
  \draw (i1) to[out=0, in=180] (x2) to[out=0, in=180] (j1);
  \draw (i2) to[out=0, in=180] (xrm) to[out=0, in=180] (j2);
  \draw (irm) to[out=0, in=180] (xr) to[out=0, in=180] (jrm);
  \foreach \r in {i,j} \node[anchor=center] at ($ (\r2.center) ! 0.4 ! (\r rm.center) $){$\vdots$};
  \foreach \r in {i,j} \node[anchor=center, below=0 of \r r] {$\vdots$};
\end{tikzpicture}
}} \\
= (-1)^{k-m}\vcenter{\hbox{%
\begin{tikzpicture}
  \node[vertex,label=left:$i_k$] (ik) {};
  \node[vertex, below=0.7 of ik, label=left:$i_r$] (ir) {};
  \node[vertex, below=0.7 of ir, label=left:$i_1$] (i1) {};
  \node[vertex, below=0.7 of i1, label=left:$i_2$] (i2) {};
  \node[vertex, right=1.5 of ik, label=above:$\brakket{i_ki_r}{\hat{o}_{12}}{j_mi_r}$] (xk) {};
  \foreach \n in {r,1,2} \node[vertex, right=1.5 of i\n] (x\n) {};
  \draw[dashed] (xk) -- (xr);
  \foreach \n/\nt in {r/r-1,1,2} \node[vertex, right=1.5 of x\n, label=right:{$j_{\nt}$}] (j\n) {};
  \node[vertex, right=1.5 of xk, label=right:{$j_m$}] (jm) {};
  \draw (ik) -- (jm);
  \draw (ir) -- (jr);
  \draw (i1) -- (j1);
  \draw (i2) -- (j2);
  \foreach \r in {i,j} \node[anchor=center, below=0 of \r2] {$\vdots$};
\end{tikzpicture}
}}
\!\!\!\!= (-1)^{k-m}\brakket{i_ki_r}{\hat{o}_{12}}{j_mi_r} .
\end{multline}
The nodes not attached to the operator can be permuted in $(N-2)!$ ways without affecting the integral. Additionally we permute the first two nodes on both sides of the last diagram of~\eqref{eq:twoBodyOneOrbDifH} just as in~\eqref{eq:twoBodyNoOrbDifH}, so we have $2(N-2)!$ of these terms. Also we have the exchange variant as in~\eqref{eq:twoBodyNoOrbDifX}, so the full contribution becomes involving the orbital $\phi_{i_r}$ becomes
\begin{align}
(-1)^{k-m}2(N-2)!\bigl(\brakket{i_ki_r}{\hat{o}_{12}}{j_mi_r} - \brakket{i_ki_r}{\hat{o}_{12}}{i_rj_m}\bigr)
\end{align}
Now we need only to sum over all $i_r \neq i_k$ and to take the 2-body pre-factor into account~\eqref{eq:two-body} to get the complete expectation value
\begin{align}
\brakket{\Phi_I}{\hat{O}_2}{\Phi_J}
&= (-1)^{k-m}\sum_{r \neq k}
\bigl(\brakket{i_ki_r}{\hat{o}_{12}}{j_mi_r} - \brakket{i_ki_r}{\hat{o}_{12}}{i_rj_m}\bigr) \\
&= (-1)^{k-m}\sum_{r \neq k}\brakketx{i_ki_r}{\hat{o}_{12}}{j_mi_r}
\end{align}
As the final case we need to consider the possibility that the determinants differ in precisely two orbitals with indices $i_k < i_l$ and $i_m < i_n$. With this ordering of the different orbitals, the `direct' contribution becomes
\begin{align}\label{eq:twoBodyTwoOrbDifH}
\vcenter{\hbox{%
\begin{tikzpicture}
  \node[vertex,label=left:$i_1$] (i1) {};
  \node[vertex, below=0.7 of i1, label=left:$i_2$] (i2) {};
  \node[vertex, below=1 of i2, label=left:$i_k$] (ik) {};
  \node[vertex, below=1 of ik, label=left:$i_l$] (il) {};
  \node[vertex, right=1.5 of i1, label=above:$\brakket{i_ki_l}{\hat{o}_{12}}{j_mj_n}$] (x1) {};
  \node[vertex, right=1.5 of i2] (x2) {};
  \draw[dashed] (x1) -- (x2);
  \foreach \n/\nm in {2/1,3/2,4/3} \node[vertex, below=0.7 of x\nm] (x\n) {};
  \foreach \n/\nm in {1,2} \node[vertex, right=1.5 of x\n, label=right:{$j_{\n}$}] (j\n) {};
  \node[vertex, below=1.5 of j2, label=right:$j_m$] (jm) {};
  \node[vertex, below=1 of jm, label=right:$j_n$] (jn) {};
  \draw (ik) to[out=0, in=180] (x1) to[out=0, in=180] (jm);
  \draw (il) to[out=0, in=180] (x2) to[out=0, in=180] (jn);
  \foreach \n/\x in {1/3,2/4} \draw (i\n) to[out=0, in=180] (x\x) to[out=0, in=180] (j\n);
  \foreach \b/\e in {i2/ik,ik/il,j2/jm,jm/jn}
    \node[anchor=center] at ($ (\b.center) ! 0.4 ! (\e.center) $){$\vdots$};
  \foreach \r in {il,jn} \node[anchor=center, below=0 of \r] {$\vdots$};
\end{tikzpicture}
}}
&= (-1)^{k-m}\vcenter{\hbox{%
\begin{tikzpicture}
  \node[vertex,label=left:$i_k$] (ik) {};
  \node[vertex, below=0.7 of ik, label=left:$i_1$] (i1) {};
  \node[vertex, below=0.7 of i1, label=left:$i_2$] (i2) {};
  \node[vertex, below=1 of i2, label=left:$i_l$] (il) {};
  \node[vertex, right=1.5 of ik, label=above:$\brakket{i_ki_l}{\hat{o}_{12}}{j_mj_n}$] (xk) {};
  \node[vertex, right=1.5 of i1] (x1) {};
  \node[vertex, right=1.5 of i2] (x2) {};
  \node[vertex, below=0.7 of x2] (x3) {};
  \draw[dashed] (xk) -- (x1);
  \foreach \n/\nm in {k/m,1,2} \node[vertex, right=1.5 of x\n, label=right:{$j_{\nm}$}] (j\nm) {};
  \node[vertex, below=1.5 of j2, label=right:$j_n$] (jn) {};
  \draw (ik) -- (jm);
  \draw (il) to[out=0, in=180] (x1) to[out=0, in=180] (jn);
  \foreach \n/\x in {1/2,2/3} \draw (i\n) to[out=0, in=180] (x\x) to[out=0, in=180] (j\n);
  \foreach \b/\e in {i2/il,j2/jn}
    \node[anchor=center] at ($ (\b.center) ! 0.4 ! (\e.center) $){$\vdots$};
  \foreach \r in {il,jn} \node[anchor=center, below=0 of \r] {$\vdots$};
\end{tikzpicture}
}} \notag \\
&= (-1)^{k+l-m-n}\vcenter{\hbox{%
\begin{tikzpicture}
  \node[vertex,label=left:$i_k$] (ik) {};
  \foreach \n/\nm in {l/k,1/l,2/1} \node[vertex, below=0.7 of i\nm, label=left:$i_{\n}$] (i\n) {};
  \node[vertex, right=1.5 of ik, label=above:$\brakket{i_ki_l}{\hat{o}_{12}}{j_mj_n}$] (xk) {};
  \foreach \n in {l,1,2} \node[vertex, right=1.5 of i\n] (x\n) {};
  \draw[dashed] (xk) -- (xl);
  \foreach \n/\nm in {k/m,l/n,1,2} \node[vertex, right=1.5 of x\n, label=right:{$j_{\nm}$}] (j\nm) {};
  \foreach \n/\nm in {k/m,l/n,1,2} \draw (i\n) -- (j\nm);
  \foreach \r in {i2,j2} \node[anchor=center, below=0 of \r] {$\vdots$};
\end{tikzpicture}
}} \notag \\
&= (-1)^{k+l-m-n}\brakket{i_ki_l}{\hat{o}_{12}}{j_mj_n} .
\end{align}
The same arguments as in the other cases tell us that there are $2(N-2)!$ of these terms and that there is also an exchange variant as in~\eqref{eq:twoBodyNoOrbDifX}. The including the two-body pre-factor~\eqref{eq:two-body}, the full expectation value becomes
\begin{align}
\brakket{\Phi_I}{\hat{O}_2}{\Phi_J}
&= (-1)^{k+l-m-n}\bigl(\brakket{i_ki_l}{\hat{o}_{12}}{j_mj_n} -\brakket{i_ki_l}{\hat{o}_{12}}{j_nj_m}\bigr) \notag\\
&= (-1)^{k+l-m-n}\brakketx{i_ki_l}{\hat{o}_{12}}{j_mj_n} .
\end{align}
Collecting all the results, we find the following Slater--Condon rule for 2-body operators
\begin{equation}
\brakket{\Phi_I}{\hat{O}_2}{\Phi_J}
= \begin{dcases*}
\sum_{r < s}\brakketx{i_ri_s}{\hat{o}_{12}}{i_ri_s}
	&if $I = J$, \\
(-1)^{k-m}\sum_{r \neq k}\brakketx{i_ki_r}{\hat{o}_{12}}{j_mi_r}
	&if only $i_k \neq j_m$, \\
(-1)^{k+l-m-n}\brakketx{i_ki_l}{\hat{o}_{12}}{j_mj_n}
	&if $i_k \neq j_m$ and $i_l \neq j_n$, \\
0	&otherwise.
\end{dcases*}\!\!\!\!\!\!\!\!\!\!\!\!
\end{equation}

\subsection{More integral notation}
The interaction between the electrons will be typically the two-body operator under consideration, so we will introduce even shorter notation for this particular two-body operator
\begin{subequations}
\begin{align}
\braket{ij}{kl}
\isDefinedAs{}& \brakket{ij}{\hat{w}_{12}}{kl}
= \brakket{\phi_i\phi_j}{\hat{w}_{12}}{\phi_k\phi_l} \\
{}={}& \integ{\vecx_1}\!\!\integ{\vecx_2}
\phi^*_i(\vecx_1)\phi_j^*(\vecx_2)\,w(\vecr_1,\vecr_2)\,\phi_k(\vecx_1)\phi_l(\vecx_2) . 
\notag
\end{align}
This way of writing the two-electron (interaction) integrals is called the physicist's notation. As there is a physicist's notation, there is also a chemist's notation which collects the charge densities instead of the bra-ket combinations
\begin{align}
\cbraket{ij}{kl} \isDefinedAs
\integ{\vecx_1}\!\!\integ{\vecx_2}
\phi^*_i(\vecx_1)\phi_j(\vecx_1)\,w(\vecr_1,\vecr_2)\,\phi_k^*(\vecx_2)\phi_l(\vecx_2) .
\end{align}
\end{subequations}
Both notations also have a variant which includes the exchange term
\begin{subequations}
\begin{align}
\braketx{ij}{kl} &\isDefinedAs \braket{ij}{kl} - \braket{ij}{lk} , \\
\cbraketx{ij}{kl} &\isDefinedAs \cbraket{ij}{kl} - \cbraket{il}{kj} .
\end{align}
\end{subequations}

\begin{exercise}
Show that the \ac{HF} energy functional can be written as
\begin{align}
E^{\text{HF}}[\{\phi_i\}] = \sum_{i=1}^N\brakket{\phi_i}{\hat{h}}{\phi_i} +
\half\sum_{i,j=1}^N\braketx{\phi_i\phi_j}{\phi_i\phi_j} + E_{\text{nuc}} ,
\end{align}
where $\hat{h} \isDefinedAs -\half\nabla^2 + v(\vecr)$.
\end{exercise}

\begin{exercise}
Show that $\braketx{ij}{kl} = \cbraketx{ik}{jl}$.
\end{exercise}

\begin{exercise}[\textbf{\ac*{RHF} for H$_2$ in a minimal basis}]
\label{ex:RHF}
In this exercise we will calculate the (singlet) \acf*{RHF} energy for the H$_2$ molecule in a minimal basis\index{minimal basis}, i.e.\ two $1s$ orbitals on each hydrogen atom
\begin{align}
\chi_A(\vecr) &= N_{\zeta}\e^{-\zeta\abs{\vecr - \vecR_A}} &
\chi_B(\vecr) &= N_{\zeta}\e^{-\zeta\abs{\vecr - \vecR_B}}
\end{align}

\begin{subexercise}
\item Show that the normalisation constant of the $1s$ orbitals is
\begin{align}
N_{\zeta} = \sqrt{\frac{\zeta^3}{\pi}} .
\end{align}

\item Show that the overlap of the $1s$ functions is
\begin{align}
s &\isDefinedAs \braket{\chi_A}{\chi_B}
= \biggl(1 + \rho + \frac{\rho^2}{3}\biggr)\e^{-\rho} ,
\end{align}
where $\rho \isDefinedAs \zeta R$ and $R \isDefinedAs \abs{\vecR_A - \vecR_B}$ is the internuclear distance.

\textbf{Hint:} You can use different coordinate systems to solve the integral. You might be inclined to use the more familiar spherical coordinate system, but the integrals are not very straightforward. The most natural one which leads to the easiest integrals is the prolate spheroidal coordinate system (Ap.~\ref{ap:prolSperCoords}), as there are two nuclei which can be placed at the foci of ellipses.
First show that $\chi_{A/B} = N_{\zeta}\e^{-\frac{\rho}{2}(\xi \mp \eta)}$ and then perform the integral over $\xi$, $\eta$ and $\phi$.
\end{subexercise}

\noindent
The $1s$ functions on the different hydrogen atoms are not orthogonal ($s \neq 0$). One way to orthogonalise them is to make symmetry adapted combination ($\sigma_g$\slash{}$\sigma_u$). Due to the symmetry of the system, we expect one of them to be the \ac{HF} solution.

\begin{subexercise}[resume]
\item Show that the normalised symmetry adapted basis functions are
\begin{align}
\sigma_g &= \frac{\chi_A + \chi_B}{\sqrt{2(1+s)}} &
&\text{and} &
\sigma_u &= \frac{\chi_A - \chi_B}{\sqrt{2(1-s)}} .
\end{align}
Why are these functions orthogonal?

\item Argue that a doubly occupied $\sigma_g$ orbital, i.e.\ a determinant with spin up and a spin down electron in the $\sigma_g$ orbital would yield the lowest energy.
\end{subexercise}

\noindent
To calculate the \acs*{RHF} energy, we need to evaluate the expectation values of the different parts of the Hamiltonian. We will first consider the one-body part.

\begin{subexercise}[resume]
\item\label{ex:RHFoneBody} Show that
\begin{subequations}
\begin{align}
\brakket{\chi_A}{\hat{h}}{\chi_A} &= \brakket{\chi_B}{\hat{h}}{\chi_B}
= \frac{\zeta^2}{2} - \zeta + \frac{\zeta}{\rho}\bigl(\e^{-2\rho}(1 + \rho) - 1\bigr) , \\
\brakket{\chi_A}{\hat{h}}{\chi_B}
&= \biggl[\frac{\zeta^2}{2}\biggl(1 + \rho - \frac{\rho^2}{3}\biggr) - 2\zeta(1 + \rho)\biggr]\e^{-\rho} .
\end{align}
\end{subequations}
\textbf{Hint:} If you do not want to evaluate the Laplacian explicitly, you can use that the orbitals $\chi_{A/B}$ solve a hydrogenic Schrödinger equation with charge $\zeta$, i.e.
\begin{align}
\biggl(-\half\nabla^2 - \frac{\zeta}{r_{A/B}}\biggr)\chi_{A/B} = -\frac{\zeta^2}{2}\chi_{A/B} .
\end{align}

\item Evaluate the long bond distance limit, $R \to \infty$, of the one-body terms. Did you expect this result? Explain.

\item Show that
\begin{align}
\brakket{\sigma_g}{\hat{h}}{\sigma_g}
&= \frac{\brakket{\chi_A}{\hat{h}}{\chi_A} + \brakket{\chi_A}{\hat{h}}{\chi_B}}{1 + s} .
\end{align}
\end{subexercise}

\noindent
\ifbool{MullikenIntegApprox}{%
Evaluation of the two-body part is more challenging in general, since the two-electron integrals are 6D. To ease the computation, we use the Mulliken approximation~\autocite{Mulliken1949}.
The idea of the Mulliken approximation is to regard a two-electron integral as an interactions between the two charge densities $\phi_i^*(\vecr)\phi_l(\vecr)$ and $\phi_j^*(\vecr)\phi_k(\vecr)$. These charge densities will be proportional to the overlap between the orbitals, so Mulliken approximated them as
\begin{align}
\phi_i^*(\vecr)\phi_l(\vecr) \approx S_{il}\frac{\abs{\phi_i(\vecr)}^2 + \abs{\phi_l(\vecr)}^2}{2}
\end{align}
and similar for $\phi_j^*(\vecr)\phi_k(\vecr)$ of course. Using this approximation to the charge densities, one finds for the integrals
\begin{align}
\cbraket{il}{jk} \approx \cbraket{il}{jk}_M
\isDefinedAs \frac{S_{il}S_{jk}}{4}\bigl[\cbraket{ii}{jj} + \cbraket{ii}{kk} +
\cbraket{ll}{jj} + \cbraket{ll}{kk}\bigr] .
\end{align}}

\begin{subexercise}[resume]
\item Show that
\ifbool{MullikenIntegApprox}{%
\begin{subequations}
\begin{align}
\cbraket{AA}{AB}_M
&= \frac{s}{2}\bigl[\cbraket{AA}{AA} + \cbraket{AA}{BB}\bigr] , \\
\cbraket{AB}{AB}_M
&= \frac{s^2}{2}\bigl[\cbraket{AA}{AA} + \cbraket{AA}{BB}\bigr] , \\
\cbraket{\sigma_g\sigma_g}{\sigma_g\sigma_g}
&= \frac{1}{2(1+s)^2}\bigl[\cbraket{AA}{AA} + 4\cbraket{AA}{AB} + {} \notag \\*
&\eqspace\hphantom{\frac{1}{2(1+s)^2}\bigl[}
\cbraket{AA}{BB} + 2\cbraket{AB}{AB}\bigr] , \\
\cbraket{\sigma_g\sigma_g}{\sigma_g\sigma_g}_M
&= \half\bigl[\cbraket{AA}{AA} + \cbraket{AA}{BB}\bigr] ,
\end{align}
\end{subequations}}{%
\begin{align}
\cbraket{\sigma_g\sigma_g}{\sigma_g\sigma_g}
&= \frac{1}{2(1+s)^2}\bigl[\cbraket{AA}{AA} + 4\cbraket{AA}{AB} + {} \notag \\*
&\eqspace\hphantom{\frac{1}{2(1+s)^2}\bigl[}
\cbraket{AA}{BB} + 2\cbraket{AB}{AB}\bigr] ,
\end{align}}
where we used the abbreviations $A/B = \chi_{A/B}$.
\end{subexercise}

\noindent
\ifbool{MullikenIntegApprox}{%
So we find that in the Mulliken approximation, only $\cbraket{AA}{BB}$ needs to be calculated, as $\cbraket{AA}{AA}$ follows by taking the limit $\vecR_B \to \vecR_A$.}

\begin{subexercise}[resume]
\item Show that
\begin{align}
\cbraket{AA}{BB} &= \zeta\biggl[\frac{1}{\rho} - \e^{-2\rho}
\biggl(\frac{1}{\rho} + \frac{11}{8} + \frac{3\rho}{4} + \frac{\rho^2}{6}\biggr)\biggr] \ifbool{MullikenIntegApprox}{.}{ , \\
\cbraket{AA}{AB} &= \frac{\zeta\e^{-\rho}}{16\rho}\Bigl(5 + 2\rho + 16\rho^2 - (5+2\rho)\e^{-2\rho}\Bigr)
}
\end{align}
\textbf{Hint:} First evaluate the Coulomb potential due to one of the charge densities, e.g.\ the density $\abs{\chi_A(\vecr)}^2$ yields the Coulomb potential
\begin{align}
v_A(\vecr') = \integ{\vecr}\frac{\abs{\chi_A(\vecr)}^2}{\abs{\vecr' - \vecr}} .
\end{align}
Note that you already calculated this integral for $\vecr' = \vecR_B$ in part~\ref{ex:RHFoneBody}. Next you only need to work out $\integ{\vecr}\abs{\chi_B(\vecr)}^2v_A(\vecr)$, whose parts you also already did before.

\item Show that
\begin{align}
\cbraket{AA}{AA} = \lim_{R \to 0}\cbraket{AA}{BB} = \frac{5\zeta}{8} .
\end{align}
\end{subexercise}

\ifbool{MullikenIntegApprox}{}{
\noindent
The remaining integral turns out to be quite nasty (see Sec.~\ref{sec:extraIntegralInfo} for more details) and can only be reduced to
\begin{align}\label{eq:cABAB}
\cbraket{AB}{AB}
&= \frac{\zeta\e^{-2\rho}}{120\rho}\Bigl[75\rho - 138\rho^2 - 72\rho^3 - 8\rho^4+ {} \notag \\*
&\eqspace\hphantom{\frac{\zeta\e^{-2\rho}}{120\rho}\Bigl[}
16\bigl(\gamma+\ln(\rho)\bigr)(9 + 18\rho + 15\rho^2 + 6\rho^3 + \rho^4) - {} \notag \\*
&\eqspace\hphantom{\frac{\zeta\e^{-2\rho}}{120\rho}\Bigl[}
32(9 - 3\rho^2 + \rho^4)\e^{2\rho}\Ei(-2\rho) + {} \notag \\*
&\eqspace\hphantom{\frac{\zeta\e^{-2\rho}}{120\rho}\Bigl[}
16(3 - 3\rho + \rho^2)^2\e^{4\rho}\Ei(-4\rho)\Bigr],
\end{align}
where the \href{http://mathworld.wolfram.com/Euler-MascheroniConstant.html}{Euler--Mascheroni constant} is defined as
\begin{align}
\gamma \isDefinedAs -\binteg{z}{0}{\infty}\e^{-z}\ln(z) = 0.577\,215\,664\,9\ldots
\end{align}
and the \href{http://mathworld.wolfram.com/ExponentialIntegral.html}{exponential integral} is defined as
\begin{align}
\Ei(-x) \isDefinedAs -\binteg{z}{x}{\infty}\frac{\e^{-z}}{z} .
\end{align}}

\noindent
Now we have all the ingredients to calculate the restricted \ac{HF} energy for the doubly occupied $\sigma_g$ orbital.

\begin{subexercise}[resume]
\item Show that
\begin{align}\label{eq:ERHF-H2minBas}
E_{\text{RHF}}[(\sigma_g)^2] &= 2\brakket{\sigma_g}{\hat{h}}{\sigma_g} +
\cbraket{\sigma_g\sigma_g}{\sigma_g\sigma_g} + \frac{1}{R} .
\end{align}
\end{subexercise}

\noindent
The following exercises are easiest to do with some math program, e.g.\ \textsc{Mathematica}.

\begin{subexercise}[resume]
\item\label{ex:RHFexpo} Optimise the orbital exponent, $\zeta$, at each distance $R$ to minimise the energy. Plot the exponent as a function of the bonding distance. What do you notice in the dissociation limit?

\item\label{ex:RHFenergy} Plot also the optimised \acs*{RHF} energy (with optimised $\zeta$) as a function of $R$. Is the dissociation limit as you would expect?
\end{subexercise}

\end{exercise}

\subsection{Extra information on exercise~\ref{ex:RHF}}
\label{sec:extraIntegralInfo}
\ifbool{MullikenIntegApprox}{
The Mulliken approximation for the two-electron integrals can be avoided if one is willing to evaluate two additional integrals. The $\cbraket{AA}{AB}$ integral can be handled in the same manner as the $\cbraket{AA}{BB}$ integral and yields
\begin{align}
\cbraket{AA}{AB} = \frac{\zeta\e^{-\rho}}{16\rho}\Bigl(5 + 2\rho + 16\rho^2 - (5+2\rho)\e^{-2\rho}\Bigr) .
\end{align}}
The other integral turns out to be a nasty one, which probably can only be handled in the prolate spheroidal coordinate system. The Coulomb interaction needs to expressed in the Von Neumann expansion
\begin{multline}
\frac{1}{\abs{\vecr_1 - \vecr_2}}
= \frac{2}{R}\sum_{l=0}^{\infty}\;\sum_{\mathclap{m=-l}}^l\alpha_{ml}
\LegendreP^{\abs{m}}_l(\xi_<)\LegendreQ^{\abs{m}}_l(\xi_>) \times{} \\
\LegendreP^{\abs{m}}_l(\eta_1)\LegendreP^{\abs{m}}_l(\eta_2)\,\e^{\im m(\phi_1 - \phi_2)} ,
\end{multline}
where $\LegendreP^{\abs{m}}_l(x)$ and $\LegendreQ^{\abs{m}}_l(x)$ are the associated Legendre polynomials of the first and second kind respectively. The expansion coefficients are given as
\begin{align}
\alpha_{lm} = (-1)^m(2l+1)\left(\frac{(l - \abs{m})!}{(l + \abs{m})!}\right)^2
\end{align}
and $\xi_> \isDefinedAs \max(\xi_1,\xi_2)$ and $\xi_< \isDefinedAs \min(\xi_1,\xi_2)$. After a long massage, the integral yields
\ifbool{MullikenIntegApprox}{%
\begin{align}
\cbraket{AB}{AB}
&= \frac{\zeta\e^{-2\rho}}{120\rho}\Bigl[75\rho - 138\rho^2 - 72\rho^3 - 8\rho^4+ {} \notag \\*
&\eqspace\hphantom{\frac{\zeta\e^{-2\rho}}{120\rho}\Bigl[}
16\bigl(\gamma+\ln(\rho)\bigr)(9 + 18\rho + 15\rho^2 + 6\rho^3 + \rho^4) - {} \notag \\*
&\eqspace\hphantom{\frac{\zeta\e^{-2\rho}}{120\rho}\Bigl[}
32(9 - 3\rho^2 + \rho^4)\e^{2\rho}\Ei(-2\rho) + {} \notag \\*
&\eqspace\hphantom{\frac{\zeta\e^{-2\rho}}{120\rho}\Bigl[}
16(3 - 3\rho + \rho^2)^2\e^{4\rho}\Ei(-4\rho)\Bigr],
\end{align}
where the \href{http://mathworld.wolfram.com/Euler-MascheroniConstant.html}{Euler--Mascheroni constant} is defined as
\begin{align}
\gamma \isDefinedAs -\binteg{z}{0}{\infty}\e^{-z}\ln(z) = 0.577\,215\,664\,9\ldots
\end{align}
and the \href{http://mathworld.wolfram.com/ExponentialIntegral.html}{exponential integral} is defined as
\begin{align}
\Ei(-x) \isDefinedAs -\binteg{z}{x}{\infty}\frac{\e^{-z}}{z} .
\end{align}}{the expression in~\eqref{eq:cABAB}.}

\section{The \acs*{HF} \acf*{SCF} equations}
\label{sec:HF-SCF}

In the derivation of the \ac{HF} energy functional we assumed that the orbitals are orthonormal, so we need to respect this constraint. There are different ways to handle this constraint. We will use the method of Lagrange multipliers to do this. If you do not know about Lagrange multipliers or need a refresh of your memory, read Appendix~\ref{ap:Lagrangian}. Here are two exercises to practice your skill

\begin{exercise}
Find the point on the parabola $y(x) = \frac{1}{5}(x - 1)^2$ closest to the point $(x,y) = (1,2)$ in the Euclidean norm~\autocite[problem 12.18]{NocedalWright2006}. So consider the following minimisation problem
\begin{align}
&\min_{\mathclap{x,y \in \Reals}}&  &f(x,y) = (x-1)^2 + (y-2)^2 \notag \\*
&\text{subjet to }& &(x-1)^2 = 5y .
\end{align}

\begin{subexercise}
\item Construct the Lagrangian and find the points which satisfy the first order optimality conditions.

\item Which of these points are solutions?

\item It is tempting to eliminate the $(x - 1)^2$ term in $f$ to transform the problem into an unconstraint minimisation. Show that this procedure does \emph{not} yield the correct minimum.

\item Can you pinpoint the error we made by eliminating $(x - 1)^2$?
\end{subexercise}
\end{exercise}

\begin{exercise}
Find the maximum and minimum of $f(x,y,z) = 4y - 2z$ subject to the constraints $2x - y - z = 2$ and $x^2 + y^2 = 1$~\autocite[example 5]{PaulsNotesCalc3}.
\end{exercise}

\noindent
To enforce the orthonormality of the orbitals we introduce the following Lagrangian
\begin{align}\label{eq:HFlagrangian}
L[\{\phi_i,\psi_i^*\},\mat{\epsilon}]
&= E_{\text{HF}}[\{\phi_i,\psi_i^*\}] - E_{\text{nuc}} - \sum_{r,s=1}^N\epsilon_{rs}\bigl(\braket{\psi_i}{\phi_r} - \delta_{sr}\bigr) \\
& = \sum_{r=1}^N\brakket{\psi_r}{\hat{h}}{\phi_r} +
\half\sum_{\mathclap{r,s=1}}^N\braketx{\psi_r\psi_s}{\phi_r\phi_s} - 
\sum_{\mathclap{r,s=1}}^N\epsilon_{rs}\bigl(\braket{\psi_s}{\phi_r} - \delta_{sr}\bigr) . \notag 
\end{align}
Thanks to the Lagrange multipliers, we can now vary over $\phi_i$ and $\psi_i$ independently, but the constraint will enforce that we find $\psi_i = \phi_i$ and $\braket{\phi_i}{\phi_j} = \delta_{ij}$. Now consider small perturbations in the orbitals $\delta\phi_i$ and $\delta\psi^*_i$. Further note that the Lagrange multiplier matrix $\mat{\epsilon}$ needs to be hermitian, to ensure that the Lagrangian always yields real values. Since if $L$ would be complex valued, we cannot minimize it.\footnote{Though this is the standard derivation, if you study the Lagrangian~\eqref{eq:HFlagrangian} more carefully, you find that it is not enough to have $\mat{\epsilon} = \mat{\epsilon}^{\dagger}$ to ensure that the Lagrangian yields real numbers for independent choices of $\phi_i$ and $\psi_i$ in $L[\{\phi_i,\psi_i\},\mat{\epsilon}]$. The correct derivation is significantly more tedious, but leads to the same result.

In the correct derivation, one should also only take the real part of the \ac{HF} energy and also the diagonal elements, i.e.\ the constraint on the norm, should be replaced by only the real part of $\braket{\phi_i}{\psi_i}$, so the Lagrangian would become
\begin{align}\label{eq:HFlagrangianCorrect}
L[\{\phi_i,\psi_i^*\},\mat{\epsilon}]
= \Re E_{\text{HF}}[\{\phi_i,\psi_i^*\}] - 
\sum_{\mathclap{r \geq s = 1}}^N\bigl(\epsilon_{rs}\braket{\psi_s}{\phi_r} + \epsilon_{sr}\braket{\phi_r}{\psi_s} - 2\delta_{sr}\bigr) .
\end{align}
The orthogonality constraint will now enforce $\psi_i = c_i\phi_i$ and the normalization constraint gives only $\Re c_i = 1/\norm{\phi_i}$. This constraint makes the energy at the stationary point invariant under the choice of the norm of $\phi_i$, so we can choose $\norm{\phi_i} = 1$ as the most convenient one, and hence $\Re c_i = 1$. Further, the real part of the energy is invariant under arbitrary choices of $\Im c_i$, making the choice $\Im c_i = 0$ the most natural one.}

Now you can just set the first order functional derivatives w.r.t.\ $\phi_i(\vecx)$ and $\psi_i^*(\vecx)$ to zero. A slightly different but equivalent take on this is to consider the first order variation in the Lagrangian which are readily worked out as
\begin{align}
\delta L
&= \sum_{i=1}^N\brakket{\delta\psi_i}{\hat{h}}{\phi_i} + \half\sum_{\mathclap{i,j=1}}^N
\bigl(\braketx{\delta\psi_i\psi_j}{\phi_i\phi_j} + \braketx{\psi_i\delta\psi_j}{\phi_i\phi_j}\bigr) + {} \notag \\*
&\eqspace
\sum_{i=1}^N\brakket{\psi_i}{\hat{h}}{\delta\phi_i} + \half\sum_{\mathclap{i,j=1}}^N
\bigl(\braketx{\psi_i\psi_j}{\delta\phi_i\phi_j} + \braketx{\psi_i\psi_j}{\phi_i\delta\phi_j}\bigr) - {} 
\notag \\*
&\eqspace
\sum_{\mathclap{i,j=1}}^N\epsilon_{ji}
\bigl(\braket{\delta\psi_j}{\phi_i} + \braket{\psi_j}{\delta\phi_i}\bigr).
\end{align}
Collecting now all variations due to $\delta\psi^*_i$ and $\delta\phi_i$ respectively and also using that the constraint will lead to $\psi_i = \phi_i$, the first order variation in the Lagrangian can be expressed as
\begin{align}
\delta L
&= \sum_{i=1}^N\integ{\vecx}\left(\delta\phi_i^*(\vecx)\frac{\delta L}{\delta\phi^*_i(\vecx)} +
\frac{\delta L}{\delta\phi_i(\vecx)}\delta\phi_i(\vecx)\right) ,
\end{align}
where
\begin{subequations}\label{eq:HFfuncDphis}
\begin{align}\label{eq:HFfuncDphiStar}
\frac{\delta L}{\delta\phi^*_i(\vecx)}
&= \hat{h}\phi_i(\vecx) - \sum_{j=1}^N\phi_j(\vecx)\epsilon_{ji} + {} \notag \\*
&\eqspace
\sum_{j=1}^N\integ{\vecx'}\phi_j^*(\vecx')
\frac{\phi_j(\vecx')\phi_i(\vecx) - \phi_j(\vecx)\phi_i(\vecx')}{\abs{\vecr - \vecr'}}, \\
\frac{\delta L}{\delta\phi_i(\vecx)}
&= \hat{h}\phi^*_i(\vecx) - \sum_{j=1}^N\epsilon_{ij}\phi^*_j(\vecx) + {} \notag \\*
&\eqspace
\sum_{j=1}^N\integ{\vecx'}
\frac{\phi^*_i(\vecx)\phi^*_j(\vecx') - \phi^*_i(\vecx')\phi^*_j(\vecx)}{\abs{\vecr - \vecr'}}\phi_j(\vecx') .
\end{align}
\end{subequations}
So we recover the functional derivatives $\delta L / \delta\phi_i(\vecx)$ and $\delta L / \delta \phi_i^*(\vecx)$. Since the first order variation needs to vanish for any variation in $\phi_i$ and $\phi_i^*$, each of these derivatives needs to be zero. In the following exercise you show that these derivatives are each other complex conjugate, so one of the them is redundant.

\begin{exercise}
Show that $\delta L / \delta\phi_i(\vecx) = 0$ and $\delta L / \delta\phi^*_i(\vecx) = 0$ imply each other when using that $\mat{\epsilon}$ is hermitian, i.e.\ \(\mat{\epsilon} = \mat{\epsilon}^{\dagger}\). So one of them is enough to work with.
\end{exercise}

Now let us analyse the terms in the functional derivative and in particular the terms with the interaction. In the first interaction term you can recognise the classical Coulomb potential due to the interaction with the (electronic charge) density
\begin{align}\label{eq:HartreePotential}
\rho(\vecx) &= \sum_{i=1}^N\abs{\phi(\vecx)}^2 &
&\to&
v_{\text{H}}[\rho](\vecx) \isDefinedAs \integ{\vecx'}\frac{\rho(\vecx')}{\abs{\vecr - \vecr'}} .
\end{align}
Due to historical reasons, this term is called the Hartree term. Probably because Hartree~\autocite{Hartree1928} only had this interaction term in his equations, as he did not take the anti-symmetry properly into account. He started from only an orbital product.

The second part in the interaction terms in~\eqref{eq:HFfuncDphis} were recovered by Fock~\autocite{Fock1930}, so sometimes called the Fock part. He started correctly from a Slater determinant, so got this additional term which has no classical analogue. Since this term is caused by the permutation symmetry of the particles, this term is also called the exchange term. Note that as this term is caused by the permutation symmetry of the particles, the exchange term would have a plus sign for bosons.

Since the index $i$ also appears in the integral over $\vecx'$, the exchange potential can not be expressed as a simple local (multiplicative) potential. Instead, we need to express the exchange potential as an integral kernel
\begin{align}\label{eq:xIntKernel}
v_{\text{x}}[\{\phi_i\}](\vecx,\vecx') \isDefinedAs
-\sum_{j=1}^N\frac{\phi_j(\vecx)\phi^*_j(\vecx')}{\abs{\vecr - \vecr'}} .
\end{align}
The exchange part in~\eqref{eq:HFfuncDphiStar} can now be written as
\begin{align}
-\sum_{j=1}^N\integ{\vecx'}\phi_j^*(\vecx')\frac{\phi_j(\vecx)\phi_i(\vecx')}{\abs{\vecr - \vecr'}}
= \integ{\vecx'}v_{\text{x}}(\vecx,\vecx')\phi_i(\vecx')
\sAdenifeDsi \bigl(\hat{v}_{\text{x}}\phi_i\bigr)(\vecx) .
\end{align}
One-body operators which cannot be expressed as a simple multiplication are called non-local. An other example of a non-local one-body operator is the kinetic energy. Every non-local operator can be expressed as an integral kernel. For example, the integral kernel for the kinetic energy can be written as
\begin{align}\label{eq:kinKernel}
t(\vecx,\vecx') = \half\sum_r\mat{\nabla}\chi_r(\vecx)\cdot\mat{\nabla}\chi^*_r(\vecx') ,
\end{align}
where $\{\chi_r\}$ is an arbitrary complete orthonormal basis.
\mkbibfootnote{In case you are not scared of derivatives of delta distributions, you can also express the kernel as $
t(\vecx,\vecx') = \half\nabla_{\vecx'}\cdot\nabla_{\vecx}\delta(\vecx - \vecx')$.}

\begin{exercise}
Check that the kinetic energy integral kernel~\eqref{eq:kinKernel} indeed corresponds to the kinetic energy operator. So you need to check whether
\begin{align}
\integ{\vecx'}t(\vecx,\vecx')\psi(\vecx') = -\half\nabla^2\psi(\vecx) .
\end{align}
\end{exercise}

\noindent
Note that all local one-body operators can be regarded as a special case of a non-local operator with the help of the Dirac delta distribution. For example, the Coulomb potential can be written as
\begin{align}
v_{\text{H}}(\vecx,\vecx') = v_{\text{H}}(\vecx)\delta(\vecx - \vecx')
\end{align}
The non-local one-body potential are equivalent to matrices. When you act with a matrix (non-local operator) on a vector (function), you can get any vector back. However, when the matrix is diagonal (local potential), you get the same vector (function) back with scaled components (values in the points).

Now let us turn our attention back to functional derivatives of the Lagrangian. All terms coming from the \ac{HF} energy are collected in the Fock operator\index{Fock operator}
\begin{align}
\hat{f}[\{\phi_i\}] \isDefinedAs \hat{h} + \hat{v}_{\text{H}}[\{\phi_i\}] + \hat{v}_{\text{x}}[\{\phi_i\}] .
\end{align}
The functional derivative can be compactly expressed as
\begin{subequations}\label{eq:dLinFocks}
\begin{align}
\frac{\delta L}{\delta\phi^*_i(\vecx)}
&= \hat{f}\phi_i(\vecx) - \sum_{j=1}^N\phi_j(\vecx)\epsilon_{ji} , \\
\frac{\delta L}{\delta\phi_i(\vecx)}
&= \hat{f}^*\phi^*_i(\vecx) - \sum_{j=1}^N\epsilon_{ij}\phi^*_j(\vecx) .
\end{align}
\end{subequations}
Since the Lagrangian needs to be stationary with respect to any variation in the orbital, $\delta\phi_i^*(\vecx)$ and $\delta\phi_i(\vecx)$, both functional derivatives need to be zero. Just as if we were dealing with functions instead of functionals. So as first order optimality conditions, we obtain
\begin{align}\label{eq:GenFocks}
\hat{f}\phi_i(\vecx) &= \sum_{j=1}^N\phi_j(\vecx)\epsilon_{ji} ,	&
\hat{f}^*\phi^*_i(\vecx) &= \sum_{j=1}^N\epsilon_{ij}\phi^*_j(\vecx) .
\end{align}
These \ac{HF} equations look almost like Schrödinger equations for the \ac{HF} orbitals, except for $\mat{\epsilon}$-matrix on the right-hand side. We would rather like it to be diagonal.

To show that $\mat{\epsilon}$ can be made diagonal, we multiply~\eqref{eq:GenFocks} $\phi_i(\vecx)$ and $\phi_j(\vecx)$ respectively and integrating over $\vecx$, we find
\begin{align}
\brakket{\phi_j}{\hat{f}}{\phi_i} &= \epsilon_{ji} , &
\brakket{\phi_j}{\hat{f}}{\phi_i}^* &= \epsilon_{ij} ,
\end{align}
so we find that the Lagrange multiplier matrix is hermitian, $\mat{\epsilon} = \mat{\epsilon}^{\dagger}$.
We used here that at the solution point, we satisfy also the constraint $\braket{\phi_i}{\phi_j} = \delta_{ij}$. Combing now these two equations, we have
\begin{equation}
\epsilon_{ij} = \brakket{\phi_j}{\hat{f}}{\phi_i}^* = \epsilon_{ji}^* .
\end{equation}
Additionally, one can show that the \ac{HF} wavefunction does not change if we make a unitary transformation among the \ac{HF} orbitals, except for an irrelevant overall phase factor. In particular, the \ac{HF} energy will not change under such a unitary transformation and we are allowed to diagonalise $\mat{\epsilon}$. This yields the canonical \ac{HF} equations\index{canonical Hartree--Fock}
\begin{align}\label{eq:canonHF}
\hat{f}[\{\phi\}]\,\phi_i(\vecx) = \varepsilon_i\,\phi_i(\vecx) ,
\end{align}
where the eigenvalues of the Lagrange multiplier matrix, $\varepsilon_i$, are called \ac{HF} orbital energies. The \ac{HF} orbitals that also diagonalise the Fock operator are called the canonical \ac{HF} orbitals.

\begin{exercise}
Show that the \ac{HF} wavefunction remains the same up to an overall phase factor, when making a unitary transformation among the \ac{HF} orbitals. \\
Hint: The \ac{HF} wavefunction is $\Phi_{\text{HF}} = \ifbool{normSlatDet}{(N!)^{-\nhalf}}{}\det(\mat{\Phi})$, where $\mat{\Phi}$ is the matrix composed of all \ac{HF} orbitals at all possible coordinates.
Next show that the \ac{HF} wavefunction with the transformed orbitals can be written as $\Phi'_{\text{HF}} = \det(\mat{\Phi} \mat{U})$, where the transformation matrix $\mat{U}$ is defined as
\begin{align*}
\phi'_i(\vecx) = \sum_{j=1}^N\phi_j(\vecx)U_{ji} .
\end{align*}
\end{exercise}

\begin{exercise}
Check that both $\delta L / \delta\phi_i(\vecx) = 0$ as well as $\delta L / \delta\phi^*_i(\vecx) = 0$ indeed reduce to the expressions in terms of the Fock operators~\eqref{eq:dLinFocks}.
\end{exercise}

\begin{exercise}[\acf*{UHF} for H$_2$ in a minimal basis]
\label{ex:UHFminBasis}
In exercise~\ref{ex:RHF} we assumed that the \ac{HF} solution would have the same symmetry as the system itself, so we used symmetry adapted orbitals as trial orbitals. We argued that both electrons should occupy the $\sigma_g$ orbital to get the lowest energy. Since the Fock operator is \emph{not} linear, the solutions do not necessarily exhibit the symmetry of the system. Since electrons repel each other, we might be able to lower the \ac{HF} energy by allowing the electrons to localise in their own orbital. One with spin up and the other with spin down. For the two orbital (minimal basis) model for H$_2$ we will first assume that they localise in the atomic orbitals, so the orbitals would be $\chi_A(\vecr)\alpha(s)$ and $\chi_B(\vecr)\beta(s)$. Note that $\chi_A(\vecr)$ and $\chi_B(\vecr)$ are not orthogonal. For later convenience, we will perturb them to be orthogonal.

\begin{subexercise}
\item Explain why it is not necessary that $\braket{\chi_A}{\chi_B} \neq 0$.

\item Given the overlap matrix, $S_{ij} = \braket{\chi_i}{\chi_j}$, show that the orbitals
\begin{align}
\phi_i(\vecx) = \sum_j\chi_j(\vecx)S_{ji}^{-\nhalf}
\end{align}
are orthonormal. This method to generate an orthonormal basis from a non-orthonormal one is called Löwdin orthonormalisation~\autocite{Lowdin1950}.

\item In the minimal H$_2$ basis, the overlap matrix is of the form
\begin{align}
\mat{S} = \begin{pmatrix} 1 & s \\ s & 1\end{pmatrix} .
\end{align}
Calculate the inverse, $\mat{S}^{-1}$.

\item The inverse square root of the overlap matrix is
\begin{align}
\mat{S}^{-\nhalf} = \frac{1}{\sqrt{1 - s^2}}\begin{pmatrix} \cos\theta & -\sin\theta \\
-\sin\theta & \cos\theta \end{pmatrix} ,
\end{align}
where $s \sAdenifeDsi \sin(2\theta)$. Check this by showing that $\bigl(\mat{S}^{-\nhalf}\bigr)^2 = \mat{S}^{-1}$.

\item Show that
\begin{align}
\phi_{a/b}(\vecr) = \frac{1}{\sqrt{1 - s^2}}
\bigl(\cos(\theta)\chi_{A/B}(\vecr) - \sin(\theta)\chi_{B/A}(\vecr)\bigr) .
\end{align}
\end{subexercise}

\noindent
Now that we have orthogonalized the orbitals, we need to transform the integrals to the new basis.

\begin{subexercise}[resume]
\item Show that
\begin{subequations}
\begin{align}
\brakket{\phi_a}{\hat{h}}{\phi_a} = \brakket{\phi_b}{\hat{h}}{\phi_b}
&= \frac{\brakket{\chi_A}{\hat{h}}{\chi_A} - s\brakket{\chi_A}{\hat{h}}{\chi_B}}{1 - s^2} , \\
\brakket{\phi_a}{\hat{h}}{\phi_b} = \brakket{\phi_a}{\hat{h}}{\phi_b}
&= \frac{\brakket{\chi_A}{\hat{h}}{\chi_B} - s\brakket{\chi_A}{\hat{h}}{\chi_A}}{1 - s^2} .
\end{align}
\end{subequations}

\item Show that the unique two-electron integrals can be expressed as
\begin{subequations}
\begin{align}
\cbraket{aa}{aa} &= \frac{1}{(1-s^2)^2}\biggl[
\biggl(1 - \frac{s^2}{2}\biggr)\cbraket{AA}{AA} + \frac{s^2}{2}\cbraket{AA}{BB} - {} \notag \\*
&\eqspace\hphantom{\frac{1}{(1-s^2)^2}\biggl[}
2s\cbraket{AA}{AB} + s^2\cbraket{AB}{AB}\biggr] , \\
\cbraket{aa}{bb} &= \frac{1}{(1-s^2)^2}\biggl[
\biggl(1 - \frac{s^2}{2}\biggr)\cbraket{AA}{BB} + \frac{s^2}{2}\cbraket{AA}{AA} - {} \notag \\*
&\eqspace\hphantom{\frac{1}{(1-s^2)^2}\biggl[}
2s\cbraket{AA}{AB} + s^2\cbraket{AB}{AB}\biggr] , \\
\cbraket{aa}{ab} &= \frac{1}{(1-s^2)^2}\biggl[
(1 + s^2)\cbraket{AA}{AB} - s\cbraket{AB}{AB} - {} \notag \\*
&\eqspace\hphantom{\frac{1}{(1-s^2)^2}\biggl[}
\frac{s}{2}\bigl(\cbraket{AA}{AA} + \cbraket{AA}{BB}\bigr)\biggr] , \\
\cbraket{ab}{ab} &= \frac{1}{(1-s^2)^2}\biggl[
\cbraket{AB}{AB} - 2s\cbraket{AA}{AB} - {} \notag \\*
&\eqspace\hphantom{\frac{1}{(1-s^2)^2}\biggl[}
\frac{s^2}{2}\bigl(\cbraket{AA}{AA} + \cbraket{AA}{BB}\bigr)\biggr] .
\end{align}
\end{subequations}

\ifbool{MullikenIntegApprox}{
\item Show that in the Mulliken approximation, the integrals reduce to
\begin{subequations}
\begin{align}
\cbraket{aa}{aa}_M &= \cbraket{AA}{AA} +
\frac{1}{2(1 - s^2)^2}\bigl[\cbraket{AA}{AA} - \cbraket{AA}{BB}\bigr] , \\
\cbraket{aa}{bb}_M &= \cbraket{AA}{BB} +
\frac{1}{2(1 - s^2)^2}\bigl[\cbraket{AA}{BB} - \cbraket{AA}{AA}\bigr] , \\
\cbraket{aa}{ab}_M &= \cbraket{ab}{ab}_M = 0 .
\end{align}
\end{subequations}}

\end{subexercise}

\noindent
Now we put one electron with spin up in $\phi_a$ and the other electron with spin down in $\phi_b$.

\begin{subexercise}[resume]
\item\label{ex:UHFlocEnergy} Show that the \ac{HF} energy becomes
\begin{align}
E^{\text{loc}}_{\text{HF}}
&= 2\brakket{\phi_a}{\hat{h}}{\phi_a} + \cbraket{\phi_a\phi_a}{\phi_b\phi_b} + \frac{1}{R} .
\end{align}

\item Using the integrals you already calculated in exercise~\ref{ex:RHF}, optimise the exponent of the orbitals for this localised \ac{HF} solution, e.g.\ with \textsc{Mathematica}. Compare to the exponent in the \acs*{RHF} exercise~\ref{ex:RHFexpo}.

\item Plot the \ac{HF} energy with the localised orbitals. Compare with the \acs*{RHF} energy of~\ref{ex:RHFenergy}. What do you notice?

\item Put now two spin up electrons in the localised orbitals and work out the \ac{HF} energy \ifbool{MullikenIntegApprox}{in the Mulliken approximation }and compare with~\ref{ex:UHFlocEnergy} in the dissociation limit.

\item Show that occupying $\phi_a$ and $\phi_b$ with two spin up electrons yields in the dissociation limit the same \ac{HF} wavefunction (up to a sign) as occupying the $\sigma_g$ and $\sigma_u$ orbitals with two spin up electrons.
\end{subexercise}

\noindent
So far we assumed that the \ac{HF} orbitals are either localised or fully delocalised (symmetry adapted). In a fully \acf*{UHF} calculation, the \ac{HF} orbitals can also be a mixture between these two extremes, i.e.\ we need to consider linear combinations. Since the final charge density will be symmetric, the linear combinations are restricted to the form
\begin{align}
\psi(\vecr)_{a/b} = \cos(\alpha)\phi_{a/b}(\vecr) + \sin(\alpha)\phi_{b/a}(\vecr) .
\end{align}
In principle, we should now solve the \ac{HF} equations. However, since there is only one additional parameters, it is easier to write the energy as a function of $\alpha$ as well and to optimise.

\begin{subexercise}[resume]
\item Show that the \ac*{UHF} energy with the orbitals $\psi_{a/b}$ can be expressed as
\begin{align}
E_{\text{UHF}}
&= 2\bigl(\brakket{a}{\hat{h}}{a} + \sin(2\alpha)\brakket{a}{\hat{h}}{b}\bigr) + {} \notag \\*
&\eqspace
\cbraket{aa}{bb} +
\frac{\sin^2(2\alpha)}{2}\bigl(\cbraket{aa}{aa} - \cbraket{aa}{bb}\bigr) + \frac{1}{R} .
\end{align}

\item Optimise (numerically) both the orbital exponents, $\zeta$, and the orbital mixing angle, $\alpha$. Plot $\alpha$ as a function of the internuclear distance, $R$. What do you notice?

\item Plot the fully optimised \acs*{UHF} energy and compare to the previous results (\acs*{RHF} from exercise~\ref{ex:RHFenergy} and the localised solution~\ref{ex:UHFlocEnergy}).
\end{subexercise}

\end{exercise}

\begin{exercise}
The kernel of the spin-density operator can be written as
\begin{align}
\rho(\vecx,\vecx') = \sum_{i=1}^N\delta(\vecx - \vecx') .
\end{align}

\begin{subexercise}
\item Show that this expression is correct. That means, you need to show that it gives the expected expression for the expectation value of the spin-density of a \emph{general} wavefunction.

\item Show that the expectation value of the spin-density simplifies to
\begin{align}
\rho(\vecx) = \sum_{i=1}^N\abs{\phi_i(\vecx)}^2,
\end{align}
if the wavefunction is a Slater determinant.
\end{subexercise}
\end{exercise}

\section{Roothaan--Hall equations}
The canonical \ac{HF} equations derived before are differential equations which need to be solved self-consistently. As with the many-body Schrödinger equation, it is basically impossible to do this for general $\phi_i(\vecx)$. Hence, we resort to the same approach which leads to full \ac{CI}: expand the orbitals in a basis
\begin{align}\label{eq:HFbasisExpansion}
\phi_i(\vecx) = \sum_{\nu=1}^m\chi_{\nu}(\vecx)C_{\nu i} ,
\end{align}
where $m$ denotes the number of functions in our basis. This is exactly what Roothaan~\autocite{Roothaan1951} and Hall~\autocite{Hall1951} did independently in 1951. Simply insert the expansion~\eqref{eq:HFbasisExpansion} in the \ac{HF} equations~\eqref{eq:canonHF}, multiply from the left by $\chi^*_{\mu}(\vecx)$ and integrate over $\vecx$
\begin{align}
\sum_{\nu=1}^m\underbrace{\brakket{\chi_{\mu}}{\hat{f}}{\chi_{\nu}}}_{\sAdenifeDsi f_{\mu\nu}}
C_{\nu i}
= \sum_{\nu}\underbrace{\braket{\chi_{\mu}}{\chi_{\nu}}}_{= S_{\mu\nu}}C_{\nu i}\,\varepsilon_i ,
\end{align}
where we did not assume that the basis $\{\chi\}$ is orthonormal. So you see that the general Roothaan--Hall equations have the form of a generalised eigenvalue equation
\begin{align}\label{eq:RHgen}
\mat{f}\,\mat{C} = \mat{S}\,\mat{C}\diag(\mat{\varepsilon}) ,
\end{align}
where $\diag(\mat{\varepsilon})$ stands for a diagonal matrix with $\mat{\varepsilon}$ on its diagonal. In the case of an orthonormal basis, the Roothaan--Hall equations reduce to an ordinary eigenvalue equation
\begin{align}\label{eq:RHortho}
\mat{f}\,\mat{C} = \mat{C}\diag(\mat{\varepsilon}) .
\end{align}
So starting from some non-orthonormal (atomic) basis, one can either use the general form~\eqref{eq:RHgen}, or first orthonormalise the basis and use the simpler form~\eqref{eq:RHortho}.

The Fock operator itself depends on the \ac{HF} orbitals, so we should express it also in terms of the expansion coefficients
\begin{align}
f_{\mu\nu}[\mat{C}] \isDefinedAs \brakket{\chi_{\mu}}{\hat{f}[\mat{C}]}{\chi_{\nu}}
= \brakket{\chi_{\mu}}{\hat{h} + \hat{v}_{\text{H}}[\mat{C}] + \hat{v}_{\text{x}}[\mat{C}]}{\chi_{\nu}} .
\end{align}
The one-body part of the Hamiltonian, $\hat{h}$, does not depend on the \ac{HF} orbitals, so nothing to be done for that term. For the Hartree (classical Coulomb) potential, we need the spin-density. The spin-density in terms of the expansion coefficients becomes
\begin{align}\label{eq:rhoExp}
\rho(\vecx) = \sum_{i=1}^N\abs{\phi_i(\vecx)}^2
= \sum_{\mu,\nu}\chi_{\mu}(\vecx)\underbrace{\sum_{i=1}^N
C^{\vphantom{\dagger}}_{\mu i}C^{\dagger}_{i\nu}}_{\sAdenifeDsi \oneHF_{\mu\nu}}\chi_{\nu}^*(\vecx) .
\end{align}
The quantity $\oneHF_{\mu\nu}$ is called the \ac{HF} \acf*{1RDM}, or `density matrix' for short if only used in the context of \ac{HF}. Let us consider some properties of the \ac{HF} \acs*{1RDM}. As the spin-density integrates to the total number of electrons, we have
\begin{align}
N &= \sum_{i=1}^N\braket{\phi_i}{\phi_i}
= \sum_{\mu,\nu}\sum_{i=1}^NC^{\dagger}_{i\nu}\braket{\chi_{\mu}}{\chi_{\nu}}
C^{\vphantom{\dagger}}_{\mu i} \notag \\*
&= \sum_{\mu,\nu}\oneHF_{\mu\nu}S_{\nu\mu} = \Trace\{\mat{\oneHF}\mat{S}\} .
\end{align}
If we use an atomic basis set, each basis function belongs to a certain atom, so the trace can be partitioned into atomic contributions as
\begin{align}
\Trace\{\mat{\oneHF}\mat{S}\}
= \sum_A\sum_{\mu \in A}(\mat{\oneHF}\mat{S})_{\mu\mu} .
\end{align}
The combined contribution per atom is called the Mulliken population~\autocite{Mulliken1955a, Mulliken1955b}
\begin{align}
\rho_A^{\text{Mulliken}} = \sum_{\mu \in A}(\mat{\oneHF}\mat{S})_{\mu\mu} .
\end{align}
The Mulliken charge is now obtained by adding the nuclear charge
\begin{align}
Q_A^{\text{Mulliken}} = Z_A - \rho_A^{\text{Mulliken}} .
\end{align}
Note that this decomposition only makes sense for basis functions which are well localised on the individual atoms. Large basis sets with diffuse functions, this decomposition scheme breaks down. There are more sophisticated decomposition schemes which do not suffer from this, e.g.\ natural population analysis~\autocite{ReedWeinstockWeinhold1985}, Bader's atoms in molecules~\autocite{Bader1990} and Vornoi deformation density~\autocite{Voronoi1908, BickelhauptEikema-HommesFonseca-Guerra1996, Fonseca-GuerraHandgraafBaerends2004}. These more sophisticated decomposition schemes can also used in combination other basis sets, such as plane waves. The Mulliken decomposition is easy to generalise to more fine-grained populations, e.g.\ separate $s$, $p$, $d$, etc.\ contributions. The prime advantage of the Mulliken analysis is its low computational cost and simplicity to implement.

Now let us turn our attention back to the Hartree potential. Inserting the expansion for the density~\eqref{eq:rhoExp} in the Hartree potential~\eqref{eq:HartreePotential}, we find
\begin{align}
v_{\text{H}}(\vecx)
= \integ{\vecx'}\frac{\rho(\vecx')}{\abs{\vecr - \vecr'}}
= \sum_{\mu,\nu}\oneHF_{\mu\nu}\integ{\vecx'}
\frac{\chi_{\mu}(\vecx')\chi^*_{\nu}(\vecx')}{\abs{\vecr - \vecr'}} ,
\end{align}
or in terms of its matrix elements
\begin{align}
\brakket{\chi_{\mu}}{\hat{v}_{\text{H}}}{\chi_{\nu}}
= \sum_{i=1}^N\braket{\chi_{\mu}\phi_i}{\chi_{\nu}\phi_i}
= \sum_{\rho\sigma}\oneHF_{\sigma\rho}\braket{\chi_{\mu}\chi_{\rho}}{\chi_{\nu}\chi_{\sigma}} .
\end{align}
For the exchange potential~\eqref{eq:xIntKernel} we need a `density with two coordinates'
\begin{align}
\oneHF(\vecx,\vecx') = \sum_{i=1}^N\phi_i(\vecx)\phi_i^*(\vecx')
= \sum_{\mu,\nu}\chi_{\mu}(\vecx)\oneHF_{\mu\nu}\chi^*_{\nu}(\vecx') ,
\end{align}
which is the \ac{HF} \acs*{1RDM} in coordinate representation. Note that the spin-density is just the diagonal of the \acs*{1RDM}, $\rho(\vecx) = \oneHF(\vecx,\vecx)$. The exchange kernel now becomes
\begin{align}
v_{\text{x}}(\vecx,\vecx') = -\frac{\oneHF(\vecx,\vecx')}{\abs{\vecr - \vecr'}}
= -\sum_{\mu,\nu}\chi_{\mu}(\vecx)\frac{\oneHF_{\mu\nu}}{\abs{\vecr - \vecr'}}\chi^*_{\nu}(\vecx') ,
\end{align}
or in its matrix representation
\begin{align}
\brakket{\chi_{\mu}}{\hat{v}_{\text{x}}}{\chi_{\nu}}
= -\sum_{\rho,\sigma}\oneHF_{\sigma\rho}\braket{\mu\rho}{\sigma\nu} .
\end{align}
We have now seen the \ac{HF} \acs*{1RDM} in two representations: in the space-spin coordinate representation and 2) in and arbitrary basis ($\{\chi_i\}$) representation. In the canonical \ac{HF} basis the \ac{HF} \acs*{1RDM} becomes particularly simple as it is diagonal
\begin{align}\label{eq:HF1RDMinHFbasis}
\oneHF_{rs}
= \brakket{\phi_r}{\opOneHF{\hat}}{\phi_s}
= \begin{cases*}
1	&if $r = s \leq N$ \\
0	&otherwise.
\end{cases*}
\end{align}
The \ac{HF} \acs*{1RDM} is therefore idempotent, i.e.
\begin{align}\label{eq:1RDMidempotent}
\opOneHF{\mat}\mat{S}\opOneHF{\mat} = \opOneHF{\mat} .
\end{align}

\begin{exercise}\leavevmode
\begin{subexercise}
\item Prove that an idempotent matrix $\mat{M}$ can only have 0 and 1 as its eigenvalues. Assume that an orthonormal basis is used, so $\mat{S} = \mat{1}$.

\item Show that if the \ac{HF} \acs*{1RDM} is obtained in a non-orthogonal basis, so $\mat{C}^{\dagger}\mat{S}\mat{C} = \mat{1}$, that the idempotency property changes to~\eqref{eq:1RDMidempotent}. Use that the \ac{HF} \acs*{1RDM} in a general basis can be written as, $\opOneHF{\mat} = \mat{C}\mat{\rho}\mat{C}^{\dagger}$, where $\mat{\rho}$ is the \acs*{1RDM} in the canonical \ac{HF} basis~\eqref{eq:HF1RDMinHFbasis}.

\end{subexercise}
\end{exercise}

\noindent
Since the Roothaan--Hall equations have the form of a(n) (generalised) eigenvalue equation, an algorithm to find the optimal \ac{HF} solution would be to start with an initial guess for the \ac{HF} orbitals, construct the Fock matrix and diagonalise it. Then select the orbitals with the lowest orbital energies (eigenvalues) to construct a new \ac{HF} \acs*{1RDM}. That selecting the orbitals with the lowest orbital energies leads to the lowest \ac{HF} energy is called the \emph{aufbau principle}. For the completely unrestricted form described here, we will proof that the aufbau principle always works and leads to the lowest \ac{HF} energy. However, as \ac{HF} is usually implemented with additional restriction, that proof does not apply anymore. One needs to generalise the aufbau principle to handle degenerate orbital energies as well. This generalisation and its proof are beyond the scope of this course, but can be found in Ref.~\autocite{GiesbertzBaerends2010}.

The $N$ orbitals used to construct the new \ac{HF} determinant / \acs*{1RDM} are called the occupied orbitals. The other orbitals obtained from diagonalising the Fock matrix are called the unoccupied\slash{}empty\slash{}virtual orbitals.

\begin{exercise}
Make a diagram of the \ac{SCF} procedure to solve the \ac{HF} equations.
\end{exercise}

\section{Properties and specialities of the \acs*{HF} system}

\subsection{Brillouin's theorem (1934)}
Consider a singly excited determinant. To be more precise, with a singly excited determinant, $\Phi_i^a$, we mean a determinant where an occupied orbital $\phi_i(\vecx)$ is replaced by a virtual orbital, $\phi_a(\vecx)$. Brillouin's theorem states
\begin{align}\label{eq:Brillouin}
\brakket{\Phi_0}{\hat{H}}{\Phi_i^a} = 0 ,
\end{align}
where $\Phi_0$ is the unperturbed \ac{HF} determinant.
This result will be useful later in the course, when we add additional determinants to the \ac{HF} determinant to improve our approximation of the wavefunction.

\begin{exercise}
Prove Brillouin's theorem~\eqref{eq:Brillouin}. Use the Slater--Condon rules to work out the left-hand side of~\eqref{eq:Brillouin} and relate the result to the Fock matrix.
\end{exercise}

\subsection{The \acs*{HF} Hamiltonian}
Since the \ac{HF} orbitals are eigenfunctions of the Fock operator, the \ac{HF} wavefunction is an eigenfunction of the following Hamiltonian
\begin{align}
\hat{H}^{(0)} = \sum_{i=1}^N\hat{f}(\vecx_i) .
\end{align}
Note that $E_{\text{HF}} \neq \brakket{\Phi_{\text{HF}}}{\hat{H}^{(0)}}{\Phi_{\text{HF}}}$, but instead
\begin{align}
E_{\text{HF}} &= \brakket{\Phi_{\text{HF}}}{\hat{H}}{\Phi_{\text{HF}}} \\*
&= \brakket{\Phi_{\text{HF}}}{\hat{H}^{(0)}}{\Phi_{\text{HF}}} +
\brakket{\Phi_{\text{HF}}}{\hat{H} - \hat{H}^{(0)}}{\Phi_{\text{HF}}} = E^{(1)} . \notag
\end{align}
The \ac{HF} energy can therefore be regarded as the first order corrected energy, use $\hat{H}^{(0)}$ as a zeroth order Hamiltonian. It is obvious that this perturbation expansion can be pushed to higher orders. This is called \ac{MP} perturbation theory and will be explained later in the the course in more detail. One could call \ac{HF} `\ac{MP}1': first order corrected perturbation theory.

\newpage
\begin{exercise} \leavevmode
\begin{subexercise}
\item Show that
\begin{align}
\hat{H}^{(0)}\ket{\Phi_{\text{HF}}} &= E^{(0)}\ket{\Phi_{\text{HF}}} &
&\text{where} &
E^{(0)} &= \sum_{i=1}^N\epsilon_i .
\end{align}

\item Show that
\begin{align}
E_{\text{HF}} = \sum_{i=1}^N\epsilon_i - \half\sum_{i,j=1}^N\braketx{ij}{ij}
= \half\sum_{i=1}^N\bigl(\brakket{i}{\hat{h}}{i} + \epsilon_i\bigr) .
\end{align}
\end{subexercise}
\end{exercise}

\subsection{Koopmans' theorem (1934)}
This theorem on the interpretation of the \ac{HF} solution was published by Tjalling Koopmans~\autocite{Koopmans1934}. He received a Nobel prize in \emph{economics} in 1975. Koopmans' theorem states that the occupied \ac{HF} orbital energies can be regarded as approximations to ionisation energies\slash{}potentials and that the unoccupied ones serve as approximations to affinities. The assumption is that the orbitals do not relax when an electron is removed from or added to the system. Under this assumption one can readily show that
\begin{subequations}
\begin{align}
\text{IP}^{\text{HF}}_i &= E^{N-1}_{\text{HF}} - E^N_{\text{HF}} \approx -\epsilon_i , \\
\text{EA}^{\text{HF}}_a &= E^N_{\text{HF}} - E^{N+1}_{\text{HF}} \approx -\epsilon_a .
\end{align}
\end{subequations}
Apart from the intrinsic approximations in \ac{HF}, the approximation of no relaxation introduces an error of several eVs (hundreds \unitfrac{kcal}{mol}). Since ionisation energies are typically quite large, the relative error is not too large and $-\epsilon_i$ often gets you in the right ball park. On the contrary, affinities are typically small, which renders $-\epsilon_a$ as an approximation to affinities practically useless. The \ac{HF} virtuals typically have positive orbital energies, so Koopmans' theorem predicts many negatively charged ions not to be stable.

\begin{exercise}
Proof Koopmans' theorem.

\textbf{Hint:} As mentioned, the assumption here is that the HF orbitals do not change when one electron is removed or added, so you can build the HF determinant from the same set of orbitals as the $N$-particle system. In other words, to approximate the ionised \ac{HF} state, you just remove a (single) \acs{HOMO} from the Slater determinant. To approximate the $N+1$ HF state, you add the \acs{LUMO} to the Slater determinant.
\end{exercise}

\subsection{Restricted \acs*{HF}}
Often we are interested in systems for which the ground state is a singlet $\av{\hat{S}^2} = 0$, so we will have an equal amount of electrons in spin up and spin down orbitals. One can show that a single Slater determinant can only be a singlet, if we have a closed shell solution. Closed shell means that the spin up and spin down orbitals span the same spatial function space. This is easiest to implement as both having the same spatial part
\begin{align}
\phi_i(\vecx) = \begin{cases*}
\psi_{(i+1)/2}(\vecr)\alpha(\sigma)	&for $i$ odd, \\
\psi_{i/2}(\vecr)\beta(\sigma)		&for $i$ even .
\end{cases*}
\end{align}
Since the only degree of freedom is now the spatial part of the \ac{HF} orbitals, we can integrate out the spin part from the \ac{HF} expressions and only $N/2$ spatial \ac{HF} orbitals remain to be determined. This form of \ac{HF} is called \ac{RHF}. The \ac{RHF} energy becomes
\begin{align}\label{eq:ERHF}
E^{\text{RHF}} = 2\sum_{i=1}^{N/2}\brakket{\psi_i}{\hat{h}}{\psi_i} +
\sum_{i,j=1}^{N/2}\bigl(2\braket{\psi_i\psi_j}{\psi_i\psi_j} - \braket{\psi_i\psi_j}{\psi_j\psi_i}\bigr) .
\end{align}
The \ac{RHF} \ac{SCF} equations retain the same basic form
\begin{align}
\epsilon_i\psi_i(\vecr) = \hat{f}\psi_i(\vecr)
= \bigl(\hat{h} + \hat{v}_{\text{H}} + \hat{v}_{\text{x}}\bigr)\psi_i(\vecr) ,
\end{align}
though the Hartree and exchange potential are now
\begin{subequations}\label{eq:restrictedPots}
\begin{align}
v_{\text{H}}(\vecr) &= \integ{\vecr'}\frac{\rho(\vecr')}{\abs{\vecr - \vecr'}} , \\
v_{\text{x}}(\vecr,\vecr') &= -\half\frac{\oneHF(\vecr,\vecr')}{\abs{\vecr - \vecr'}} ,
\end{align}
\end{subequations}
where the spin-integrated \ac{HF} density and \ac{1RDM} are defined as
\begin{subequations} 
\begin{align}
\oneHF(\vecr,\vecr') &= \sum_{\sigma}\oneHF(\vecr\sigma,\vecr'\sigma)
= 2\sum_{i=1}^{N/2}\psi_i(\vecr)\psi^*_i(\vecr') , \\
\rho(\vecr) &= \sum_{\sigma}\rho(\vecr\sigma) = 2\sum_{i=1}^{N/2}\abs{\psi_i(\vecr)}^2
= \oneHF(\vecr,\vecr) .
\end{align}
\end{subequations}
As one is most of the time interested in closed shell systems, \ac{RHF} is the most used form of \ac{HF}. In the case of open shell there exist many variants with varying restrictions. If one uses the same spatial parts for the spin-up and spin-down orbitals, one calls this \ac{ROHF}. If one only fixes the number of occupied spin-up and spin-down orbitals (the $S_z$ value) and allows for different spatial parts of the spin-up and spin-down orbitals, the method is called \acf{UHF}. The \ac{HF} as introduced in the beginning of this chapter is even less restrictive, as it only fixes the number of electrons and finds itself the optimal distribution between the number of spin-up and spin-down electrons. One could call this completely unrestricted \ac{HF}. Typically, for small $S_z$ the \ac{UHF} and completely unrestricted \ac{HF} solutions coincide. For $S_z = 0$ both often yield the same solution as the \ac{RHF} method for organic molecules in close to their equilibrium geometry.

\begin{exercise}\leavevmode
\begin{subexercise}
\item Derive the \ac{RHF} energy expression~\eqref{eq:ERHF} .

\item Derive the \ac{RHF} equations, i.e.\ the \ac{RHF} expression for the the Fock operator and the corresponding potentials~\eqref{eq:restrictedPots}.
\end{subexercise}
\end{exercise}

\begin{exercise}\leavevmode
\begin{subexercise}
\item Consider a Slater determinant with one spin up and one spin down orbital
$\phi_1(\vecx) = \psi_1(\vecr)\alpha(s)$ and $\phi_2(\vecx) = \psi_2(\vecr)\beta(s)$. Show that this determinant is only an eigenfunction of $\hat{S}^2$ if $\psi_1(\vecr) = \psi_2(\vecr)$.

\textbf{Hint:} Write the total spin operator as
\begin{align*}
\hat{S}^2 &= \mat{\hat{S}}\cdot\mat{\hat{S}} = \sum_{i,j}\mat{S}(\sigma_i)\cdot\mat{S}(\sigma_j) \\
&= \sum_{i,j}\bigl[\hat{S}_x(\sigma_i)\hat{S}_x(\sigma_j) + \hat{S}_y(\sigma_i)\hat{S}_y(\sigma_j) +
\hat{S}_z(\sigma_i)\hat{S}_z(\sigma_j)\bigr] \\
&= \sum_{i,j}\biggl[\half\Bigl(\hat{S}_+(\sigma_i)\hat{S}_-(\sigma_j) +
\hat{S}_-(\sigma_i)\hat{S}_+(\sigma_j)\Bigr) +
\hat{S}_z(\sigma_i)\hat{S}_z(\sigma_j)\biggr] .
\end{align*}

\item \textbf{Challenge:} Show that a Slater determinant is only a singlet state, if the spin-up and spin-down parts span the same volume (determinant). The restricted solution is a particular realisation of this, since we can make arbitrary unitary transformations between functions building up the determinant without affecting its absolute value. 

\textbf{Hint:}
So you need to work out $\hat{S}^2\ket{\Phi}$, where $\Phi$ is a general determinant. First note that one can only have singlet state if the amount of spin-up and spin-down orbitals are the same.
Second, note that the $\hat{S}^2$ does not really care about the spatial part, but only about spin. So it is convenient to group the spin-up and spin-down parts and write the determinant in a more abstract manner as
\begin{align*}
\ket{\Phi} = \ket{\psi_1\alpha, \dotsc, \psi_{N/2}\alpha, \psi_{N/2+1}\beta, \dotsc, \psi_N\beta} .
\end{align*}
Now work out $\hat{S}^2\ket{\Phi}$ with the help of $\hat{S}^2 = \half\bigl(\hat{S}_+\hat{S}_- + \hat{S}_-\hat{S}_+\bigr) + \hat{S}_z^2$ and draw the conclusions.

\item Show that the \ac{UHF} solution for H$_2$ in the dissociation limit is half singlet and half triplet. You can use the \ac{UHF} dissociation limit found in exercise~\ref{ex:UHFminBasis}, as it is exact in the infinite basis limit. To do this, first show that
\begin{subequations}
\begin{align}
\Psi^1_{\text{HL}} &= \ifbool{normSlatDet}{\half}{\frac{1}{\sqrt{2}}}
\bigl(\chi_A(\vecr_1)\chi_B(\vecr_2) + \chi_A(\vecr_2)\chi_B(\vecr_1)\bigr) \notag \\*
&\eqspace\qquad\qquad\qquad
\bigl(\alpha(\sigma_1)\beta(\sigma_2) - \alpha(\sigma_2)\beta(\sigma_1)\bigr) , \\
\Psi^3_{\text{HL}} &= \ifbool{normSlatDet}{\half}{\frac{1}{\sqrt{2}}}
\bigl(\chi_A(\vecr_1)\chi_B(\vecr_2) - \chi_A(\vecr_2)\chi_B(\vecr_1)\bigr) \notag \\*
&\eqspace\qquad\qquad\qquad
\bigl(\alpha(\sigma_1)\beta(\sigma_2) + \alpha(\sigma_2)\beta(\sigma_1)\bigr)
\end{align}
\end{subequations}
are a singlet ($S=0$) and a triplet ($S=1$) respectively. Then show that the \ac{UHF} solution is a linear combination of these two Heitler--London wavefunctions.

\end{subexercise}
\end{exercise}

\begin{exercise}[Full \ac{CI} for H$_2$ in a minimal basis]
It is clear that \ac{HF} cannot give both a good energy and the correct spin state in the dissociation limit. The remedy is simple: one Slater determinant is apparently not enough. So we need to include more determinants in the description, i.e.\ to do a small \ac{CI}.

\begin{subexercise}
\item Use symmetry to argue that only the $(\sigma_u)^2$ determinant needs to be considered in the \ac{CI}.

\item Show that the required matrix elements for the full \ac{CI} calculation are
\begin{subequations}\label{eq:CIelems}
\begin{align}
E_u &= \brakket{(\sigma_u)^2}{\hat{H}}{(\sigma_u)^2}
= 2\brakket{\sigma_u}{\hat{h}}{\sigma_u} + \cbraket{\sigma_u\sigma_u}{\sigma_u\sigma_u} , \\
V &= \brakket{(\sigma_u)^2}{\hat{H}}{(\sigma_g)^2}
= \cbraket{\sigma_g\sigma_u}{\sigma_g\sigma_u} , \\
E_g &= E_{\text{RHF}}
= 2\brakket{\sigma_g}{\hat{h}}{\sigma_g} + \cbraket{\sigma_g\sigma_g}{\sigma_g\sigma_g} ,
\end{align}
\end{subequations}
where you have already evaluated $E_g$ in exercise~\ref{eq:ERHF-H2minBas}.

\item Show that these matrix elements can be expressed in the atomic basis as
\begin{subequations}
\begin{align}
\brakket{\sigma_u}{\hat{h}}{\sigma_u}
&= \frac{\brakket{A}{\hat{h}}{A} - \brakket{A}{\hat{h}}{B}}{1 - s} , \\
\cbraket{\sigma_u\sigma_u}{\sigma_u\sigma_u}
&= \frac{1}{2(1-s)^2}\bigl[\cbraket{AA}{AA} + \cbraket{AA}{BB} - {} \notag \\*
&\eqspace\hphantom{\frac{1}{2(1-s)^2}\bigl[}
4\cbraket{AA}{AB} - 2\cbraket{AB}{AB}\bigr] , \\
\ifbool{MullikenIntegApprox}{
\cbraket{\sigma_u\sigma_u}{\sigma_u\sigma_u}_M
&= \half\bigl[\cbraket{AA}{AA} + \cbraket{AA}{BB}\bigr] , \\}
\cbraket{\sigma_g\sigma_u}{\sigma_g\sigma_u}
&= \frac{1}{2(1-s^2)}\bigl[\cbraket{AA}{AA} + \cbraket{AA}{BB}\bigr] .
\end{align}
\end{subequations}

\item Setup the full \ac{CI} secular equations~\eqref{eq:fullCI} with~\eqref{eq:CIelems} and solve for the energy.

\item\label{ex:groundStateCoefs} Construct the lowest eigenvector. It is convenient to express the full \ac{CI} ground state solution as
\begin{align}\label{eq:CI-minH2groundPsi}
\ket{\Psi^{\text{CI}}_1} = \ifbool{normSlatDet}{}{\sqrt{2}\bigl(}\cos(\alpha)\ket{(\sigma_g)^2} + \sin(\alpha)\ket{(\sigma_u)^2}\ifbool{normSlatDet}{}{\bigr)}
\end{align}
and to solve for $\alpha$. This immediately ensures that your ground state is normalised.
\end{subexercise}%
\ifbool{MullikenIntegApprox}{%
For the following computer exercises use the Mulliken approximation again for the two-electron integrals.}
\begin{subexercise}[resume]
\item Optimise the orbital exponent, $\zeta$, for the full \ac{CI} ground state~\eqref{eq:CI-minH2groundPsi}. Plot the orbital exponent as a function of the bound distance. Compare with your results for \ac{RHF} and \ac{UHF}. What do you notice?

\item Plot the optimised full \ac{CI} energy as a function of the internuclear distance and compare to the \ac{RHF} and \ac{UHF} energies.

\item Plot the \ac{CI} coefficients (the coefficients of the eigenvector in exercise~\ref{ex:groundStateCoefs}) as a function of the bond distance.
\end{subexercise}
The following exercises help you to explain the behaviour of the \ac{CI} coefficients.
\begin{subexercise}[resume]
\item Write out the \ac{RHF} determinant for H$_2$ in the atomic orbital basis (minimal $1s$ basis).

\item What is wrong with the \ac{RHF} wavefunction in the dissociation limit? In other words, which terms should not be there or are missing?

\item Write out the full \ac{CI} wavefunction for H$_2$ in the dissociation limit in the atomic orbital basis (minimal $1s$ basis). How does full \ac{CI} fix the dissociation limit?

\item Argue why the equilibrium bond length of H$_2$ is predicted too short by \ac{HF}. Do you expect this trend to persist in other systems?
\end{subexercise}
\end{exercise}

\subsection{Finite gap \& aufbau in completely unrestricted \acs*{HF}}
To be able to construct the aufbau solution, it is important to have a finite gap. The gap is defined as the 
difference between the \ac{HOMO} and \ac{LUMO} energies. In Ref.~\autocite{BachLiebLoss1994} it is rigorously shown that the completely unrestricted \ac{HF} gap is always finite, i.e.\ larger than zero, so the aufbau solution always works. Additionally they showed that the aufbau solution indeed leads to the lowest completely unrestricted \ac{HF} energy. 

An other important consequence of the finite gap is that completely unrestricted \ac{HF} cannot describe metals. Because \ac{HF} is a simple orbital theory, the conductance is primarily related to the \ac{HOMO}-\ac{LUMO} gap. Only when the gap closes, we have a metal. \ac{HF} therefore predicts all material to be insulators (large gap) or semi-conductors (small gap).

Here is the proof, which actually works for any strictly positive definite interaction $w(\vecx,\vecx')$. With strictly positive definite we mean
\begin{align}
\iinteg{\vecx}{\vecx'}w(\vecx,\vecx')\abs{\psi(\vecx,\vecx')}^2 > 0 .
\end{align}
Denote by $\epsilon_1 \leq \epsilon_2 \leq \dotsb \leq \epsilon_N$ the occupied orbital energies of the corresponding \ac{HF} orbitals which constitute $\Phi_{\text{HF}}$. Further, we will introduce the notations
\begin{subequations}
\begin{align}
h_k &= \brakket{\phi_k}{\hat{h}}{\phi_k} , \\
W_{k,l} &= \half\iinteg{\vecx}{\vecx'}
\abs{\phi_k(\vecx)\phi_l(\vecx') - \phi_l(\vecx)\phi_k(\vecx')}^2w(\vecx,\vecx') .
\end{align}
\end{subequations}
Note that $W_{k,k} = 0$ and that $W_{k,l} = W_{l,k} > 0$ for $k \neq l$.

Now assume that there exists a \ac{HF} orbital $\phi_{N+1}$ with $\epsilon_{N+1} \leq \epsilon_N$, so that the minimum $E_{\text{HF}}$ was obtained with a non aufbau $\Phi_{\text{HF}}$. Let $\tilde{\Phi}$ be the Slater determinant constructed from $\phi_1,\dotsc,\phi_{N-1},\phi_{N+1}$. For the total energies we have
\begin{subequations}
\begin{align}
\brakket{\Phi_{\text{HF}}}{\hat{H}}{\Phi_{\text{HF}}}
&= \sum_{k=1}^Nh_k + \half\sum_{k,l=1}^NW_{k,l} , \\
\brakket{\tilde{\Phi}}{\hat{H}}{\tilde{\Phi}}
&= \sum_{k=1}^{N-1}h_k + \half\sum_{k,l=1}^{N-1}W_{k,l} + h_{N+1} + \sum_{l=1}^{N-1}W_{l,N+1} .
\end{align}
\end{subequations}
Since the Fock operator is constructed from $\Phi_{\text{HF}}$ we have
\begin{align}
\epsilon_k = \brakket{\phi_k}{\hat{f}}{\phi_k} = h_k + \sum_{l=1}^NW_{l,k} .
\end{align}
Now we can rewrite the energy of $\tilde{\Phi}$ in terms of the energy of $\Phi_{\text{HF}}$ as
\begin{align}
\brakket{\tilde{\Phi}}{\hat{H}}{\tilde{\Phi}}
&= \brakket{\Phi_{\text{HF}}}{\hat{H}}{\Phi_{\text{HF}}} + h_{N+1} - h_N +
\sum_{l=1}^{N-1}\bigl(W_{l,N+1} - W_{l,N}\bigr) \notag \\
&= \brakket{\Phi_{\text{HF}}}{\hat{H}}{\Phi_{\text{HF}}} + \epsilon_{N+1} - \epsilon_N - W_{N,N+1} \\
&\leq \brakket{\Phi_{\text{HF}}}{\hat{H}}{\Phi_{\text{HF}}} - W_{N,N+1} , \notag
\end{align}
where we used the assumption $\epsilon_{N+1} \leq \epsilon_N$ for the last inequality. However, we end up with a contradiction, since we find that $\tilde{\Phi}$ yields a lower energy than $\Phi_{\text{HF}}$. The assumption that an occupied \ac{HF} orbital with a lower energy than the \ac{HOMO} can exist, is wrong.

\begin{exercise}
Check each step in the proof.
\end{exercise}

\section{Basis sets}

We will first discuss some properties of basis sets on a more general level. Since quantum mechanics is more concerned about the properties of operators (spectra) than their exact representation, we will push the abstract bra-ket notation introduced by Dirac~\autocite{Grassmann1862, Dirac1939} a bit further. The elements of a Hilbert space are now represented by kets, $\ket{\phi_i}$. The inner product between two elements of the Hilbert space is now denoted as $\braket{\phi}{\psi}$. Depending on the setting, the precise formula to calculate the inner product differs. For example in an $N$-dimensional vector space we would have
\begin{align}
\braket{v}{w} = \sum_{i=1}^Nv_i^*w_i = \mat{v}^{\dagger}\mat{w} .
\end{align}
The part $\bra{v}$ is called the `bra' and includes the complex conjugation. When the bra and ket combine in the proper manner, $\braket{\cdot}{\cdot}$, their combination implies summation to form the inner product.
In the case of $L^2(\Reals^3)$ the inner product is implemented as
\begin{align}
\braket{f}{g} = \integ{\vecr}f^*(\vecr)g(\vecr) .
\end{align}
You see that the inner product is basically the same as in the vector case, except that we sum (integrate) over a continuous index.

The most important property of a basis is completeness. We will first discuss this for the usual vector space, so in the case we have a discrete index for the components. This means that the unit operator has a unique representation in the basis. In bra-ket notation this representation becomes very elegantly
\begin{align}\label{eq:unityExpansion}
\hat{1} = \sum_{i,j}\ket{\phi_i}\bigl(\mat{S}^{-1}\bigr)_{ij}\bra{\phi_j} ,
\end{align}
where $S_{ij} \isDefinedAs \braket{\phi_i}{\phi_j}$.
Its importance comes from the fact that this means that every element in the Hilbert space can be represented as a unique linear combination, since
\begin{align}\label{eq:fUnit}
\ket{f} = \hat{1}\ket{f} = \sum_{i,j}\ket{\phi_i}\bigl(\mat{S}^{-1}\bigr)_{ij}\braket{\phi_j}{f} ,
\end{align}
where $\sum_j\bigl(\mat{S}^{-1}\bigr)_{ij}\braket{\phi_j}{f}$ are the expansion (Fourier) coefficients (see exercise~\ref{ex:FourierCoefs}). There are two things which basically can go wrong for a basis to be complete. Either you miss some elements, so you do not cover the whole space. Or there are too many elements, so you cover some parts of your space multiple times. This is called overcompleteness and destroys the uniqueness of the expansion (Fourier) coefficients. In other words, some of your basis states are linearly dependent.

\begin{exercise}
Check that~\eqref{eq:unityExpansion} indeed acts as the unit operator on any arbitrary state $\ket{f}$ constructed as linear combination of the basis states $\ket{\phi_i}$, i.e.
\begin{align}
\ket{f} = \sum_k\ket{\phi_k}f_k .
\end{align}
Thus you should find that $f_i = \sum_j\bigl(\mat{S}^{-1}\bigr)_{ij}\braket{\phi_j}{f}$ in~\eqref{eq:fUnit}
\end{exercise}

An other convenient property is orthonormality
\begin{align}
\braket{\phi_i}{\phi_j} = S_{ij} = \delta_{ij} .
\end{align}
You see that the unit operator in~\eqref{eq:unityExpansion} becomes diagonal and the expansion (Fourier) coefficients simplify to $\braket{\phi_i}{f}$.
There are different ways to generate orthonormal basis sets. One way is to diagonalise a hermitian operators, since the eigenstates belonging to different eigenvalues are orthogonal. In the the degenerate subspace we can use an other hermitian operator or use one of the following techniques
\begin{description}
\item[Gramm--Schmidt orthogonalisation] Simply follow the following algorithm
\begin{align}
\text{step 1:}\qquad	\ket{u_1}	&= \ket{v_1} , \notag \\*
\text{step 2:}\qquad	\ket{u_2}	&= \ket{v_2} - \frac{\braket{u_1}{v_2}}{\braket{u_1}{u_1}}\ket{u_1} \notag , \\*
\text{step 3:}\qquad	\ket{u_3}	&= \ket{v_3} - \frac{\braket{u_1}{v_3}}{\braket{u_1}{u_1}}\ket{u_1} -
\frac{\braket{u_2}{v_3}}{\braket{u_2}{u_2}}\ket{u_2}  \\
\shortvdotswithin{=}
\text{step n:}\qquad	\ket{u_n}	&= \ket{v_n} - \sum_{i=1}^{n-1}\frac{\braket{u_i}{v_n}}{\braket{u_i}{u_i}}\ket{u_i} 
= \left(\hat{1} - \sum_{i=1}^{n-1}\frac{\ket{u_i}\bra{u_i}}{\braket{u_i}{u_i}}\ket{u_i} \right)\ket{v_n} . \notag
\end{align}
So at each step, you simply project out all the previously found components with the projector $\ket{u_i}\bra{u_i} / \norm{u_i}^2$.

\item[Löwdin orthonormalisation] which was introduced in exercise~\ref{ex:UHFminBasis}
\begin{align}
u_i = \sum_j\ket{v_j}\bigl(\mat{S}^{-\nhalf}\bigr)_{ji} .
\end{align}
The inverse square root is defined via the spectral decomposition, so $\mat{S}^{-\nhalf} \isDefinedAs \mat{U}^{\dagger}\diag\bigl(\mat{s}^{-\nhalf}\bigr)\mat{U}$, where $\mat{U}$ is the unitary matrix which diagonalises $\mat{S}$, i.e.\ $\mat{S}\mat{U} = \mat{U}\diag(\mat{s})$ and $s_i$ are its eigenvalues.

\item[Cholesky decomposition] Any hermitian positive definite matrix can be written as $\mat{S} = \mat{L}\mat{L}^{\dagger}$, where $\mat{L}$ is a lower triangular matrix and is unique. An orthonormal basis is now readily constructed as
\begin{align}\label{eq:choleskyOrthonorm}
\ket{u_i} = \sum_j\ket{v_j}\bigl(\mat{L}^{-\dagger}\bigr)_{ji} ,
\end{align}
where $\mat{L}^{-\dagger} \isDefinedAs (\mat{L}^{-1})^{\dagger} = (\mat{L}^{\dagger})^{-1}$.
The advantage is that only a Cholesky decomposition needs to be performed which is computationally more efficient than a full diagonalisation. Because $\mat{L}$ is lower triangular, the solution of~\eqref{eq:choleskyOrthonorm} is very fast.

\item[Any other decomposition of the form $\mat{S} = \mat{Q}\mat{Q}^{\dagger}$.]
\end{description}

\noindent
An additional advantage of an orthonormalisation procedure is that they automatically provide a check for linear dependency and even provide a remedy.
\begin{description}
\item[Gramm--Schmidt] breaks down if the generated $\ket{u_n} \approx \ket{0}$ and can simply be skipped.

\item[Löwdin] leave out the eigenvalues (close to) zero. To do this, one rather works with $\mat{\tilde{S}} \isDefinedAs \mat{U}^{\dagger}\diag\bigl(\mat{s}^{-\nhalf}\bigr)$ instead of $\mat{S}^{-\nhalf}$. This type of orthonormalisation is sometimes called canonical orthonormalisation.

\item[Cholesky] depends on the actual implementation, but the basic strategy is to prevent the diagonal elements of $\mat{L}$ from becoming small.
\end{description}

\begin{exercise}
Check that indeed the Löwdin, canonical and Cholesky methods yield orthonormal bases.
\end{exercise}

\begin{exercise}
A nice feature of the Löwdin orthonormalisation is that it yields the smallest perturbation (in $L^2$ norm) of the original basis which makes it orthonormal. In this sense the Löwdin orthonormalisation is superior to any other method. Proof this. That is, show that
$\norm{\mat{\chi} - \mat{\chi}\mat{T}}_2^2$ is minimised for $\mat{T} = \mat{S}^{-\nhalf}$, under the constraint that $\braket{\mat{\chi}\mat{T}}{\mat{\chi}\mat{T}} = \mat{1}$.
\end{exercise}

\noindent
All these definitions are also useful for bases with continuous indices like the position basis, $\ket{\vecr}$, or the momentum basis, $\ket{\veck}$, though we need to redefine them with continuous analogues
\begin{align}
				&&&\text{discrete, $\ket{\phi_i}$}&	&\text{continuous, $\ket{\vecr}$} \notag \\*
&\text{completeness}&	&\sum_i\ket{\phi_i}\bra{\phi_i} = \hat{1}&
	&\integ{\vecr}\ket{\vecr}\bra{\vecr} = \hat{1} \\
&\text{orthonormality}&	&\braket{\phi_i}{\phi_j} = \delta_{ij}&
	&\braket{\vecr}{\vecr'} = \delta(\vecr - \vecr') \notag
\end{align}
Basis transformations now work in exactly the same manner as before
\begin{subequations}
\begin{align}
\ket{\phi_i} &= \hat{1}\ket{\phi_i} = \integ{\vecx}\ket{\vecr}\braket{\vecr}{\phi_i}
= \integ{\vecr}\ket{\vecr}\phi_i(\vecr) , \\
\ket{\vecr} &= \hat{1}\ket{\vecr} = \sum_i\ket{\phi_i}\braket{\phi_i}{\vecr}
= \sum_i\ket{\phi_i}\phi^*_i(\vecr) .
\end{align}
\end{subequations}
So you see that the orbitals and wavefunctions we have been working with are regarded as expansion (Fourier) coefficients in the abstract bra-ket formalism.
Similarly, for the matrix elements of the operators we have
\begin{align}
\brakket{\phi_i}{\hat{O}}{\phi_j}
&= \iinteg{\vecr}{\vecr'}\braket{\phi_i}{\vecr}\brakket{\vecr}{\hat{O}}{\vecr'}\braket{\vecr'}{\phi_j} \notag \\
&= \iinteg{\vecr}{\vecr'}\phi^*_i(\vecr)\brakket{\vecr}{\hat{O}}{\vecr'}\phi_j(\vecr') .
\end{align}
The advantage of the position basis is that we typically have a good intuition how the matrix elements should be defined, e.g.\ local potentials
\begin{align}
\brakket{\vecr}{\hat{v}_{\text{loc}}}{\vecr'}
= v_{\text{loc}}(\vecr,\vecr') = v_{\text{loc}}(\vecr)\delta(\vecr - \vecr') .
\end{align}
For the kinetic energy, momentum space is easier, since the momentum and hence also the kinetic energy operator will be multiplicative
\begin{align}\label{eq:pAndTmomBas}
\brakket{\veck}{\hat{p}}{\veck'} &= \veck\delta(\veck - \veck')	&
&\text{and} &
\brakket{\veck}{\hat{T}}{\veck'} &= \frac{\abs{\veck}^2}{2}\delta(\veck - \veck') .
\end{align}
Since we know the Fourier coefficients, $\braket{\vecr}{\veck} = (2\pi)^{-\nfrac{d}{2}}\e^{\im\veck\cdot\vecr}$, where $d$ is the dimensionality of the space, we can transform these matrix elements to position space and find
\begin{subequations}\label{eq:pAndTposBas}
\begin{align}
\brakket{\vecr}{\hat{p}}{\vecr'} &= \im\mat{\nabla}_{\vecr'}\delta(\vecr - \vecr')
= -\im\mat{\nabla}_{\vecr}\delta(\vecr - \vecr') , \\
\brakket{\vecr}{\hat{T}}{\vecr'} &= -\half\nabla^2_{\vecr'}\delta(\vecr - \vecr')
= \half\mat{\nabla}_{\vecr}\cdot\mat{\nabla}_{\vecr'}\delta(\vecr - \vecr') .
\end{align}
\end{subequations}

\begin{exercise}
Derive the matrix elements for the momentum and kinetic energy operator in the position basis~\eqref{eq:pAndTposBas} starting from their matrix elements in the momentum basis~\eqref{eq:pAndTmomBas}. You need to use that
\begin{align}
\delta(\vecr - \vecr') = \frac{1}{(2\pi)^d}\integ{\veck}\e^{\im\veck\cdot(\vecr - \vecr')} .
\end{align}
\end{exercise}

\subsection{Atomic basis sets: \acfp*{STO}}
In practice we need to work with a finite basis, so we are always infinitely far from a complete basis. However, the basis does not need to be able to represent any state, but only the ground state or some low excited state. We expect the ground state of a molecule or solid to be small distortions of the ground states of the atoms. So it is natural to start from an atomic basis set.

Since the kinetic energy dominates in the outer region of the molecular Schrödinger equation, its bound states need to decay exponentially as $\e^{-\alpha r}$. As the Coulomb potential becomes infinite at the nuclei, the solutions either need a cusp to compensate with an infinite kinetic energy ($s$-orbitals) or they need to be zero at the nuclei, i.e.\ have a nodal surface ($p$, $d$, etc.). For example, when we consider the complete set of hydrogenic solutions we have
\begin{description}[style=sameline]
\item[bounded]
\begin{align}
\psi_{nlm}(r,\theta,\phi)
= N_{nl}\e^{-\rho/2}\rho^l\Laguerre^{2l+1}_{n - l - 1}(\rho)\SpherHarm^m_l(\theta,\phi) ,
\end{align}
where
\begin{itemize}
\item $\rho = 2Zr/n$,
\item $N_{nl}$ is a normalisation constant,
\item $\Laguerre^{\alpha}_n(x)$ is a generalized Laguerre polynomial,
\item $\SpherHarm^m_l(\theta,\phi)$ is a spherical harmonic,
\item $n=1,2,\dotsc$ and $l=0,1,\dotsc,n-1$ and $m=-l,-l+1,\dotsc,l$.
\end{itemize}

\item[unbounded]
\vspace{\baselineskip}
The unbounded\slash{}ionised states are often forgotten\slash{}neglected in the treatment of the hydrogen atom, but they are also part of the spectrum. The unbounded solutions are the ones with positive energies, so one needs to solve
\begin{align}
\left(-\half\nabla^2 + \frac{Z}{r}\right)\phi_{\veck}(\vecr) = \frac{k^2}{2}\phi_{\veck}(\vecr) ,
\end{align}
where $k^2 / 2 = \abs{\veck}^2/2$ is the kinetic energy of the electron far away form the nucleus. The radial part of the solutions are the Coulomb wavefunctions
\begin{align}
F_{\eta l}(\rho) = \tilde{N}_{\eta l}\rho^{l+1}\e^{\mp\im\rho}
\KummerM(l+1\mp\im\eta,2l+2,\pm2\im\rho) ,
\end{align}
where
\begin{itemize}
\item $\eta = Z / k \in [0,\infty)$ is a continuous quantum number, (continuous equivalent of $n$).
\item $\tilde{N}_{\eta l}$ is a normalisation constant,
\item $\KummerM(a,b,z)$ confluent hypergeometric function (further generalisation of the Laguerre functions).
\end{itemize}
So the Coulomb wavefunctions are similar to the radial part of the bound states, $R_{nl}(\rho)$.
The full solutions are now obtained by glueing the Coulomb wavefunctions to the spherical harmonics
\begin{align}
\tilde{\psi}_{\eta l m}(r,\theta,\phi) = F_{\eta l}(\rho)Y^m_l(\theta,\rho) .
\end{align}

\end{description}
The completeness relation for the hydrogenic solutions is therefore
\begin{align}
\hat{1}
&= \underbrace{\sum_{n,l,m}\ket{\psi_{nlm}}\bra{\psi_{nlm}}}_{\text{discrete part}} +
\underbrace{\binteg{\eta}{0}{\infty}\sum_{l,m}\ket{\tilde{\psi}_{\eta lm}}\bra{\tilde{\psi}_{\eta lm}}}_{\text{continuous part}} .
\end{align}
When people talk about \acfp{STO}, they only mean the solutions which decay exponentially. Since they only form the discrete part of the spectrum, the \ac{STO} set is not complete.

\subsection{Atomic basis sets: \acfp*{GTO}}
The main problem of the \ac{STO} basis set is not its incompleteness, but to evaluation of the 3 and 4 centre integrals. With 3\slash{}4 centre integrals we mean two-electron integrals where the different \acp{STO} are located at 3\slash{}4 atoms. In 1950 Boys therefore proposed to use \acfp{GTO} instead~\autocite{Boys1950}. Because the product of two Gaussians is a new Gaussian located between the two original Gaussians, all 3\slash{}4 centre integrals reduce to 2 centre integrals

\begin{exercise}
Show for two Gaussians centred at $\coord{A}$ and $\coord{B}$
\begin{align}
\e^{-\alpha r_A^2}\e^{-\beta r_B^2} = \e^{-\mu R_{AB}^2} \e^{-pr_P^2} ,
\end{align}
where $\vecr_A = \vecr - \coord{A}$, $r_A = \abs{\vecr_A}$ and idem for $\vecr_B$ and $\vecr_P$. Further,
\begin{align*}
p &= \alpha + \beta ,			&	\coord{P} &= \frac{\alpha\coord{A} + \beta\coord{B}}{p} , \\
\mu &= \frac{\alpha\beta}{p} ,	&	R_{AB} &= \abs{\coord{A} - \coord{B}} .
\end{align*}
\end{exercise}

\noindent
Now you might be concerned how to deal with a Gaussian 2 centre integral. We will not consider this in detail, but the basic trick is to express also the Coulomb interaction as a Gaussian integral
\begin{align}
\frac{1}{r_C} = \frac{1}{\sqrt{\pi}}\binteg{t}{-\infty}{\infty} \e^{-r_C^2t^2} .
\end{align}
The Gaussian product rule can then be used to reduce all Coulomb integrals to one special function, the Boys function
\begin{align}
F_n(x) = \binteg{t}{0}{1}t^{2n}\e^{-xt^2} ,
\end{align}
which can be integrated numerically\slash{}fitted\slash{}tabulated.

An additional formal advantage of the \acp{GTO} is that this set is complete. Since Gaussians are the eigenfunctions of the harmonic oscillator, the electrons cannot escape to infinity. So there are no ionisation\slash{}continuum states to worry about.

The disadvantages are obvious from the discussion on the \acp{STO}
\begin{itemize}
\item too fast decay ($\e^{-\alpha r^2}$ instead of $\e^{-\alpha r}$),
\item no cusp at the nuclei.
\end{itemize}
Therefore, one needs typically more \acp{GTO} than \acp{STO} to get the same accuracy.

\subsection{Read Atkins section 9.4~\autocite{AtkinsFriedman2010}}
The important points are
\begin{itemize}
\item contractions
\item split valence
\item polarization functions. They are important to give directions to bonds. For example a basis with only $s$ \& $p$ functions predicts ammonia to be planar. With $d$ functions it gets the correct umbrella shape.
\item The counterpoise correction is not always the right thing to do. A more complete error analysis shows that the problem is more delicate. There is an additional error of opposite sign due to the basis incompleteness in describing the chemical bond. So do not take this blindly. A more detailed account can be found in Ref.~\autocite{ShengMentelGritsenk2011}.
\end{itemize}

\chapter{Density Functional Theory}

\section*{Prologue}
These lecture notes provide a concise introduction to \acf{DFT}. As with most theories, the historical developments are not in a logical order, so the lecture notes do not follow the historical time-path of the development of \ac{DFT} to give a more logical presentation.

\section{Introduction}
The classical approach to quantum mechanics is to solve for the wavefunction, $\Psi$, via the Schrödinger equation. However, in general we are only interested in a reduced quantity in the sense of the amount of information it contains. Examples are:
\begin{itemize}
\item the density, $\rho(\vecr)$
\item the energy, $E$
\item
\item
\item
\end{itemize}

\begin{exercise}
Add some more observables of interest.
\end{exercise}

\noindent
It would be convenient to calculate these reduced quantities directly, instead of having to calculate the full $\Psi$ first. Important reduced quantities from which a lot of other reduced quantities can be calculated are
\begin{itemize}
\item the density, $\rho(\vecr)$
\item the \acf{1RDM}, $\gamma(\vecr,\vecr')$
\item the pair-density, $P(\vecr_1,\vecr_2)$
\end{itemize}
The density is a well-known quantity, but the other two reduced quantities might be less familiar. We will define them shortly.
These quantities can be used to calculate all the individual components of the total energy
\begin{align}
E = T[\gamma] + V[\rho] + W[P].
\end{align}
As an example consider the kinetic energy, $T$. We will use $\vecx \isDefinedAs \vecr\sigma$ as a combined space-spin coordinate and the integration over $\vecx$ implies also the summation over the spin-variable
\begin{align}
\integ{\vecx} \isDefinedAs \sum_{\sigma}\integ{\vecr}.
\end{align}
The kinetic energy can now be worked out as
\begin{align}\label{eq:kinIn1RDM}
T &= \ifbool{normSlatDet}{}{\frac{1}{N!}}
\integ{\vecx_1\dotsi\ud\vecx_N}\Psi^*(\vecx_1,\vecx_2,\dotsc,\vecx_N)
\sum_{\mathclap{i=1}}^N-\half\nabla_i^2\Psi(\vecx_1,\vecx_2,\dotsc,\vecx_N) \notag \\
&= -\ifbool{normSlatDet}{\half}{\frac{1}{2N!}}\integ{\vecx_1\dotsi\ud\vecx_N}\Psi^*(\vecx_1,\vecx_2,\dotsc,\vecx_N)
\nabla_{\vecr_1}^2\Psi(\vecx_1,\vecx_2,\dotsc,\vecx_N) + \dotsb + {} \notag \\*
&\eqspace
{-\ifbool{normSlatDet}{\half}{\frac{1}{2N!}}}
\integ{\vecx_1\dotsi\ud\vecx_N}\Psi^*(\vecx_1,\vecx_2,\dotsc,\vecx_N)
\nabla_{\vecr_N}^2\Psi(\vecx_1,\vecx_2,\dotsc,\vecx_N) \notag \\
&= -\ifbool{normSlatDet}{\frac{N}{2}}{\frac{N}{2N!}}\integ{\vecx_1\dotsi\ud\vecx_N}\Psi^*(\vecx_1,\vecx_2,\dotsc,\vecx_N)
\nabla_{\vecr_1}^2\Psi(\vecx_1,\vecx_2,\dotsc,\vecx_N) \notag \\
&= -\half\integ{\vecx_1}\bigl[\nabla_{\vecr_1}^2\gamma(\vecx_1,\vecx_1')\bigr]_{\vecx_1'=\vecx_1},
\end{align}
where $\vecx \isDefinedAs \vecr\sigma$ is the combined space-spin coordinate and we used the permutational symmetry of the wave function. The \acf{1RDM} in the last line appears naturally, which is defined as
\begin{multline}\label{eq:1RDMdef}
\gamma(\vecx_1,\vecx_1') \isDefinedAs
\ifbool{normSlatDet}{N}{\frac{1}{(N-1)!}}
\integ{\vecx_2\dotsi\ud\vecx_N}\Psi(\vecx_1,\vecx_2,\dotsc,\vecx_N) \times {} \\*
\Psi^*(\vecx_1',\vecx_2,\dotsc,\vecx_N).
\end{multline}
Since the kinetic energy operator does not depend on spin, we can also integrate out (sum over) the remaining spin-degree of freedom, which gives the spin-integrated \ac{1RDM}
\begin{align}\label{eq:spinInteg1RDMdef}
\gamma(\vecr,\vecr') \isDefinedAs \sum_{\sigma}\gamma(\vecr\sigma,\vecr'\sigma).
\end{align}
The kinetic energy can now be calculated directly from the \ac{1RDM} as
\begin{align}
\label{eq:Tfrom1RDM}
T[\gamma] = -\half\integ{\vecr}\bigl[\nabla_{\vecr}^2\gamma(\vecr,\vecr')\bigr]_{\vecr'=\vecr}
= \half\integ{\vecr}\bigl[\nabla_{\vecr}\cdot\nabla_{\vecr'}\gamma(\vecr,\vecr')\bigr]_{\vecr'=\vecr},
\end{align}
where in the last step we used partial integration. The spin-density and the spin-pair-density can be expressed in terms of the wavefunction respectively as
\begin{align}
\rho(\vecx_1) &\isDefinedAs \ifbool{normSlatDet}{N}{\frac{1}{(N-1)!}}
\integ{\vecx_2\dotsi\vecx_N}\abs{\Psi(\vecx_1,\vecx_2,\dotsc,\vecx_N)}^2, \\
P(\vecx_1,\vecx_2) &\isDefinedAs \ifbool{normSlatDet}{N(N-1)}{\frac{1}{(N-2)!}}
\integ{\vecx_3\dotsi\vecx_N}\abs{\Psi(\vecx_1,\vecx_2,\dotsc,\vecx_N)}^2 .
\end{align}

\begin{exercise}
Show that the other two components of the total energy can be written as
\begin{align}
\label{eq:locPotDens}
V[\rho] 	&= \integ{\vecr}v(\vecr)\rho(\vecr), \\
\label{eq:twoBodyPair}
W[P]		&= \half\iinteg{\vecr_1}{\vecr_2}w(\abs{\vecr_1-\vecr_2})P(\vecr_1,\vecr_2),
\end{align}
where $v(\vecr)$ is a local (external) potential, such as the Coulomb interaction with the nuclei in a molecule and $w(\abs{\vecr_1-\vecr_2})$ is the interaction, which will be the Coulomb interaction $1/\abs{\vecr_1-\vecr_2}$ for non-relativistic electrons.
\end{exercise}

\noindent
The \acf{2RDM} is defined in a similar fashion as the \ac{1RDM}
\begin{multline}\label{eq:2RDMdef}
\Gamma(\vecx_1\vecx_2,\vecx_2'\vecx_1') \isDefinedAs
\ifbool{normSlatDet}{N(N-1)}{\frac{1}{(N-2)!}}
\integ{\vecx_3\dotsi\ud\vecx_N}
\Psi(\vecx_1,\vecx_2,\vecx_3,\dotsc,\vecx_N) \times {} \\
\Psi^*(\vecx_1',\vecx_2',\vecx_3,\dotsc,\vecx_N).
\end{multline}

\begin{exercise}
Check that
\begin{align}
P(\vecx_1,\vecx_2) 			&= \Gamma(\vecx_1\vecx_2,\vecx_2\vecx_1), \\
\gamma(\vecx_1,\vecx_1')		&= \frac{1}{N-1}\integ{\vecx_2}\Gamma(\vecx_1\vecx_2,\vecx_2\vecx_1'), \\
\rho(\vecx)				&= \gamma(\vecx,\vecx)
						= \frac{1}{N-1}\integ{\vecx'}\Gamma(\vecx\vecx',\vecx'\vecx).
\end{align}%
\end{exercise}

\noindent
Since we can calculate $\rho$, $\gamma$ and $P$ from the \ac{2RDM}, we actually only need the \ac{2RDM} to calculate the total energy and not the full many-body wavefunction $\Psi$~\autocite{Lowdin1955, Mayer1955}, which is just a 4-point function. 
We say that the total energy is a functional of the \ac{2RDM}, $[\Gamma]$.

Using the variational principle, we can try to find the ground state energy by minimising the energy functional over all 4-point functions, $\Xi(\vecx_1\vecx_2,\vecx_2'\vecx_1')$ one can think of. It turns out, however, that this minimum does not exist, since the functional is not bounded from below
\begin{align}
\inf_{\Xi} E[\Xi] = -\infty.
\end{align}
The problem is that we cannot freely vary over every possible 4-point function that we can imagine. We also should guarantee that there exists a wavefunction that actually corresponds to this 4-point function, so that it is an actual \ac{2RDM}. Only when we guarantee that the 4-point functions are true \acp{2RDM}, we can invoke the variational principle to argue that $E[\Gamma]$ will be bounded from below by the true ground state energy.

Otherwise, we cannot invoke the variational principle to argue that $E[\Gamma]$ will be bounded from below by the true ground state energy. So if we do not enforce that $\Xi$ is a proper \ac{2RDM}, the calculated energy will be lower than the actual ground state energy~\autocite{Tredgold1957, MizunoIzuyama1957, Ayres1958, Bopp1959, Coleman1963}. A \ac{2RDM} which can be generated by a wavefunction via~\eqref{eq:2RDMdef} is called an $N$-representable \ac{2RDM}. Limiting our search over only $N$-representable \acp{2RDM} we actually have
\begin{align}
E_{\text{gs}} = \min_{\text{$N$-representable $\Gamma$}}E[\Gamma].
\end{align}
Unfortunately, this is not a practical solution to determine the ground state energy. It turns out that it is very hard to tell for a given \ac{2RDM} if it is $N$-representable or not~\autocite{Coulson1960, Klyachko2006}. Some necessary conditions $N$-rep\-resentability conditions are known, but not all of them. Probably, imposing all $N$-rep\-resentability conditions is equally or even more difficult than solving the Schrödinger equation itself. There are some efforts to impose only some of the $N$-representability conditions and to hope for a good energy, although there is no proof that the energy would not collapse to $-\infty$. By imposing more and more $N$-representability conditions the true ground state energy is approached from below. In this sense this strategy is complementary to \ac{CI}.

\begin{figure}[t]
\begin{center}
\begin{tikzpicture}
  \tikzset{>=Stealth}
  \draw[->] (0,0,0) -- (3.5,0,0) node[anchor=north] {$\Gamma_{1111}$};
  \draw[->] (0,0,0) -- (0,2,0) node[anchor=south] {$E[\Gamma]$};
  \draw[->] (0,0,0) -- (0,0,3.5) node[anchor=north] {$\Gamma_{1212}$};
  \coordinate (A) at (1,0,1);
  \coordinate (B) at (1,1,0);
  \coordinate (C) at (2,1.5,0);
  \coordinate (D) at (3,1.5,0.5);
  \coordinate (E) at (3,1,1);
  \coordinate (F) at (2.6,0.6,1.2);
  \filldraw[draw=blue,thick,fill=blue!20,fill opacity=0.5] (A) -- (B) -- (C) -- (D) -- (E) -- (F) -- cycle;
  \draw[thick,red] (A) + (-0.8,-0.3,-0.1) -- +(2.4,0.9,0.3);
  \draw[thick,red] (A) + (0,-0.5,0.5) -- +(0,1.25,-1.25);
  \fill[black] (A) circle (1pt);
\end{tikzpicture}
\begin{tikzpicture}
  \tikzset{>=Stealth}
  \draw[->] (0,0,0) -- (3.5,0,0) node[anchor=north] {$\Gamma_{1111}$};
  \draw[->] (0,0,0) -- (0,2,0) node[anchor=south] {$E[\Gamma]$};
  \draw[->] (0,0,0) -- (0,0,3.5) node[anchor=north] {$\Gamma_{1212}$};
  \coordinate (A) at (2,0,1);
  \coordinate (B) at (1,0,2);
  \coordinate (C) at (0,1,1);
  \coordinate (D) at (0,1.5,0);
  \coordinate (E) at (0.5,1.5,-0.5);
  \coordinate (F) at (2,0.5,0);
  \filldraw[draw=blue,thick,fill=blue!20,fill opacity=0.5] (A) -- (B) -- (C) -- (D) -- (E) -- (F) -- cycle;
  \draw[thick,red] (2.5,0,0.5) -- (0.5,0,2.5);
  \draw[very thick] (A) -- (B);
\end{tikzpicture}
\end{center}
\caption{An artistic impression of the 2RDM optimization.
In the left figure, the external potential in $\hat{L}_1$ is set such that there is a unique minimum, dictated by the boundaries. The active ones are shown in red. In the right figure, a different external potential leads to a different $\hat{L}_2$, i.e. orientation of the plane. In this case, only one constraint is active and we have a degenerate minimum, i.e.\ a set of degenerate ground state 2RDMs.}
\label{fig:linearOptimisation}
\end{figure}

That the minimum values is now determined by constraints rather than the functional itself, becomes quite obvious when we inspect the functional $E[\Gamma]$ more closely
\begin{equation}
E[\gamma] = \Trace\bigl\{\hat{L}\,\hat{\Gamma}\bigr\}
= \iinteg{\vecx_1}{\vecx_2}\hat{L}
\Gamma(\vecx_1\vecx_2,\vecx_2'\vecx_1')\bigr]_{\vecx_1' = \vecx_1; \vecx_2'=\vecx_2} ,
\end{equation}
where the operator $\hat{L}$ can be defined in different ways. The reasonably symmetric manner is
\begin{equation}
\hat{L} = \frac{1}{2(N-1)}\left(-\half\bigl(\nabla^2_{\vecr_1} + \nabla^2_{\vecr_2}\bigr) + v(\vecr_1) + v(\vecr_2)\right) + \frac{1}{\abs{\vecr_1 - \vecr_2}} .
\end{equation}
The key point here is that the functional is just linear in the 2RDM, which means that the functional itself is a straight hyper plane, and the operator $\hat{L}$ is the normal of this hyper plane in some advanced mathematical sense. Since the functional is just a plane, you can just keep on sliding down till you finally hit a boundary. This is illustrated in Fig.~\ref{fig:linearOptimisation} using a low dimensional representation of the 2RDM.

The $N$-representability conditions for the density and the \ac{1RDM}~\autocite{Coleman1963} are actually known and quite simple, so one can hope that the functionals $E[\rho]$ and $E[\gamma]$ might exist. They will definitely be more complicated than $E[\Gamma]$, but it might be possible to find some good approximations to parts of the energy which are not readily expressible in terms of the density or \ac{1RDM} respectively.

\begin{exercise}
Some of the $N$-representability conditions for the \ac{2RDM} are actually quite easy to derive directly from its definition~\eqref{eq:2RDMdef}. Find these $N$-representability conditions for $\Gamma(\vecx_1\vecx_2,\vecx_2'\vecx_1')$. You can find four different kind of conditions in this manner. Consider the permutation symmetry of the wavefunction (2 conditions) and complex conjugation (1 condition). Also consider the `diagonal', $\vecx_i = \vecx_i'$. This should give you a positivity condition (an inequality).
\end{exercise}

\section{Hohenberg--Kohn theorems (1964)}
The existence of the functional $E[\rho]$ has been proved
\mkbibfootnote{on a mathematical level only conjectured} by Hohenberg and Kohn and additionally that the potential generating the ground state density is unique up to a constant shift~\autocite{HohenbergKohn1964}. These theorems form therefore the basis of \ac{DFT} and will be considered in detail. First consider the composite mapping
\begin{align}
v(\vecr) \xmapsto{\hat{H}\Psi = E\Psi} \Psi \xmapsto{\integ{\vecr}} \rho(\vecr).
\end{align}
Now we ask ourselves the question if these maps are invertible. If this is the case, then we can always go back to the potential and reconstruct everything we need to know. The proof consists of two parts, showing the invertibility of each map separately. For simplicity we only consider non-degenerate ground states, though the results can be generalised to degenerate ground states without too much difficulty~\autocite{Kohn1985,DreizlerGross1990}. You are asked to do this yourself in Exercise~\ref{ex:HKdegenerate}.

Hohenberg and Kohn only treat the second map in detail and only spend one sentence on the first map. However, this is actually the most tricky part and has only been proved to be correct in 2018 for a class of systems which includes Coulombic systems \autocite{Garrigue2018}. Therefore, we treat it here in a separate theorem, though we will not completely prove it due to the heavy math required. But we will pinpoint the problem in the proof.

\begin{theorem}[HK-I]
\label{theo:HK-I}
The map from local potentials to ground states, $v \mapsto \Psi$ is invertible modulo a constant (shift) in the potential.
\end{theorem}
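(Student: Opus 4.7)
The natural approach is a proof by contradiction in the spirit of the original Hohenberg--Kohn argument. I would begin by supposing that two local potentials $v_1(\vecr)$ and $v_2(\vecr)$, differing by more than an additive constant, nevertheless share the same (non-degenerate) ground state $\Psi$. They yield two Hamiltonians $\hat{H}_1$ and $\hat{H}_2$ with the same kinetic and interaction parts and only a different one-body multiplicative potential. Writing down both eigenvalue equations
\begin{align*}
\hat{H}_1\Psi &= E_1\Psi , &
\hat{H}_2\Psi &= E_2\Psi ,
\end{align*}
and subtracting, the kinetic energy and $\hat{W}$ cancel and one is left with
\begin{align*}
\sum_{i=1}^N\bigl(v_1(\vecr_i) - v_2(\vecr_i)\bigr)\Psi(\vecx_1,\dotsc,\vecx_N)
= (E_1 - E_2)\Psi(\vecx_1,\dotsc,\vecx_N) .
\end{align*}

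The key step is then to conclude that the multiplicative function on the left-hand side must equal the constant $E_1 - E_2$ pointwise (almost everywhere), which forces $v_1 - v_2$ to be constant and contradicts our assumption. To make this conclusion rigorous I would invoke the fact that a ground state of a Schrödinger operator with a reasonable local potential cannot vanish on a set of positive measure: this is a standard consequence of the unique continuation property (Aronszajn's theorem) applied to solutions of a second order elliptic PDE. I would state this as a black-box fact with a citation, since proving it from scratch is well beyond the scope of the lecture notes. Given $\Psi \neq 0$ a.e., the previous identity can be divided through, yielding $\sum_i(v_1(\vecr_i)-v_2(\vecr_i)) = E_1-E_2$ a.e.

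From there, fixing $N-1$ of the coordinates in a generic point and varying the remaining one shows that $v_1(\vecr) - v_2(\vecr)$ itself must be constant (equal to $(E_1-E_2)/N$) almost everywhere on $\Reals^3$. This is the desired contradiction, so the map $v \mapsto \Psi$ is injective modulo the additive constant in $v$. The main obstacle is really the unique continuation input: without it one could imagine $\Psi$ vanishing on a set large enough to allow $v_1 - v_2$ to disagree there, and the whole argument would collapse. Everything else is bookkeeping and follows directly from the structure of the electronic Hamiltonian in~\eqref{eq:elecHam}.
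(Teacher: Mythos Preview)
Your proposal is correct and follows essentially the same route as the paper: subtract the two Schr\"odinger equations, invoke non-vanishing of $\Psi$ (the paper states this as an assumption with a footnote citing Lieb; you name it as unique continuation), divide through, and conclude that $v_1-v_2$ is constant. Your write-up is slightly more explicit about the unique continuation input and about the final step from $\sum_i(v_1(\vecr_i)-v_2(\vecr_i))=\text{const}$ to $v_1-v_2=\text{const}$, but these are elaborations rather than a different argument.
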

\begin{proof}
Suppose that there are two potential $v_1$ and $v_2$ which both yield the same ground state $\Psi$, then from the Schrödinger equation we have
\begin{subequations}
\begin{align}
\bigl(\hat{T} + \hat{V}_1 + \hat{W}\bigr)\Psi &= E_1\Psi, \\
\bigl(\hat{T} + \hat{V}_2 + \hat{W}\bigr)\Psi &= E_2\Psi .
\end{align}
\end{subequations}
Subtracting both equations from each other, we find
\begin{align}
(E_1 - E_2)\Psi
= \bigl(\hat{V}_1 - \hat{V}_2\bigr)\Psi = \sum_{i=1}^N\bigl(v_1(\vecr_i) - v_2(\vecr_i)\bigr)\Psi .
\end{align}
Under the assumption that $\Psi$ does not vanish on any finite region in space,
\mkbibfootnote{To show this, one would need to prove a unique continuation property of the Schrödinger equation \autocite{Lieb1983}. More than half a century later this has been proved for a very general class of potentials by Garrigue which includes Coulomb potentials \autocite{Garrigue2018}.}
we can divide by $\Psi$ and obtain
\begin{align}
\sum_{i=1}^N\bigl(v_1(\vecr_i) - v_2(\vecr_i)\bigr) = E_1 - E_2 = \text{constant}.
\end{align}
So we find that potentials which yield the same ground state, that they can only differ by a constant.
\end{proof}

\begin{theorem}[HK-II]
\label{theo:HK-II}
The map from non-degenerate ground states, generated by local potentials, to ground state densities, $\Psi \mapsto \rho$ is invertible.
\end{theorem}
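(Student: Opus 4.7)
The strategy is a classical proof by contradiction, relying on the strict variational principle and the use of HK-I as a preparatory step. The plan is to show that two distinct non-degenerate ground states produced by (essentially different) local potentials cannot share the same one-particle density.

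First, I would set up the contradiction hypothesis. Suppose two local potentials $v_1$ and $v_2$ differing by more than a constant produce non-degenerate ground states $\Psi_1$ and $\Psi_2$ (with energies $E_1$ and $E_2$), and assume that $\rho_1 = \rho_2 \sAdenifeDsi \rho$. By HK-I (Theorem~\ref{theo:HK-I}), because $v_1$ and $v_2$ differ by more than a constant, the ground states $\Psi_1$ and $\Psi_2$ must be genuinely different wavefunctions (not just differing by a phase). This is the point where HK-I is indispensable: it guarantees that $\Psi_1 \neq \Psi_2$ up to any irrelevant overall phase, so the inequalities below will be strict.

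Next, I would apply the variational principle twice. Since $\Psi_2 \neq \Psi_1$ and $\Psi_1$ is the \emph{non-degenerate} ground state of $\hat{H}_1 = \hat{T} + \hat{V}_1 + \hat{W}$, we get the strict inequality
\begin{align}
E_1 < \brakket{\Psi_2}{\hat{H}_1}{\Psi_2}
= \brakket{\Psi_2}{\hat{H}_2}{\Psi_2} + \brakket{\Psi_2}{\hat{V}_1 - \hat{V}_2}{\Psi_2}
= E_2 + \integ{\vecr}\bigl(v_1(\vecr) - v_2(\vecr)\bigr)\rho(\vecr),
\end{align}
where I used~\eqref{eq:locPotDens} to rewrite the expectation value of the local one-body operator $\hat{V}_1 - \hat{V}_2$ against the density produced by $\Psi_2$, namely $\rho$. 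Interchanging the roles of $1$ and $2$ yields the symmetric inequality
\begin{align}
E_2 < E_1 + \integ{\vecr}\bigl(v_2(\vecr) - v_1(\vecr)\bigr)\rho(\vecr).
\end{align}

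Finally, I would add the two strict inequalities. The integral terms cancel exactly because $\rho_1 = \rho_2 = \rho$, and we are left with $E_1 + E_2 < E_1 + E_2$, which is absurd. This forces the starting assumption $\rho_1 = \rho_2$ to fail, so the map $\Psi \mapsto \rho$ is injective on the set of ground states considered. The main subtle point, and really the only thing that needs care, is ensuring both inequalities are \emph{strict}: this is precisely the non-degeneracy hypothesis combined with HK-I. If the ground state were degenerate, the variational principle would only yield $\leq$ and the contradiction would collapse; extending to the degenerate case (as in Exercise~\ref{ex:HKdegenerate}) would require working with sets of ground-state densities rather than single ones, but that is beyond what is needed here.
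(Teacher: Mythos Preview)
Your proposal is correct and is essentially the same reductio-ad-absurdum argument as the paper's proof: apply the strict variational principle in both directions, use the locality of $\hat{V}_1-\hat{V}_2$ to reduce the cross terms to integrals over the common density, and add the inequalities to obtain $E_1+E_2<E_1+E_2$. The only cosmetic difference is that you start from two potentials and invoke HK-I to secure $\Psi_1\neq\Psi_2$, whereas the paper starts directly from the assumption $\Psi_1\neq\Psi_2$ (which is the natural hypothesis for injectivity of $\Psi\mapsto\rho$); either framing leads to the same chain of inequalities.
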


\begin{proof}
The proof goes by reductio ad absurdum. Suppose that the statement is not true, so that there exist two different non-degenerate ground state wavefunctions, $\Psi_1 \neq \Psi_2$, that both yield the same ground state density $\rho$. Then using the variational principle, we have
\begin{align}
E_1 &= \brakket{\Psi_1}{\hat{T} + \hat{V}_1 + \hat{W}}{\Psi_1} \notag \\*
&< \brakket{\Psi_2}{\hat{T} + \hat{V}_1 + \hat{W}}{\Psi_2}
= \brakket{\Psi_2}{\hat{T} + \hat{V}_2 + \hat{W}}{\Psi_2} +
\brakket{\Psi_2}{\hat{V}_1 - \hat{V}_2}{\Psi_2} \notag \\*
&\hphantom{{}< \brakket{\Psi_2}{\hat{T} + \hat{V}_1 + \hat{W}}{\Psi_2}}
{} = E_2 + \integ{\vecr}\rho(\vecr)\bigl(v_1(\vecr) - v_2(\vecr)\bigr).
\end{align}
Turning the role of the indices 1 and 2 around, we additionally find
\begin{align}
E_2 < E_1 + \integ{\vecr}\rho(\vecr)\bigl(v_2(\vecr) - v_1(\vecr)\bigr).
\end{align}
Adding both inequalities we find
\begin{align}
E_1 + E_2 < E_1 + E_2
\end{align}
and therefore, our initial assumption that both ground states $\Psi_1$ and $\Psi_2$ can yield the same density is incorrect.
\end{proof}

\noindent
The combination of the \acf{HK} theorems (\nameref{theo:HK-I} and~\nameref{theo:HK-II}) forms the foundation of \acf{DFT}, since they show that
\begin{align}
v(\vecr) \xmapsback{\text{\nameref{theo:HK-I}}} \Psi \xmapsback{\text{\nameref{theo:HK-II}}} \rho(\vecr),
\end{align}
so that we are allowed to write $v[\rho]$ as well as $\Psi[\rho]$. Since the ground state wave function is a functional of the density, $\Psi[\rho]$, also every observable is a functional of the density
\begin{align}
O[\rho] = \brakket{\Psi[\rho]}{\hat{O}}{\Psi[\rho]}.
\end{align}
In particular the ground state energy is a functional of the density
\begin{align}
E[\rho] = \brakket{\Psi[\rho]}{\hat{H}}{\Psi[\rho]}
= F_{\text{\acs{HK}}}[\rho] + \integ{\vecr}\rho(\vecr)v(\vecr),
\end{align}
where $F_{\text{\acs{HK}}}[\rho] \isDefinedAs \brakket{\Psi[\rho]}{\hat{T} + \hat{W}}{\Psi[\rho]}$ is the \ac{HK} functional and simply collects the parts of the energy which are not explicit density functionals. The \ac{HK} functional is often called a `universal' functional, because it does not depend on the particular system considered. The system (the positions of the nuclei and local external fields) only enters via the local potential $v$.

\begin{exercise}
Try to proof the Hohenberg--Kohn theorems in  the same manner for the \ac{1RDM}~\eqref{eq:1RDMdef} and non-local one-body potentials. Non-local one-body potentials are similar to the exchange potential in the sense that they act via an integral kernel on one-body states
\begin{align}
\hat{v}\psi(\vecx) = \bigl(\hat{v}\psi\bigr)(\vecx) = \integ{\vecx'}v(\vecx,\vecx')\psi(\vecx') \, .
\end{align}
On a many-body state these non-local potentials act on each coordinate individually. So for a many-body wave function the non-local one-body potential becomes
\begin{align}
\hat{V} = \sum_{i=1}^N\hat{v}(\vecx^{\vphantom{'}}_i,\vecx'_i) .
\end{align}
Due to the one-body nature, the expectation value reduces to a contraction with the \ac{1RDM}
\begin{align}
\brakket{\Psi}{\hat{V}}{\Psi} = \iinteg{\vecx}{\vecx'}v(\vecx',\vecx)\gamma(\vecx,\vecx').
\end{align}
Consider now the mappings $\hat{v} \mapsto \Psi \mapsto \gamma$.
Investigate whether you can reuse the proofs for \nameref{theo:HK-I} and~\nameref{theo:HK-II} that we used to establish \ac{DFT}.
If you can not reuse the proofs of the \ac{HK} theorems to establish a \ac{1RDM} functional theory, why does it not work?
\end{exercise}

\begin{exercise}\label{ex:HKdegenerate}
In this exercise we consider in which sense the Hohenberg--Kohn theorems can be generalised to degenerate ground states.
\begin{subexercise}
\item What changes in \nameref{theo:HK-I} if we allow for degenerate ground states?
\item Is it possible to generalise \nameref{theo:HK-II} to degenerate states? Consider the two different cases separately: the non-degenerate case ($E_1 \neq E_2$) and the degenerate case ($E_1 = E_2$).
\item Is it still possible to establish \ac{DFT} when allowing for degenerate states? In particular, are we still allowed to write $E[\rho]$?
\end{subexercise}
\end{exercise}

\begin{exercise}
Show $E[\rho_{\text{gs}}] \leq E[\rho]$, where $\rho_{\text{gs}}$ is the ground state density.
\end{exercise}

\subsection{Constrained-search formulation (1979)}
Though we have shown that the functional $E[\rho]$ exists on a formal level, we do not have an explicit form for practical calculations. The \ac{HK} functional, $F_{\text{\acs{HK}}}[\rho]$, only provides a very abstract expression for the universal functional. Here we will construct a more explicit expression for the universal functional, which has better mathematical properties and serves as a more convenient starting point to derive approximations. In fact, we will always want to resort to approximations to make \ac{DFT} useful, as an exact functional should always be too complicated (more complicated than the Schrödinger equation probably) to use in practice.

\begin{figure}[t]
  \centering
  \includegraphics[width=0.7\textwidth]{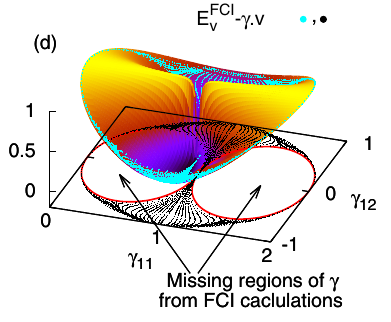}
  \caption{Taken from \autocite{CohenMori-Sanchez2016} without permission. The 3D lanscape is the constrained-search universal \acs{1RDM} functional as a function of the two independent 1RDM components for a two-orbital Hubbard system. The projection indicates the regions where the 1RDM is $v$-representable, i.e.\ the 1RDMs for which there exists a (non-local) potential generating this 1RDM via the ground state. In the two red oval encircled regions [Missing regions of $\gamma$ from FCI caclulations], the 1RDM is not $v$-representable.}
  \label{fig:1RDMlandscape}
\end{figure}

The mathematical motivation to construct a different expression for the universal functional is that the \ac{HK} functional $F_{\text{\acs{HK}}}[\rho]$ is only defined for $v$-representable densities. With $v$-representable densities we mean densities which can be generated by a \emph{ground state}, so that $\Psi[\rho]$ in HK construction exists. This is important when we want to minimise the energy by making variations in the density. We should never hit a density for which $F_{\text{\acs{HK}}}[\rho]$ does not exists, since then we would get stuck in our optimisation attempt. One would expect that all reasonably well-behaved densities (positive, smooth and normalisable) are $v$-representable, but this turns out to be not the case unfortunately~\autocite{EnglischEnglisch1983}.

An example is shown in Fig.~\ref{fig:1RDMlandscape} taken from Ref.~\autocite{CohenMori-Sanchez2016}. For technical reasons, this is not the DFT functional, but the 1RDM functional. In the $xy$-plane, the region of 1RDMs is plotted for which a (non-local) potential could be found which generates that particular 1RDM. In the two red encircled oval regions, no such potential can be found, so the 1RDM is not $v$-representable there and the \ac{HK} functional does not exist in those regions. It nicely demonstrates that the topology of the domain of the \ac{HK} functional can be quite nasty.

To avoid this problem, the domain of the universal functional was extended by Levy~\autocite{Levy1979}. He observed that the search for the ground state energy can be split as
\begin{align}
E_{\text{gs}} &= \min_{\Psi}\brakket{\Psi}{\hat{T} + \hat{V} + \hat{W}}{\Psi} \notag \\
&= \min_{\rho}\left(\min_{\Psi \to \rho}\brakket{\Psi}{\hat{T} + \hat{W}}{\Psi} + \integ{\vecr}\rho(\vecr)v(\vecr)\right) .
\end{align}
So the only thing we did is to write the minimisation over all wave functions into two parts. First we vary over all densities and inside these density variations we consider all wave functions which yield this density, $\Psi \to \rho$.
The universal functional is now readily extracted as
\begin{align}\label{eq:FLLdef}
F_{\text{LL}}[\rho] \isDefinedAs \min_{\mathclap{\Psi \to \rho}} \brakket{\Psi}{\hat{T} + \hat{W}}{\Psi}.
\end{align}
The advantage is now that the domain of this functional consists of all $N$-representable densities, i.e.\ densities that can be produced by some wave function. In fact all reasonable densities (positive, smooth and normalisable) are $N$-representable~\autocite{Harriman1981}. Since the characterization of $N$-representable densities is so much easier than for $v$-representable densities, making proper variations also becomes easier.

The $v$-representability of the density is still an issue when taking a functional derivative of $F_{\text{LL}}$ with respect to the density and therefore remains a topic of research~\autocite{Lieb1983, EnglischEnglisch1984a, EnglischEnglisch1984b, Leeuwen2003, Eschrig2003, Lammert2006a, Lammert2006b, Lammert2010, KvaalEkstromTeale2014}. In physics, lattice systems (only a grid of specific points is used) are also of much interest and these come with their own type of challenges concerning $v$-representability \autocite{ChayesChayesRuskai1985, PenzLeeuwen2021}. For a more thorough introduction to these technical matters, consult the excellent text-book by Dreizler and Gross~\autocite{DreizlerGross1990} or the review by Van Leeuwen~\autocite{Leeuwen2003}.

As mentioned before, this formulation of the exact functional is not only useful from a theoretical point of view, but can also be used as a starting point to formulate different types of approximations. Most well-known is the \acf{KS} approximation to the kinetic energy, which will be discussed in Sec.~\ref{sec:KSsystem}. The constrained-search formulation can also be used to approximate the interaction part, though this is quite involved~\autocite{Lieb1983, SeidlPerdewLevy1999, Seidl1999, SeidlGori-GiorgiSavin2007, Gori-GiorgiVignaleSeidl2009, SeidlGori-GiorgiSavin2007}. For the interaction part, we will only consider the traditional route.

Since the interaction term is mainly electrostatic, a reasonable starting point is the classical Coulomb term
\begin{align}\label{eq:HartreeDef}
W_{\text{H}}[\rho] \isDefinedAs \half\iinteg{\vecr_1}{\vecr_2}\frac{\rho(\vecr_1)\rho(\vecr_2)}{\abs{\vecr_1 - \vecr_2}},
\end{align}
This term is commonly called the Hartree term, because Hartree only took this classical interaction term into account~\autocite{Hartree1928}.
\mkbibfootnote{Probably the most important reason not to call this term the Coulomb term is that Coulomb starts with a `c' which is already in use as an abbreviation for correlation (see \acl{HF} and later in these lecture notes).}
We will consider refinements to the Hartree term, the \acf{xc} term, $W_{\text{\acs{xc}}}$, later. First we will turn our attention to the kinetic energy, since it is not so easy to find a reasonable approximation for the kinetic energy which captures the important quantum effects.

\begin{exercise}
In \ac{1RDM} functional theory, the kinetic energy is an explicit functional of the \ac{1RDM}, cf.~\eqref{eq:kinIn1RDM}, so only the interaction part of the energy has no explicit form.
Use the construction by Levy to write down the two-body term of the energy, $W$~\eqref{eq:twoBodyPair}, as an exact functional of the \ac{1RDM}.
\end{exercise}

\section{Approximating the kinetic energy}
The most important part to approximate reasonably well is the kinetic energy term, because this term is fundamentally different from its classical counterpart. The interactions with the nuclei and between the electrons is still simply the Coulomb interaction in non-relativistic quantum mechanics and crude approximations often suffice.

\subsection{Thomas--Fermi Theory (1926)}
Already in the early days of quantum mechanics, physicists tried to simplify quantum mechanics to make calculations with pen and paper more feasible.
\mkbibfootnote{The computer did not exist at that time. Programmable computers were only developed in during the 2\textsuperscript{nd} world war mainly to break the Enigma code used by the Nazis. The computers around that time were on par with human speed to perform basic mathematical operations (multiplication, addition, etc.). The main advantage was that computers do not tire and do not need coffee breaks and sleep~\autocite{Feynman2005}.}
The first \ac{DFT} approximations, therefore, date back long ago before the \ac{HK} theorems were formulated and were even formulated before \ac{HF}. The oldest approximation to the kinetic energy in terms of the density alone is due to Thomas~\autocite{Thomas1927} and Fermi~\autocite{Fermi1927}. Around 1926 they derived independently the following approximation for the kinetic energy
\begin{align}\label{eq:TFapproximation}
T_{\text{TF}}[\rho] &\isDefinedAs C_{\text{TF}}\integ{\vecr}\rho(\vecr)^{\nfrac{5}{3}}, &
C_{\text{TF}} &\isDefinedAs \frac{3}{10}\bigl(3\pi^2\bigr)^{\nfrac{2}{3}}.
\end{align}
This expression might look very strange, but it is simply the kinetic energy of a non-interacting \ac{HEG}, where the homogeneous density has been replaced by the inhomogeneous density, $\rho \to \rho(\vecr)$. The \ac{HEG} is a fictitious system with a constant electron density in an infinite box. The `nuclei' are also smeared out as a homogeneous background charge over the whole space, to counteract the Coulomb repulsion between the electrons.  so the \ac{HEG} is also sometimes called the Jellium model, as the smeared out nuclei resemble a jelly. The \ac{HEG} provides a reasonable model for the electron delocalization in metals and qualitatively reproduces features of real metals, e.g.\ plasmons and Wigner crystallization.

Combining the Thomas--Fermi kinetic energy with the Hartree approximation to the interaction term, we have a very simply approximation to the energy functional
\begin{align}
E[\rho] \approx E_{\text{TF}}[\rho] \isDefinedAs T_{\text{TF}}[\rho] + V[\rho] + W_{\text{H}}[\rho].
\end{align}
Exchange and correlation effects were not (well) known at that time, so those terms were neglected. Unfortunately, this approximation performs very poorly. The main failures of $E_{\text{TF}}[\rho]$ when solved self-consistently are
\begin{itemize}
\item too low total energy for atoms (54\% for hydrogen),
\item density decays as $r^{-6}$ instead of $\e^{-\alpha r}$,
\item no shell structure in atoms,
\item all negative ions are predicted to be unstable,
\item molecules do not exist (Teller non-binding theorem~\autocite{Teller1962}).
\end{itemize}
The main term to blame for its bad performance is the Thomas--Fermi approximation to the kinetic energy. This statement can be validated by comparing with Hartree calculations (\acs{HF} without exchange)~\autocite{Hartree1928}.

\begin{exercise}\label{ex:Theg}
Derive the Thomas--Fermi approximation for the kinetic energy by deriving the kinetic energy for the \ac{HEG}. If you have never seen calculations on the \ac{HEG}, this can be quite a challenge, so here are some steps to help you out.
\begin{subexercise}
\item The first step is to solve the non-interacting Schrödinger equation for the electrons in a finite box with sides of length $L$ and periodic boundary conditions. The one-electron solutions are $\psi_{\veck}(\vecr) = \Omega^{-\nhalf}\e^{\im\veck\cdot\vecr}$, where $\Omega = L^3$ is the volume and the wave vectors, $\veck$, are quantized as
\begin{align}\label{eq:kVecDef}
k_i &= \frac{2\pi n_i}{L}	&
i &= x, y, z &
n_i &= 0, \pm 1, \pm 2, \dotsc
\end{align}
with energies $\epsilon(\veck) = \half\abs{\veck}^2$. Since I have already given you the solutions, you only need to check whether they are correct.

\item Check that the density is constant.

\item Calculate the total number of electrons in the limit of a large box using Aufbau, i.e.\ we occupy all states with a wave vector $\abs{\veck} \leq k_F$. Assume that the system is spin compensated, so there is an equal amount of spin up and spin down electrons. In the limit of a large box, you can replace the summation over the states by integrals, so
\begin{align}
\sum_{n_x,n_y,n_z} = \frac{\Omega}{(2\pi)^3}\sum_{k_x,k_y,k_z}
\xrightarrow{L \to \infty} \frac{\Omega}{(2\pi)^3}\integ{\veck}.
\end{align}
In the first equality we compensated for the fact that the interval between $k$-points is $\Delta k_i = 2\pi/L$ instead of unity, so the volume in the second summation increased by $(2\pi/L)^3$ which is compensated for by the prefactor. For large volumes, $L \to \infty$, we replaced the sum by an integral, because $\Delta k_i = k_{i+1} - k_i \to 0$, cf.~\eqref{eq:kVecDef}. Note that we still retained the volume, $\Omega$, in front of the integral. All expectation values of size-extensive operators diverge due to this volume term. We will therefore retain this volume factor and divide by it on both sides, to obtain a finite expectation value for density-like (size-intensive) quantities.

Further note that the orbital energies behave as $\epsilon(\veck) = \half \abs{\veck}^2$, so the lowest (so occupied) states will have $\veck$-vectors contained in a sphere with some radius $k_F$.
Show that given a density $\rho = N/\Omega$, the length of the maximum occupied wave vector, the Fermi wave vector, is related to the density as $k_F \isDefinedAs (3\pi^2\rho)^{\nfrac{1}{3}}$.

\item Calculate the kinetic energy density $\tau = T/\Omega$ in the limit of a large box.

\item Construct the Thomas--Fermi approximation to the kinetic energy~\eqref{eq:TFapproximation} from the \ac{HEG} kinetic energy density. The assumption you need to use is that the kinetic energy density at a point in space can be approximated by the kinetic energy of a \ac{HEG} with the electron density at that point. To get the total kinetic energy, integrate over the kinetic energy density.
\end{subexercise}
\end{exercise}

\subsection{The Kohn--Sham system (1965)}
\label{sec:KSsystem}

The \acf{KS} system is a system composed of non-interacting particles with a prescribed density~\autocite{KohnSham1965}. The energy functional for the \ac{KS} system is simply the functional $F_{\text{LL}}$~\eqref{eq:FLLdef} without the interaction term, so only the kinetic energy is left
\begin{align}\label{eq:TsDef}
T_s[\rho] \isDefinedAs \min_{\mathclap{\Psi_s \to \rho}}\brakket{\Psi_s}{\hat{T}}{\Psi_s}.
\end{align}
This functional is simple enough to allow us to actually perform the minimisation. Since only a one-body part is present, we suspect that the wavefunction $\Phi$ which achieves this minimisation will be a Slater determinant composed of single-particle orbitals
\mkbibfootnote{Since we do now have the constraint that $\Psi_s$ should yield a prescribed density $\rho$, we are not certain anymore that the minimiser will be a Slater determinant. Nevertheless, it seems that the densities typically under consideration allow for a Slater determinant as minimiser, though exceptions are known~\autocite{Levy1982, SchipperGritsenkoBaerends1998}}
\begin{align}\label{eq:SlaterDetDef}
\Psi_s(\vecx_1,\dotsc,\vecx_N) = \ifbool{normSlatDet}{\frac{1}{\sqrt{N!}}}{}\begin{vmatrix}
\phi_1(\vecx_1)	&\phi_2(\vecx_1)	&\ldots	&\phi_N(\vecx_1)	\\
\phi_1(\vecx_2)	&\phi_2(\vecx_2)	&\ldots	&\phi_N(\vecx_2)	\\
\vdots		&\vdots			&\ddots	&\vdots			\\
\phi_1(\vecx_N)	&\phi_2(\vecx_N)	&\ldots	&\phi_N(\vecx_N)
\end{vmatrix}.
\end{align}
The \ac{1RDM} of a Slater determinant is readily evaluated to be
\begin{align}\label{eq:Slater1RDM}
\gamma_s(\vecx,\vecx') = \sum_{i=1}^N\phi_i(\vecx)\phi^*_i(\vecx').
\end{align}
So the kinetic energy can be expressed in terms of the orbitals as
\begin{align}
\tilde{T}_s[\{\phi,\phi^*\}] = -\half\integ{\vecx}\bigl[\nabla_{\vecr}^2\gamma_s(\vecx,\vecx')\bigr]_{\vecx'=\vecx}
= -\half\sum_{i=1}^N\brakket{\phi_i}{\nabla^2}{\phi_i},
\end{align}
where $[\{\phi,\phi^*\}]$ means that $\tilde{T}_s$ is a functional depending on the orbitals constituting the Slater determinant and their complex conjugates. To calculate the functional $T_s[\rho]$, we need to minimise $\tilde{T}_s$ not only under the constraint that the orbitals $\{\phi\}$ are orthonormal, but also under the constraint of a prescribed density, so we introduce the following Lagrangian
\begin{multline}
L_{\rho}[\{\phi,\phi^*\},v_s,\mat{\epsilon}] \isDefinedAs \tilde{T}_s[\{\phi,\phi^*\}] -
\sum_{ij}\epsilon_{ji}\bigl(\braket{\phi_i}{\phi_j} - \delta_{ij}\bigr) + {} \\
\integ{\vecr}v_s(\vecr)\biggl(\sum_{i,\sigma}\abs{\phi_i(\vecr\sigma)}^2 - \rho(\vecr)\biggr).
\end{multline}
A necessary condition for a minimum is that if we make small variations, $\delta L_{\rho} = 0$ to first order, so if we consider variations due to perturbations in the orbitals, we have
\begin{align}
0 = \delta L_{\rho} = \integ{\vecx}\sum_{i=1}^N\left(\delta\phi^*_i(\vecx)\frac{\delta L_{\rho}}{\delta\phi^*_i(\vecx)} +
\frac{\delta L_{\rho}}{\delta\phi_i(\vecx)}\delta\phi_i(\vecx)\right).
\end{align}
We have seen a similar expression in the derivation of the \ac{HF} equations. Since $\delta L_{\rho}$ needs to vanish for arbitrary variations, the functional derivatives need to vanish. For the derivative with respect to $\phi_i^*(\vecx)$, we find
\begin{align}\label{eq:dLdpsiStar}
0 = \frac{\delta L_{\rho}}{\delta\phi^*_i(\vecx)} = -\half\nabla^2\phi_i(\vecx) - \sum_{j=1}^N\phi_j(\vecx)\epsilon_{ji} + v_s(\vecr)\phi_i(\vecx).
\end{align}
which can be rearranged to
\begin{align}
\Bigl(-\half\nabla^2 + v_s(\vecr)\Bigr)\phi_i(\vecx) = \sum_{j=1}^N\phi_j(\vecx)\epsilon_{ji}.
\end{align}
The derivative with respect to $\phi_i(\vecx)$ is less straightforward, due to the Laplacian. There are different ways to proceed.
\mkbibfootnote{An alternative is to use the the functional derivative of a functional of the simple form $\integ{\vecs}L[f(\vecs),\nabla f(\vecs),\nabla^2 f(\vecs),\dotsc]$ can be calculated as
\begin{align*}
\frac{\delta L_{\rho}}{\delta f(\vecs)} = \frac{\du L}{\du f(\vecs)} - \nabla\cdot\frac{\du L}{\du \nabla f(\vecs)} + \nabla^2\frac{\du L}{\du \nabla^2 f(\vecs)} +\dotsb,
\end{align*}
where $\vecs$ is some arbitrary vector. This can easily be established with successive partial integrations. Note that for $\vecs = t$, $f(\vecs) = q(t)$, $L = T - V$ and only derivatives up to first order, one gets the usual Euler--Lagrange equations from classical mechanics.}
Probably the easiest way is to proceed as we did for \ac{HF} and use that the kinetic energy operator is self-adjoint, so the kinetic energy can alternatively be expressed as
\begin{align}
\tilde{T}_s = \half\sum_{i=1}^N\braket{\nabla\phi_i}{\nabla\phi_i} = -\half\sum_{i=1}^N\braket{\nabla^2\phi_i}{\phi_i}.
\end{align}
Now it is easy to take the derivative with respect to $\phi_i(\vecx)$ which gives
\begin{align}\label{eq:dLdpsi}
\Bigl(-\half\nabla^2 + v_s(\vecr)\Bigr)\phi^*_i(\vecx) = \sum_{j=1}^N\epsilon_{ij}\phi^*_j(\vecx)
\end{align}
when equated to zero.
In both equations~\eqref{eq:dLdpsiStar} and~\eqref{eq:dLdpsi} we still have the sum over the Lagrange multiplier matrix $\mat{\epsilon}$, which we would rather like to be diagonal to interpret them as orbital energies. We can proceed in exactly the same manner as in the derivation of the \ac{HF} SCF equations (Sec.~\ref{sec:HF-SCF}).
Diagonalising the Lagrange multiplier matrix brings the stationarity equations to canonical form
\begin{align}\label{eq:KSequations}
\Bigl(-\half\nabla^2 + v_s(\vecr)\Bigr)\phi_k(\vecx) = \varepsilon_k\phi_k(\vecx).
\end{align}
These are the \ac{KS} equations which yield \ac{KS} orbitals and \ac{KS} orbital energies.

Some remarks are in order.
\begin{itemize}
\item To obtain the minimum expectation value for $T_s$, we select the \ac{KS} orbitals with the lowest orbital energies to constitute our Slater determinant~\eqref{eq:SlaterDetDef}. This is the Aufbau principle, which intuitively makes sense, but it takes a more careful derivation to mathematically justify this assumption and to handle the possibility of fractionally occupied orbitals~\autocite{GiesbertzBaerends2010}. The \ac{KS} orbitals from which the Slater determinant is composed~\eqref{eq:SlaterDetDef} are called occupied orbitals and unused orbitals are called unoccupied\slash{}virtual orbitals.

\item The Lagrange multiplier for the density, $v_s(\vecr)$, has the form of a potential (\ac{KS} potential). This is not so strange, since by modifying the potential we can influence the density profile: too much density $\to$ increase the potential to push the particles away and visa versa. In this manner the \ac{KS} potential, $v_s(\vecr)$, can be obtained self-consistently by solving the \ac{KS} equations~\eqref{eq:KSequations}, calculating the density from the orbitals and by comparing to the target density, one increases\slash{}decreases to potential accordingly till the orbitals yield the required density. Further, the \ac{HK} theorems tell us that this potential, $v_s(\vecr)$, is unique for a given density modulo a constant, so there is a unique solution for the \ac{KS} potential up to a shift which we do not need to worry about. Note that this shift in the potential exactly corresponds to an overall phase factor of the wave function, whose relevance was discarded before when dealing with the Lagrange multipliers $\mat{\epsilon}$.

\item Often we deal with closed shell systems, so the amount of spin-up and spin-down electrons is the same. If there are no magnetic interactions, the wave function will be an eigenstates of $\hat{S}_z$ operator, so the orbitals come in pairs (spin-up and spin-down) with the same spatial part
\begin{align}\label{eq:restrictedOrbs}
\phi_k(\vecx) = \phi_k(\vecr\sigma) = \begin{cases}
\psi_{(k+1)/2}(\vecr)\alpha(\sigma)	&\text{for $k$ odd} \\
\psi_{k/2}(\vecr)\beta(\sigma)		&\text{for $k$ even}.
\end{cases}
\end{align}
In all expressions, the summation over the spin can now be performed explicitly, so all expressions simplify somewhat, because only half of the orbitals needs to be calculated. For example, the spin-integrated \ac{1RDM}~\eqref{eq:spinInteg1RDMdef} becomes
\begin{align}\label{eq:equalSpin1RDM}
\gamma_s(\vecr,\vecr') = 2\sum_{i=1}^{N/2}\psi_i(\vecr)\psi^*_i(\vecr').
\end{align}
\end{itemize}
It turns out that $T_s[\rho]$ provides a very good approximation to the real kinetic energy, $T[\rho]$. The main reason is that the quantum nature of the kinetic energy operator is properly taken into account. Though originally not intended by Kohn and Sham in 1965, their approximation to the kinetic energy~\autocite{KohnSham1965} has been crucial for the success of \ac{DFT} in practice.

\begin{exercise}
Check that the \ac{1RDM} for a Slater determinant has indeed the simple form as is stated in~\eqref{eq:Slater1RDM}. Note that the task is almost identical to the derivation of the Slater--Condon rules for one-body operators, so you could follow the same procedure.
\end{exercise}

\begin{exercise}
Check that the noninteracting, spin-integrated \ac{1RDM}~\eqref{eq:spinInteg1RDMdef} for a non-magnetic closed shell system is indeed given by~\eqref{eq:equalSpin1RDM}.
\end{exercise}

\begin{exercise}
Show that the functional derivative of the classical Coulomb interaction~\eqref{eq:HartreeDef} with respect to the density, gives the classical Coulomb potential. There are two ways of obtaining this derivative. The first one is to follow the same approach as I used in class for the energetic contribution from the local potential, i.e.\ to consider the functional as the continuous analogue of the gradient. The other option is to work out $\delta W$ due to variations in the density and to collect the first order terms in the form
\begin{align}
\integ{\vecr}\frac{\delta W_{\text{H}}}{\delta\rho(\vecr)}\delta\rho(\vecr)
\end{align}
\end{exercise}

\begin{exercise}\label{ex:singleOrbital}
Calculate the \ac{KS} potential, $v_s(\vecr)$, for a singlet two-electron system. Note that only one spatial \ac{KS} orbital is occupied in this case, so you can express the \ac{KS} potential in the terms of the density and the orbital energy.
\end{exercise}

\subsection{Connecting the Kohn--Sham system to the real system}
Our current formulation of the \ac{KS} system requires an input density and this should be the density of the real interacting system of course. We will assume that every interacting $v$-representable density is also non-interacting $v$-representable, i.e.\ that a potential $v_s(\vecr)$ exists which is able to make the two densities equal. This is still an open question for the $T_s[\rho]$ under consideration here~\eqref{eq:TsDef}. However, there exists a suitable generalisation for the kinetic energy functional, which avoids this potential problem partially~\autocite{Lieb1983, DreizlerGross1990}.

Assuming that always a $v_s(\vecr)$ can be found which makes the density of the non-interacting system equal to the one of the interacting system $\rho_s(\vecr) = \rho(\vecr)$, we will derive an expression for the \ac{KS} potential that takes care of this. We first note that for a given density, the optimal orbitals and Lagrange multipliers for the Lagrangian $L_{\rho}$ are functionals of the density, so we write these optimal quantities as $\phi_k[\rho](\vecx)$, $v_s[\rho](\vecr)$ and $\mat{\epsilon}[\rho]$. Further note that the Lagrangian at these optimal values exactly equals $T_s[\rho]$, i.e.
\begin{align}
T_s[\rho] = L_{\rho}\bigl[\{\phi_i[\rho],\phi^*_i[\rho]\}, v_s[\rho], \mat{\epsilon}[\rho]\bigr].
\end{align}
Since the values of $T_s$ and $L_{\rho}$ are at each density the same, also the derivatives with respect to the density are the same, so using the chain rule we have
\begin{align}\label{eq:TsVsRel}
\frac{\delta T_s}{\delta\rho(\vecr)} &=
\sum_i\integ{\vecx'}\!\!\left(\left.\frac{\du L_{\rho}}{\du \phi_i(\vecx')}\right|_{\rho}\frac{\delta\phi_i(\vecx')}{\delta\rho(\vecr)} +
\left.\frac{\du L_{\rho}}{\du \phi^*_i(\vecx')}\right|_{\rho}\frac{\delta\phi^*_i(\vecx')}{\delta\rho(\vecr)}\right) + {} \\
&\eqspace
\integ{\vecr'}\!\!\left.\frac{\du L_{\rho}}{\du v_s(\vecr')}\right|_{\rho}\frac{\delta v_s(\vecr')}{\delta\rho(\vecr)} +
\left.\sum_{i}\frac{\du L_{\rho}}{\du \epsilon_{ij}}\right|_{\rho}\frac{\delta\epsilon_{ij}}{\delta\rho(\vecr)} + 
\left.\frac{\du L_{\rho}}{\du\rho(\vecr)}\right|_{\rho}
= -v_s[\rho](\vecr). \notag
\end{align}
We used here that the Lagrangian $L_{\rho}$ is stationary (derivatives zero) with respect to the orbitals and Lagrange multipliers at the optimum values $\phi_k[\rho](\vecx)$, $v_s[\rho](\vecr)$ and $\varepsilon_k[\rho]$, so only the term where the density appears explicitly survives.

The next step is to rewrite the total energy of the real interacting system in terms of $T_s$ as
\begin{align}
E[\rho] = F[\rho] + V[\rho] = T_s[\rho] + V[\rho] + \underbrace{F[\rho] - T_s[\rho]}_{\sAdenifeDsi E_{\text{Hxc}}[\rho]},
\end{align}
where $E_{\text{Hxc}}[\rho]$ is the Hartree-exchange-correlation energy.
Often the classical Coulomb (Hartree) part~\eqref{eq:HartreeDef} is treated explicitly and the remaining \acf{xc} energy
\mkbibfootnote{Note that \ac{xc} energy is just a fancy name for all the difficult parts of the energy which are hard to calculate for us, so it is a measure of our inability to do the real calculation. Richard Feynman therefore prefers to call the \ac{xc} energy the stupidity energy~\autocite{Feynman1972}.}
 is decomposed in a kinetic and an interaction part
\begin{align}
E_{\text{\acs{xc}}}[\rho] \isDefinedAs E_{\text{Hxc}}[\rho] - W_{\text{H}}[\rho]
= \underbrace{T[\rho] - T_s[\rho]}_{\sAdenifeDsi T_c[\rho]} + W_{\text{\acs{xc}}}[\rho].
\end{align}
Since we want to minimise the total energy of the fully interacting system, we actually want to optimise the energy of the interacting system with respect to density variations, $\delta\rho(\vecr)$. We should keep in mind, however, that we fixed the number of particles, so we have the following constraint on the allowed density variations
\begin{align}
\integ{\vecr}\delta\rho(\vecr) = 0 \, .
\end{align}
This means that if we consider first order variations in the energy due to such variations in the density, we find
\begin{align}
0 &= \delta E = \integ{\vecr}\frac{\delta E}{\delta \rho(\vecr)} \delta\rho(\vecr) &
&\Rightarrow &
c &= \frac{\delta E}{\delta\rho(\vecr)} = \frac{\delta F}{\delta\rho(\vecr)} + v(\vecr) \, .
\end{align}
So the functional derivative of $E$ with respect to the density needs to be a constant. This constant reflects the fact that the potential is uniquely determined by the density up to a constant. This called gauge freedom: shifting the potential by a constant does not change the physics of the system. A convenient convention to fix the value of the constant in the potential is to demand that $v(\vecr) \to 0$ if $\abs{\vecr} \to \infty$.

Working out the stationarity condition in terms of \ac{KS} quantities, we have
\begin{align}
c = \frac{\delta E}{\delta\rho(\vecr)}
= \frac{\delta T_s}{\delta\rho(\vecr)} + v(\vecr) + \frac{\delta E_{\text{Hxc}}}{\delta\rho(\vecr)}.
\end{align}
Combining with~\eqref{eq:TsVsRel}, we find that the \ac{KS} potential should be set to
\begin{align}
v_s[\rho](\vecr) = v(\vecr) + v_{\text{H}}[\rho](\vecr) + v_{\text{\acs{xc}}}[\rho](\vecr),
\end{align}
where we assumed now that all the potentials involved vanish at infinity, so we can set $c = 0$.
Additionally we used the classical Coulomb (Hartree) potential and the \acf{xc} potential, defined respectively as
\begin{align}
v_{\text{H}}[\rho](\vecr) &\isDefinedAs \frac{\delta W_{\text{H}}}{\delta\rho(\vecr)}
= \integ{\vecr'}\frac{\rho(\vecr')}{\abs{\vecr - \vecr'}}, &
v_{\text{\acs{xc}}}[\rho](\vecr) &\isDefinedAs \frac{\delta E_{\text{\acs{xc}}}}{\delta\rho(\vecr)}.
\end{align}
The exact $E_{\text{xc}}$ will be a very complicated functional of the density and the \ac{xc} potential will be even more complicated. Nevertheless, if we can find a good approximation to the \ac{xc} energy, we are in business and solve the \ac{KS} equations directly without solving the fully interacting Schrödinger equation.


\section{The exchange-correlation energy}
To transform the \ac{KS} equations into a practical scheme, we need to approximate the \ac{xc} energy. Therefore, we need to understand the exact $E_{\text{\ac{xc}}}$ better to rationalise the performance of various approximations, which will be treated later. To make the discussion simpler, we will first assume that the kinetic energy correction, $T_c$, is small and can be neglected. Later, we will show how the contribution from the kinetic energy can be included again.

\subsection{Holes}
To introduce the concept of holes, we will first only consider the interaction part of the \ac{xc} energy, $W_{\text{\ac{xc}}}$. The full interaction written in terms of the pair-density is
\begin{align}
W = \half\iinteg{\vecr_1}{\vecr_2}\frac{P(\vecr_1,\vecr_2)}{\abs{\vecr_1 - \vecr_2}}.
\end{align}
Since the pair-density integrates to $N(N-1)$ particles, this term describes the interaction between all the particles. So that are in total $\half N(N-1)$ interactions, since the particles do not interact with themselves and the half takes care that we only count the \emph{unique} pairs.

To express more explicitly that the particles only interact with other particles, we introduce the conditional probability density
\begin{align}
\rho(\vecr | \rref) \isDefinedAs \frac{P(\vecr,\rref)}{\rho(\rref)} .
\end{align}
The conditional probability is the probability to find a particle at $\vecr$, if we know that an other particle is located at the reference position $\rref$.

\begin{exercise}
Check that the conditional probability is normalised as you would expect ($N-1$).
\end{exercise}

The interaction term can now be expressed with the help of the conditional probability as
\begin{align}
W = \half\iinteg{\vecr}{\rref}\frac{\rho(\vecr | \rref)\rho(\rref)}{\abs{\vecr - \rref}},
\end{align}
So given the density profile of the electrons, only the interactions with the \emph{other} electrons should be taken into account, which is exactly what the conditional density achieves.

The classical Hartree part~\eqref{eq:HartreeDef}, however, takes the interaction of $\rho(\rref)$ with the full density $\rho(\vecr)$ into account instead of only the conditional density $\rho(\vecr | \rref)$. The Hartree term therefore does not only contain the interaction between the particles, but also contains an interaction of the particles with themselves. The main task of $W_{\text{\ac{xc}}}$ is to remove this self-interaction. We can imagine this correction as an interaction of the particles with a $-1$ particle, so we rewrite the \ac{xc} interaction part as
\begin{align}
W_{\text{xc}} \isDefinedAs W - W_{\text{H}}
&= \half\iinteg{\vecr}{\rref}\frac{\rho(\vecr | \rref)\rho(\rref) - \rho(\vecr)\rho(\rref)}{\abs{\vecr - \rref}} \notag \\
\label{eq:WxcWorkedOut}
&= \half\iinteg{\vecr}{\rref}\rho(\rref)\frac{\rho_{\text{xc}}(\vecr|\rref)}{\abs{\vecr - \rref}},
\end{align}
where the \ac{xc}-hole is defined as the correction to the density to obtain the conditional density
\begin{align}\label{eq:xcHoleDef}
\rho_{\text{xc}}(\vecr|\rref) \isDefinedAs \rho(\vecr | \rref) - \rho(\vecr)
= \frac{P(\vecr,\rref)}{\rho(\rref)} - \rho(\vecr).
\end{align}
The \ac{xc}-hole has the property that it contains exactly $-1$ particle
\begin{align}\label{eq:xcHoleInteg}
\integ{\vecr}\rho_{\text{xc}}(\vecr|\rref) = -1
\end{align}
as would be expected from our discussion on the main purpose of $W_{\text{\ac{xc}}}$.

One could forget about the conditional density and consider the \ac{xc}-hole as a quantity which discribes (minus) 
the shape of one electron at the reference position, among all the other electrons. Since the electrons are quantum particles, the electron is not localized at the reference position, but delocalized. Subtracting the interaction of the electrons with their \ac{xc}-hole from the Hartree term therefore eliminates the self-interaction.

The definition for the \ac{xc}-hole~\eqref{eq:xcHoleDef} can also be used to define other holes. In particular, the hole corresponding to the \ac{KS} wave function is called the \ac{x-hole}. Since the \ac{KS} wave function is consists of only one Slater determinant, its pair-density simplifies to
\begin{align}
P_s(\vecr_1,\vecr_2) = \rho(\vecr_1)\rho(\vecr_2) - \half\abs{\gamma_s(\vecr_1,\vecr_2)}^2.
\end{align}
Using this pair-density in the definition of the \ac{xc}-hole, we find that the \ac{x-hole} has the simple form
\begin{align}\label{eq:xHoleDef}
\rho_{\text{x}}(\vecr|\rref) \isDefinedAs -\half\frac{\abs{\gamma_s(\vecr,\rref)}^2}{\rho(\rref)}.
\end{align}
The \ac{c-hole} is now simply defined as the difference between the \ac{xc}-hole and the \ac{x-hole}
\begin{align}
\rho_{\text{xc}}(\vecr|\rref) \sAdenifeDsi \rho_{\text{x}}(\vecr|\rref) + \rho_{\text{c}}(\vecr|\rref).
\end{align}
In the next section we derive how the kinetic energy can be included in the \ac{xc} hole description. This has not been treated in the lecture, so will not be part of the exam. If you are interested, you can read the following section, otherwise, you can perfectly skip it. The holes including the kinetic energy effects are indicated with an additional bar, $\bar{\rho}_{\text{c}}$ and $\bar{\rho}_{\text{xc}}$ for the \ac{c-hole} and \ac{xc}-hole respectively.

\begin{exercise}
Check that
\begin{align}
\iinteg{\vecr_1}{\vecr_2}P(\vecr_1,\vecr_2) = N(N-1).
\end{align}
\end{exercise}

\begin{exercise}
Check that the definition of $\rho_{\text{xc}}$~\eqref{eq:xcHoleDef} is consistent with $W_{\text{xc}}$ and the \ac{xc}-hole integrates to exactly $-1$ particle~\eqref{eq:xcHoleInteg}.
\end{exercise}

\begin{exercise}
In this exercise we will check some properties of the pair density when the wave function is a simple Slater determinant.
\begin{subexercise}
\item Show that $P_s(\vecx_1,\vecx_2) = \rho(\vecx_1)\rho(\vecx_2) - \abs{\gamma_s(\vecx_1,\vecx_2)}^2$. This relation is only valid if the pair-density is calculated from a single Slater determinant, $\Psi_s$, so you need to use this fact.

\item Using the previous result, show $P_s(\vecr_1,\vecr_2) = \rho(\vecr_1)\rho(\vecr_2) - \half\abs{\gamma_s(\vecr_1,\vecr_2)}^2$ in the restricted case, so the spin-up and spin-down orbitals share the spatial parts as in~\eqref{eq:restrictedOrbs}.
\end{subexercise}
\end{exercise}

\begin{exercise}
Check that the \ac{x-hole}, $\rho_{\text{x}}(\vecr|\rref)$, integrates to $-1$ particle. Assume that the \ac{KS} \ac{1RDM} is of restricted form~\eqref{eq:equalSpin1RDM}. To what number of particles does the \ac{c-hole} integrate?
\end{exercise}

\begin{exercise}\label{ex:xHoleTwoElec}
Show that the \ac{x-hole} for a singlet two-electron system can be calculated to be $\rho_{\text{x}}(\vecr|\rref) = -\rho(\vecr)/2$. What do you notice?
\end{exercise}

\subsection{Including the kinetic energy}
What about the kinetic energy part of $E_{\text{xc}}$? We expect $T_{\text{c}}$ to be small, so we will aim to write it as a small modification of the \ac{xc}-hole introduced before. To calculate $T_{\text{c}}$ we need to connect the non-interacting \ac{KS} system with the fully interacting system. We do this by considering systems with a rescaled interaction, $\lambda\hat{W}$, and a local potential, $v_{\lambda}$, which is adjusted such that the density is equal to the fully interacting density at all coupling strengths, $\rho_{\lambda} = \rho_1 = \rho$. Note that for $\lambda = 0$ we exactly recover the \ac{KS} system. The energy at arbitrary $\lambda$ is
\begin{align}
E_{\lambda} = \brakket[\big]{\Psi_{\lambda}}{\hat{H}_{\lambda}}{\Psi_{\lambda}}
= \brakket[\big]{\Psi_{\lambda}}{\hat{T} + \hat{V}_{\lambda} + \lambda\hat{W}}{\Psi_{\lambda}}.
\end{align}
The variation in the energy when changing the coupling strength, $\lambda$, can be evaluated as
\begin{align}\label{eq:dEdLambda}
\frac{\du E_{\lambda}}{\du\lambda}
&= \brakket[\bigg]{\Psi_{\lambda}}{\frac{\du\hat{H}_{\lambda}}{\du\lambda}}{\Psi_{\lambda}}
\quad \text{(Hellman--Feynman)}\notag \\
&= \brakket[\bigg]{\Psi_{\lambda}}{\frac{\du\hat{V}_{\lambda}}{\du\lambda}}{\Psi_{\lambda}} +
\brakket[\big]{\Psi_{\lambda}}{\hat{W}}{\Psi_{\lambda}} \notag \\
&= \integ{\vecr}\rho(\vecr)\frac{\du v_{\lambda}(\vecr)}{\du\lambda} +
\half\iinteg{\vecr_1}{\vecr_2}\frac{P_{\lambda}(\vecr_1,\vecr_2)}{\abs{\vecr_1 - \vecr_2}}
\end{align}
Now we use the fundamental theorem of calculus~\autocite{Almbladh1972, LangrethPerdew1975, GunnarssonLundqvist1976}. To write the energy difference between the \ac{KS} system and the interacting system as
\begin{align}
E_1 - E_0
&= \binteg{\lambda}{0}{1}\frac{\du E}{\du \lambda} \notag \\
&= \integ{\vecr}\rho(\vecr)\bigl(v_1(\vecr) - v_0(\vecr)\bigr) +
\half\iinteg{\vecr_1}{\vecr_2}\frac{\bar{P}(\vecr_1,\vecr_2)}{\abs{\vecr_1 - \vecr_2}},
\end{align}
where the coupling constant integrated\slash{}averaged pair-density is defined as
\begin{align}
\bar{P}(\vecr_1,\vecr_2) \isDefinedAs \binteg{\lambda}{0}{1}P_{\lambda}(\vecr_1,\vecr_2).
\end{align}
Subtracting the integral over the potential difference on both sides of this equation, gives the following expression for the \ac{H}\ac{xc} energy
\begin{align}
E_{\text{Hxc}} = T - T_s + W_{\text{Hxc}}
= \half\iinteg{\vecr_1}{\vecr_2}\frac{\bar{P}(\vecr_1,\vecr_2)}{\abs{\vecr_1 - \vecr_2}}.
\end{align}
Subtracting the classical Coulomb (Hartree) term from both sides, we fin the following expression for the \ac{xc} energy
\begin{align}
E_{\text{xc}}
= \half\iinteg{\vecr_1}{\vecr_2}\frac{\bar{P}(\vecr_1,\vecr_2) - \rho(\vecr_1)\rho(\vecr_2)}{\abs{\vecr_1 - \vecr_2}}.
\end{align}
Comparing this expression with $W_{\text{xc}}$~\eqref{eq:WxcWorkedOut}, we find that we only need to replace the fully interacting pair-density by the coupling constant integrated pair-density. The corresponding averaged \ac{xc}-hole, which includes the kinetic energy effects, becomes
\begin{align}
\bar{\rho}_{\text{xc}}(\vecr|\rref) \isDefinedAs \frac{\bar{P}(\vecr,\rref)}{\rho(\rref)} - \rho(\vecr).
\end{align}
Approximate functionals based on the \ac{HEG} typically include the the kinetic energy effects. The kinetic energy effects on the shape of the hole in inhomogeneous systems (anything else than the \ac{HEG}) such as molecules are not well known actually. They are expected to be small nevertheless.

\begin{exercise}
Check the Hellman--Feynman step in~\eqref{eq:dEdLambda}.
\end{exercise}

\begin{figure}[t]
  \includegraphics[width=\textwidth]{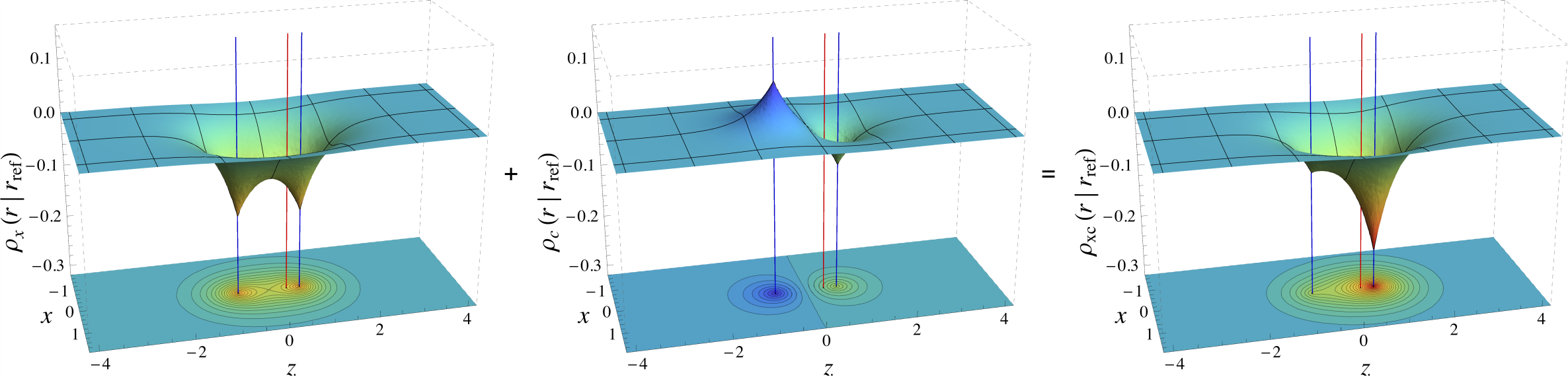}
  \caption{The different holes of the H$_2$ molecule at $R_{\text{H--H}} = 1.4$ Bohr and the reference electron at 0.3 Bohr to the left of the right nucleus along the bond axis ($\rref = (0,0,0.4)$ Bohr). The positions of the nuclei are indicated by the blue lines and the position of the reference electron is by red. The left panel shows the \ac{x-hole}, $\rho_{\text{x}}(\vecr|\rref)$, the middle panel shows the \ac{c-hole}, $\rho_{\text{c}}(\vecr|\rref)$, which provides a small correction to have the more localized real hole, $\rho_{\text{xc}}(\vecr|\rref)$.}
  \label{fig:holesEqui}
  \includegraphics[width=\textwidth]{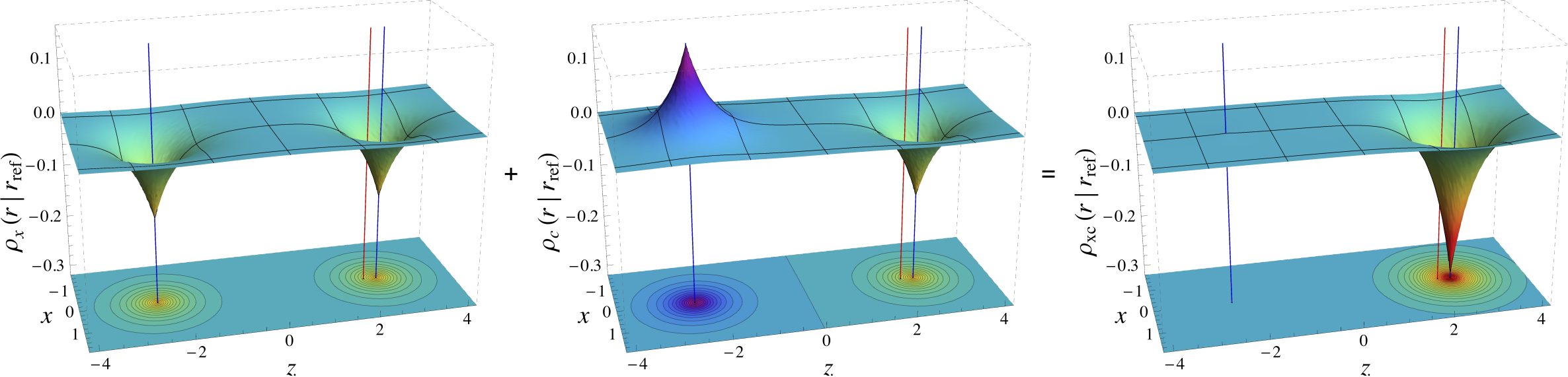}
  \caption{Similar to the previous plots, but now for $R_{\text{H--H}} = 5.0$ Bohr. The reference electron is still at 0.3 Bohr to the left of the right nucleus along the bond axis ($\rref = (0,0,2.2)$ Bohr now).}
  \label{fig:holes5}
\end{figure}

\subsection{Holes of the H$_2$ molecule}
In Exercise~\ref{ex:xHoleTwoElec} you have already calculated the \ac{x-hole} of a singlet two-electron system, so also H$_2$, to be
\begin{align}
\rho_{\text{x}}(\vecr|\rref) = -\half\rho(\vecr) \, ,
\end{align}
so the \ac{x-hole} of a singlet two-electron system is independent of the reference position. The \ac{xc}-hole can be obtained from accurate \ac{CI} calculations.
\mkbibfootnote{To include the kinetic energy effects, the averaged \ac{xc}-hole should have been calculated. This is actually quite involved procedure and this data is currently not yet available. As mentioned before, the effect of the kinetic energy is probably small and the $\lambda = 1$ hole already shows the most important physics.}
The holes for the H$_2$ molecule at equilibrium distance are shown in Fig.~\ref{fig:holesEqui}. The \ac{c-hole} only provides a small correction to the \ac{x-hole}. The reference electron is located close to the right nucleus (0.3 Bohr to the left). The main effect of the \ac{c-hole} is to localize the full \ac{xc}-hole more on the right nucleus. This is a very intuitive effect, since if the reference electron is located near the right nucleus, you also expect that the hole is located near the reference electron. Some delocalization remains due to the quantum nature of the electrons.

The localization effect of the \ac{c-hole} becomes more pronounced when the bond of the hydrogen molecule is stretched. In Fig.~\ref{fig:holes5} the holes are now shown for a H$_2$ molecule with $R_{\text{H--H}} = 5.0$ Bohr. The \ac{c-hole} is not a small correction to the \ac{x-hole} anymore. The \ac{c-hole} actually needs to be of equal magnitude to completely eliminate the hole amplitude on the left nucleus. The \ac{c-hole} has the same peak on the right nucleus with opposite sign to ensure that the full \ac{xc}-hole correctly integrates to $-1$ electron~\eqref{eq:xcHoleInteg}.

The holes give a different view on the failure of restricted \ac{HF} to describe the dissociation of the H$_2$ molecule. Since \ac{HF} only includes exchange, the hole by the \ac{HF} model remains completely delocalized, even when the hydrogen molecule is dissociated. Simple physical intuition immediately tells you that this is an incorrect description, since it is energetically unfavourable for two electrons to be near the same nucleus simultaneously. Instead, when one electron is on the left nucleus, the other should be at the right nucleus and visa versa. A fancy name for this strong correlation between the whereabouts of the electrons is `quantum entanglement'.

\begin{exercise}
Use the \ac{x-hole} to argue that the \ac{HF} energy behaves asymptotically as
\begin{align}
E_{\text{\acs{HF}}}(R_{\text{H--H}} \to \infty) = C - \frac{1}{2R_{\text{H--H}}} \, .
\end{align}
Do not forget to include the interaction between the nuclei!

Assuming that the \ac{HF} orbital becomes just a linear (gerade) combination of the hydrogenic 1s orbitals on the two hydrogen atoms, argue that the constant can be approximated as
\begin{align}
C = \lim_{\mathclap{R\to\infty}} E_{\text{RHF}}(R)
\approx 2\brakket{A}{\hat{h}}{A} + \half(AA | AA) = \zeta^2 - 2\zeta + \frac{5\zeta}{16} \, ,
\end{align}
where in the last step you can reuse the results from exercise 7 (e + k) from the HF part. Since this is the restricted \ac{HF} energy in the dissociation limit, you can simply optimise the exponent to get the numbers ($\zeta_{\text{opt}} = 27/32$ and $C_{\text{opt}} = -729/1024 \approx -0.71$).
\end{exercise}

\section{Approximations to the \acs{xc} energy}
Now that we have some understanding of what the exact \ac{xc} energy should do, we can take a look at different approximations which are used for $E_{\text{\ac{xc}}}[\rho]$. One of the main disadvantages of current \ac{DFT} is that there does not an ultimate \ac{xc} functional, which works in all cases. Therefore, hundreds of different approximations have been published all optimised for different systems and physical situations. Although many of these functionals have intimidating acronyms based on the author names, they are often just some slight reparametrizations without any newly captured physics. We will limit the discussion to the most basic \ac{xc} functional classes, which encompasses most of the functionals used in daily practice.

\subsection{The \acf{LDA}}
The \acf{LDA} is the oldest of the density functionals and its history actually dates back far before the foundations of \ac{DFT} were laid. Some people like to refer to the \ac{LDA} as the mother of all functionals. The \ac{LDA} based on the idea that if the density does not vary too strongly, we can assume that $\bar{\rho}_{\text{xc}}$ closely resembles the \ac{xc}-hole of the \ac{HEG} at the reference position. The \ac{xc}-hole of the \ac{HEG} only depends on the (constant) density, $\rho$, of the gas and the distance between the electron, $\abs{\vecr - \rref}$, since the \ac{HEG} is an isotropic system. The \ac{LDA} \ac{xc}-hole is defined as
\begin{align}
\bar{\rho}^{\text{\ac{LDA}}}_{\text{\ac{xc}}}(\vecr|\rref)
\isDefinedAs \bar{\rho}^{\text{\acs{HEG}}}_{\text{\ac{xc}}}\bigl(\rho(\rref), \abs{\vecr - \rref}\bigr) \, .
\end{align}
The remaining task is to calculate $\bar{\rho}^{\text{\acs{HEG}}}_{\text{\ac{xc}}}$. The exchange part is not too difficult to obtain, since you can use the non-interacting solution you already found in Exercise~\ref{ex:Theg}, where you calculated the kinetic energy of the \ac{HEG}. The \ac{1RDM} of the non-interacting \ac{HEG} can be worked out to be
\begin{align}\label{eq:HEG1RDM}
\gamma_s^{\text{\acs{HEG}}}(k_F,r_{12}) &= \frac{k_F^3}{\pi^2}J(k_Fr_{12}) \, , &
J(y) = \frac{j_1(y)}{y} = \frac{\sin(y) - y\cos(y)}{y^3} \, ,
\end{align}
where $j_1(y)$ is a spherical Bessel function of the first kind and $k_F^3 \isDefinedAs 3\pi^2\rho$ is the Fermi wave vector. From the non-interacting \ac{1RDM}, the \ac{x-hole} of the \ac{HEG} can readily be calculated~\eqref{eq:xHoleDef}. In fact, the exchange part of the \ac{LDA} energy was already evaluated in 1930 by Paul Dirac \autocite{Dirac1930} to be
\begin{align}\label{eq:LDAexchange}
W_{\text{x}}^{\text{\acs{LDA}}}[\rho] &= -C_{\text{x}}\integ{\vecr}\rho(\vecr)^{\nfrac{4}{3}} \, , &
C_{\text{x}} &\isDefinedAs \frac{3}{4}\biggl(\frac{3}{\pi}\biggr)^{\nfrac{1}{3}} \, .
\end{align}
Since the original treatment by Thomas and Fermi did not include exchange he proposed add this term to take exchange approximately into account. Unfortunately, the additional term $W^{\text{\ac{LDA}}}_{\text{x}}$ only makes the result from Thomas--Fermi theory even worse \autocite{Eschrig2003}, indicating that the poor kinetic energy functional, $T_{\text{TF}}$ is the main source of the bad performance and not the lack of exchange and correlation effects.

History demands we should mention a resurgence of interest in $W^{\text{\ac{LDA}}}_{\text{x}}$ already in 1951. The first electronic computers started to appear around that time and performing an actual \ac{HF} calculation for small chemical systems became feasible. The main bottleneck however, was the calculation of the non-local exchange potential of \ac{HF}. Therefore, J.C. Slater proposed to use a local approximation to the exchange potential based on the \ac{HEG} in the same way as the \ac{LDA} \autocite{Slater1951}. He called this method the X$\alpha$ method, where $\alpha$ refers to a constant in his expression for the exchange energy
\begin{align}
W_{\text{x}}^{\text{X$\alpha$}}[\rho] = \frac{3}{2}\alpha W_{\text{x}}^{\text{\acs{LDA}}}[\rho] \, ,
\end{align}
so $\alpha = 2/3$. Strangely enough, the X$\alpha$ approximation by Slater actually gave better results than \ac{HF} and got even better if the value of the parameter $\alpha$ was set to $\alpha = 0.7$. At first sight it is counterintuitive that an approximation gives better numbers than the method it is supposed to approximate. This has upset many scientists and the use of the X$\alpha$ method has remained controversial for a very long time. Only when the \ac{LDA} functional was studied in more detail, people started to understand why the simplistic X$\alpha$ performed better than \ac{HF} and could even explain why raising the value of $\alpha$ would improve the results even further.

Before we explain the superior performance of \ac{LDA} and X$\alpha$ over \ac{HF}, we should mention that an analytic expression for the correlation part is not available for the \ac{HEG}. Although the \ac{HEG} appears to be a very simple system with its constant density, all the many-body effects responsible for correlation turn out to be very complicated, even in this `simple' isotropic system. The main advantage of the \ac{HEG} is that we do not have to deal with real density functionals, but only functions of the density, since the density is constant in the \ac{HEG}. The asymptotic behaviour of the \ac{xc}-hole of the \ac{HEG} has been studied to great detail and is well understood \autocite{Gori-GiorgiPerdew2001}. Accurate quantum Monte Carlo calculations
\mkbibfootnote{Quantum Monte Carlo is a different approach to find the ground state. Instead of using Slater determinants, one uses much more complicated ansatz forms which captures important analytic features of the wave function exactly. The resulting integral with the Hamiltonian remains high dimensional and is solved by stochastic generation of integral points. This stochastic manner of solving integrals is called Monte Carlo, a name the method got during the Manhattan project (the atomic bomb).}
have supplied \ac{xc} holes for intermediate values \autocite{CeperleyAlder1980, OrtizHarrisBallone1999}, which have been combined with the asymptotic behaviour to construct accurate fits of the correlation part of the \ac{HEG}~\autocite{WangPerdew1991, Gori-GiorgiPerdew2002}. In practical \ac{LDA} functionals, these fits are used to construct the correlation part of the \ac{LDA} functional. The most used fit for the correlation part of the \ac{LDA} is the one by \ac{VWN} and is denoted as \ac{VWN}5, where the 5 stands for the 5\textsuperscript{th} variant in the (same!)\ article \autocite{VoskoWilkNusair1980}. The physics captured by the \ac{LDA} does not really change when using different approximations, so we will not go into the details of different versions for the fit of the correlation part.

\begin{figure}[t]
  \includegraphics[width=\textwidth]{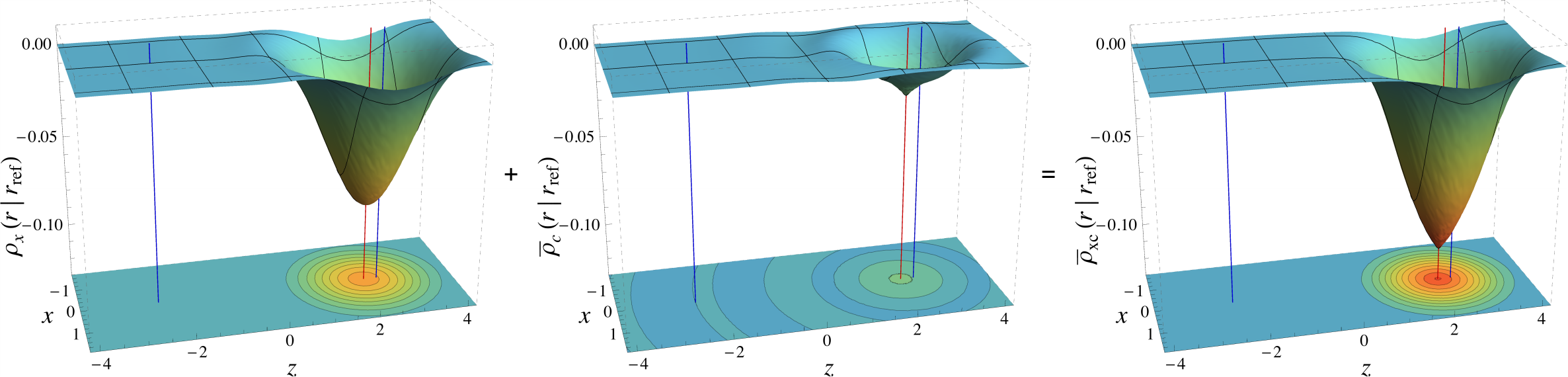}
  \caption{The \ac{LDA} holes for the H$_2$ molecule at $R_{\text{H--H}} = 5.0$ Bohr. The reference electron is again at 0.3 Bohr to the left of the right nucleus along the bond axis ($\rref = (0,0,2.2)$ Bohr).}
  \label{fig:ldaHoles}
\end{figure}

Now let us consider the \ac{LDA} holes for a stretched H$_2$ molecule at $R_{\text{H--H}} = 5.0$ Bohr depicted in Fig.~\ref{fig:ldaHoles}. There are couple of important things to notice:
\begin{itemize}
  \item The \ac{x-hole} is oscillatory due to the sines and cosines in the \ac{1RDM} from the \ac{HEG}~\eqref{eq:HEG1RDM} (see also Fig.~\ref{fig:GradientHoles} later on). Correlation removes these oscillations.
  \item The \ac{LDA} holes are spherical (only depend on $\abs{\vecr - \rref}$) and are centered around the reference electron.
  \item The \ac{LDA} \ac{x-hole} and \ac{c-hole} are very bad approximations to the exact holes and a direct comparison does not make much sense. The total \ac{LDA} hole, $\bar{\rho}_{\text{xc}}(\vecr|\rref)$, however, is not a too bad approximation to the exact \ac{xc}-hole (see Fig.~\ref{fig:holes5}).
  \item The total \ac{LDA} \ac{xc}-hole is a much better approximation to the exact \ac{xc}-hole than the \ac{x-hole} alone which is used in the \ac{HF} approximation. Although the shape of the \ac{LDA} \ac{xc}-hole is not particularly good, at least \ac{LDA} is able to describe the localization of the hole near the reference electron. This is why \ac{LDA} often outperforms \ac{HF}.
  \item The correction from the \ac{LDA} \ac{c-hole} is small. Its main contribution is to deepen the hole near the reference electron, which causes a decrease of the exchange energy, due to the $1/\abs{\vecr-\rref}$ factor in the exchange energy functional (check~\eqref{eq:WxcWorkedOut} with \ac{xc} $\to$ x). This explains why Slater got even better results by increasing the $\alpha$ factor to slightly larger values, since effectively he incorporated more correlation effects.
\end{itemize}
Due to a better localization of the hole, \ac{LDA} outperforms \ac{HF} for molecular dissociation (see Fig.~\ref{fig:energyComp}). The most important features of \ac{LDA} in practical calculations are:
\begin{itemize}
  \item The \ac{LDA} favours homogeneous systems too much, so is too eager to form bonds between atoms and molecules. The bond lengths are therefore too short in \ac{LDA} calculations, so \ac{LDA} overbinds in this sense. Nevertheless, the bond lengths are still within 1--2\% accuracy.
  \item Though better than \ac{HF}, the energy still increases to much upon dissociation (Fig.~\ref{fig:energyComp}), so binding energies and transition states tend to be too high in energy.
\end{itemize}
The \ac{LDA} only brought partial success for \ac{DFT}. The \ac{LDA} energies where not accurate enough for chemists to make useful predictions on molecules. Only when the accuracy of the energies increased sufficiently with the introduction of functionals which also depend on the gradient (\acsp{GGA}), \ac{DFT} started to be useful for chemistry. The solid state physicists were very happy with the \ac{LDA} already. In an infinite solid, it makes no sense to talk about the total energy, so the lack of accuracy of \ac{LDA} on this part was irrelevant. The most important feature of \ac{LDA} compared to \ac{HF} for the solid state physicists was that \ac{LDA} can describe metals whereas \ac{HF} can not. One can even proof rigorously that unrestricted \ac{HF} predicts all materials to be insulators \autocite{BachLiebLoss1994}. The ability of \ac{LDA} to describe both insulators and metals was therefore a major breakthrough in the solid state community.

\begin{figure}[t]
  \begin{center}
    \includegraphics[width=0.75\textwidth]{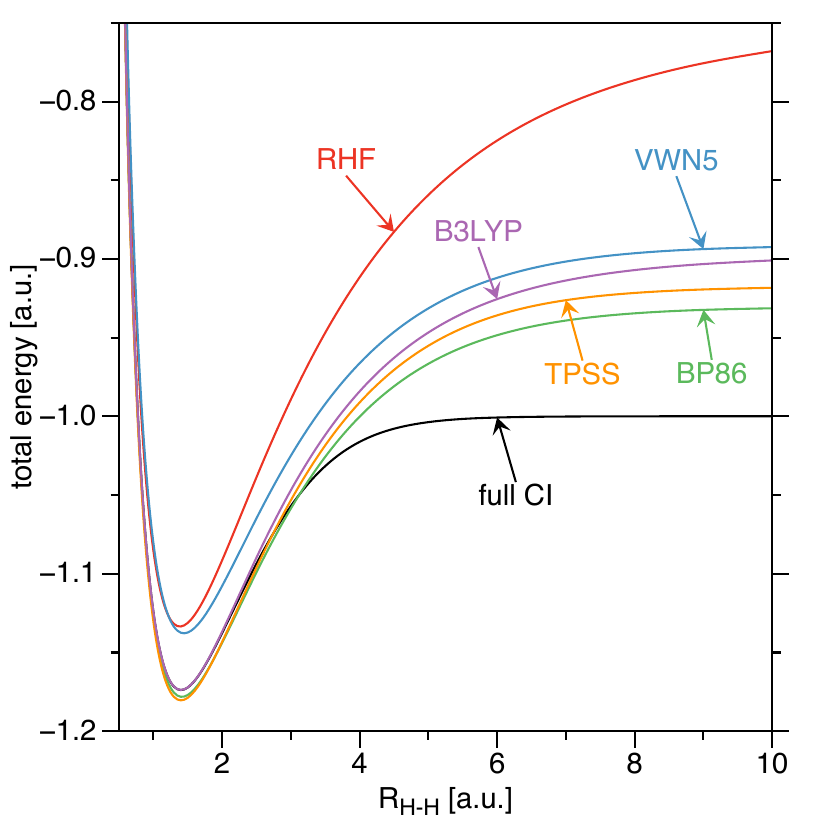} 
    \caption{Comparison of several approximations to the \ac{xc} energy to full \ac{CI} and \ac{HF} in the aug-cc-pVQZ basis for the hydrogen molecule. The hydrogen molecule is special, since the total energies from DFT seem to be very good. For other molecules the total energies are not so good, but the relative energies are.}
    \label{fig:energyComp}
  \end{center}
\end{figure}

\begin{exercise}
Show that the \ac{1RDM} of the non-interacting \ac{HEG} is indeed given by~\eqref{eq:HEG1RDM}. Due to the isotropy of the \ac{HEG}, the \ac{KS} orbitals will remain plane waves, so you can reuse the orbitals from Exercise~\ref{ex:Theg}.
\end{exercise}

\begin{exercise}
Give the expression for the \ac{x-hole} of the \ac{LDA}.
\end{exercise}

\begin{exercise}
Calculate the \ac{LDA} exchange energy, $W^{\text{\acs{LDA}}}_{\text{x}}$. The integral is not so easy to solve. Use $q(x) \isDefinedAs -\sin(x)/x$ to rewrite the integral in terms of $q(x)$, $q'(x)$ and $q''(x)$ only. You can rewrite the integrant now as a total derivative, which allows you to do the integration easily. You should get the same answer as Dirac in 1930~\eqref{eq:LDAexchange}.
\end{exercise}

\begin{exercise}
Show explicitly that the \ac{LDA} \ac{x-hole} integrates to exactly $-1$ electron. There are two ways to solve this exercise
\begin{description}
  \item[brute-force] Use the same trick as in the previous exercise to show that
  \begin{align}
    \integ{\vecr}\rho^{\text{\acs{LDA}}}_{\text{x}}(\vecr|\rref) = -\frac{2}{\pi}\binteg{x}{0}{\infty}\bigl(q(x)\bigr)^2.
  \end{align}
   If you did a course in complex analysis, you can solve the remaining integral over $q^2$ by contour integration, which gives $\pi/2$.
   
   \item[detour] Write the \ac{1RDM} back in its integral form over the wave vectors
   \begin{align}
     \gamma^{\text{\acs{HEG}}}_s(k_F,r_{12}) = \frac{1}{4\pi^3}\binteg{\veck}{\Omega_F}{}\e^{\im\veck\cdot\vecr_{12}},
   \end{align}
   where $\Omega_F$ is the Fermi sphere: the part of $\veck$-space which corresponds to occupied orbitals. Insert this expression in the definition for the \ac{x-hole}~\eqref{eq:xHoleDef} and evaluate the integration condition for the \ac{x-hole} by \emph{first} performing the integration over $\coord{u} = \vecr - \rref$. You also need to use that
   \begin{align}
     \binteg{x}{-\infty}{\infty}\e^{\im k x} = 2\pi\,\delta(k),
   \end{align}
   where $\delta(x)$ is the Dirac delta-function.
\end{description}
\end{exercise}

\subsection{The \acfp{GGA}}
The most logical step to improve the accuracy of the \ac{LDA} is to include also the gradient of the density. An approximate hole can be built by using a slightly perturbed \ac{HEG}
\begin{align}
\bar{\rho}^{\text{\acs{GEA}}}_{\text{\ac{xc}}}(\vecr|\rref)
\isDefinedAs \bar{\rho}^{\text{\acs{HEG}}}_{\text{\acs{xc}}}\bigl(\rho(\rref),\abs{\nabla\rho(\rref)},\abs{\vecr-\rref}\bigr),
\end{align}
the \acf{GEA} hole. Unfortunately this approach did not work, since the \ac{GEA} functionals always gave results worse than the \ac{LDA}. It took a long time before Perdew \autocite{Perdew1985, Perdew1986} realized that the long-range oscillations --- already present in the \ac{LDA} \ac{x-hole} (see Fig.~\ref{fig:ldaHoles} and discussion) --- were hugely enhanced and that the \ac{GEA} \ac{x-hole} is even not negative definite anymore. The solution by Perdew was simple: just remove the positive part of the \ac{x-hole}. The \ac{x-hole} integrates now to less than $-1$ electron, so he limited the extend of the \ac{x-hole} by only taking the part which integrates to $-1$ electron within a sphere centered at the reference position. In this way he could both maintain the integration condition and remove the long-range oscillations from the \ac{GEA} \ac{x-hole}. This procedure gives the \ac{GGA}.

\begin{figure}[t]
  \begin{center}
    \includegraphics[width=0.5\textwidth]{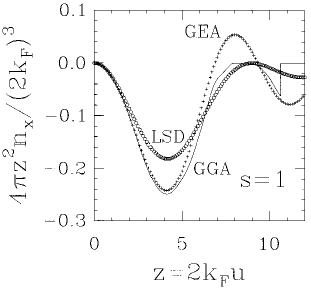}
    \caption{Spherically averaged \ac{x-hole} for the \ac{LDA} (=LSD, circles), the \ac{GEA} (crosses) and the \ac{GGA} (solid line). The plot is taken from \autocite{PerdewBurkeWang1996}, where $n_x =\rho_{\text{x}}$ denotes the \ac{x-hole} and $u = \abs{\vecr-\rref}$ is the inter-electronic distance and $s \isDefinedAs \abs{\nabla\rho}/(2k_F\rho)$ is a dimensionless version of the gradient.}
    \label{fig:GradientHoles}
  \end{center}
\end{figure}

As an example to illustrate all these features, we show spherically averaged \acp{x-hole} for the \ac{LDA} (=LSD), \ac{GEA} and \ac{GGA} in Fig.~\ref{fig:GradientHoles} which is taken from Ref. \autocite{PerdewBurkeWang1996}. You clearly see that the oscillations present in the \ac{LDA} \ac{x-hole} are enhanced in the \ac{GEA} \ac{x-hole} to such extend that the \ac{GEA} \ac{x-hole} has positive parts. In the \ac{GGA} \ac{x-hole} these are removed and only the inner part is retained which integrates to $-1$ electron. This gives a rather ridiculous shape for the \ac{GGA} \ac{x-hole}, but that is mainly in the outer region. Since the exchange energy mainly probes the inner region due to the $1/\abs{\vecr-\rref}$ factor, these irregularities in the outer part do not affect the exchange energy too much.
The inner region is of the \ac{x-hole} is actually improved by including gradient dependent terms and hence, also the prediction for the energy compared to the \ac{LDA}.
\mkbibfootnote{This is exactly what you expect for a perturbative expansion. By including higher order derivatives you improve the description close to your reference and the description far away can be better or worse (almost exclusively worse in practice, e.g.\ the \ac{MP} perturbation series or the failing perturbative approach to deal with the strong force between quarks (chromodynamics).}
The main features of the \acp{GGA} are:
\begin{itemize}
  \item The gradient part of the \acp{GGA} favours inhomogeneous systems more, so corrects for the overbinding of the \ac{LDA}. The \acp{GGA} tend to overcorrect the bond lengths, so the accuracy remains the same 1--2\% of \ac{LDA}.
  \item Since homogeneous systems are not favoured so much anymore, binding energies and transition state barriers are improved, which made the \ac{GGA} useful for chemistry.
  \item Core electrons are treated better, so \acp{GGA} give better total energies than the \ac{LDA}.
\end{itemize}
The total energy for the oldest successful \ac{GGA} functional, the \acs{BP86}, %
\makeatletter%
\AC@placelabel{B88}%
\AC@placelabel{BP86}%
\makeatother%
are shown also in Fig.~\ref{fig:energyComp} for our test system, H$_2$. The correlation part of the \acs{BP86} is the one originally proposed by Perdew \autocite{Perdew1986} and the exchange part was replaced by the \acs{B88} exchange functional proposed by Becke \autocite{Becke1988}, since it gave better numbers. During the years there have been efforts to simplify the parametrization of the \acs{BP86} and has lead to the \acs{PW91} functional \autocite{WangPerdew1991, PerdewChevaryVosko1992} and was simplified even more in the \acs{PBE} functional \autocite{PerdewBurkeErnzerhof1996}. 
\makeatletter%
\AC@placelabel{PW91}%
\AC@placelabel{PBE}%
\makeatother%
The physical idea remains the same. There is only a difference in parametrization strategy, which results in different numbers.

\subsection{The \acsp{meta-GGA}}

The next logical step is to include the second order derivative of the density, the Laplacian $\nabla^2\dens(\vecr)$. Approximate functionals which also include higher order derivatives of the density are called \acused{meta-GGA}\acp{meta-GGA}. Direct calculation of $\nabla^2\dens(\vecr)$ leads to numerical problems for code which are based on Gaussian basis sets. To avoid these numerical problems, one use often the \ac{KS} kinetic energy-density instead, which is defined as
\begin{align}
  \tau_s(\vecx) \isDefinedAs \half \sum_{i=1}^N\abs{\nabla\phi_i(\vecx)}^2.
\end{align}
The use of the \ac{KS} kinetic energy density leads to the difficulty that the functional now becomes an orbital dependent functional. In practice, this additional complication is simply neglected.

\makeatletter%
\AC@placelabel{meta-GGA}%
\AC@placelabel{TPSS}%
\AC@placelabel{SCAN}%
\makeatother%

The \acp{meta-GGA} tend not to make a significant improvement for covalent bonds, but do improve the description of weak bonds \autocite{SunXiaoFang2013}. In practice, their performance for weak bonds still requires the additional use of empirical dispersion correction schemes, limiting the actual use of \acp{meta-GGA} in practice. The most well-known older \acp{meta-GGA} is the \acsu{TPSS} functional \autocite{TaoPerdewStaroverov2003} and a more recent popular \ac{meta-GGA} is the \acs{SCAN} functional. A key ingredient of the \acs{SCAN} functional is recognition that with the help of $\tau_s(\vecr)$ regions of single-orbital character, slowly varying density and overlap of closed shells can be recognized by the following parameter 
\begin{equation}
\alpha(\vecr) = \frac{\tau_s(\vecr) - \tau_{\text{W}}(\vecr)}{\tau^{\text{\acs{HEG}}}(\vecr)} ,
\end{equation}
where $\tau_{\text{W}}$ is the single orbital (Von Weizsäcker) kinetic energy density (cf.\ exercise~\ref{ex:singleOrbital})
\begin{equation}
\tau_{\text{W}}(\vecr) = \frac{1}{8}\frac{\abs{\mat{\nabla}\dens(\vecr)}^2}{\dens(\vecr)}
\end{equation}
and $\tau^{\text{\acs{HEG}}}(\vecr)$ is the kinetic energy density of the non-interacting \ac{HEG}
\begin{align}
\tau^{\text{\acs{HEG}}}(\vecr) &= C_{\text{TF}}\dens(\vecr)^{\nfrac{5}{3}} ; &
C_{\text{TF}}&= \frac{3}{10}\bigl(3\pi^2\bigr)^{\nfrac{2}{3}} .
\end{align}
That a \ac{meta-GGA} would be able to distinguish between regions with single orbital character ($\alpha \approx 0$) and slowly varying density ($\alpha \approx 1$) was already recognized very early by Becke \autocite{Becke1998} and incorporated in the early \acp{meta-GGA} such as the \acs{TPSS} functional. About a decade later, it was realized that the description of weak bonds (overlap of closed shells) could be regonized by the parameter $\alpha$, since in that case $\alpha \gg 1$ \autocite{SunXiaoFang2013}. Still, the description of dispersion interactions significantly improves by including ad-hoc classical dispersion interaction corrections.

\begin{exercise}
Derive a relation between $\tau_s(\vecr)$ and $\nabla^2\rho(\vecr)$. First show that $\tau_s(\vecx)$ and $\nabla^2\dens(\vecx)$ are related as
\begin{align}
\tau_s(\vecx) = \sum_{i=1}^N\varepsilon_i\abs{\phi_i(\vecx)}^2 - v_s(\vecr)\dens(\vecx) + \frac{1}{4}\nabla^2\dens(\vecx).
\end{align}
Now you can integrate out the spin coordinate to find the desired relation.
\end{exercise}

\subsection{The hybrid functionals}
Typically we are interested in molecules at their equilibrium geometries. For H$_2$ we saw that the full \ac{xc}-hole does not completely localize on the the nucleus where the reference electron is located (Fig.~\ref{fig:holesEqui}). A small peak of the \ac{x-hole} remains behind on the other nucleus. To incorporate this effect, one can include a small percentage of $\rho_{\text{x}}$ (also called `exact' exchange) in the functional. The amount is typically fixed. It is clear from the holes of the H$_2$ molecule that the inclusion of exact exchange gives an improved description of the \ac{xc}-hole at short distances compare Figs~\ref{fig:holesEqui},~\ref{fig:ldaHoles} and~\ref{fig:xLDAhole2}). For stretched bonds, however, we inherit the delocalization error of \ac{HF} and the description becomes worse (compare Figs~\ref{fig:holes5},~\ref{fig:ldaHoles} and~\ref{fig:xLDAhole5}).

\begin{figure}[t]
  \begin{tabular}{@{}l@{\hspace{0.04\textwidth}}r@{}}
    \begin{minipage}[t]{0.48\textwidth}
      \includegraphics[width=\textwidth]{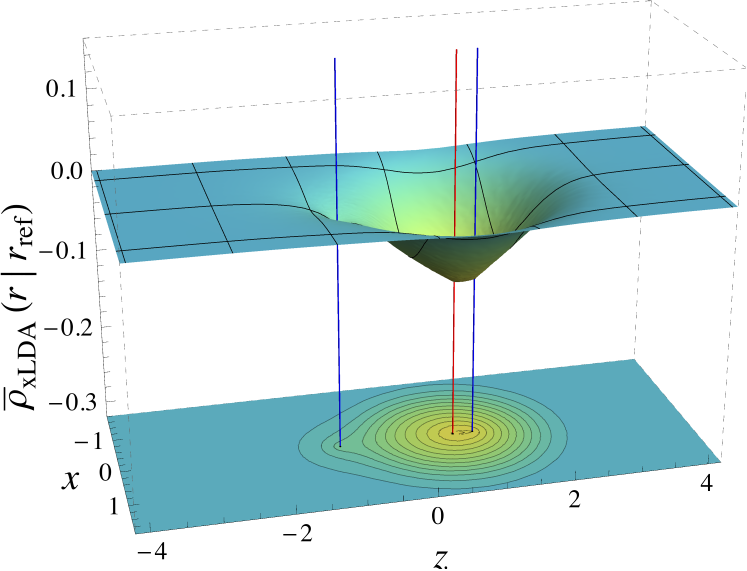}
    \end{minipage} &
    \begin{minipage}[t]{0.48\textwidth}
      \includegraphics[width=\textwidth]{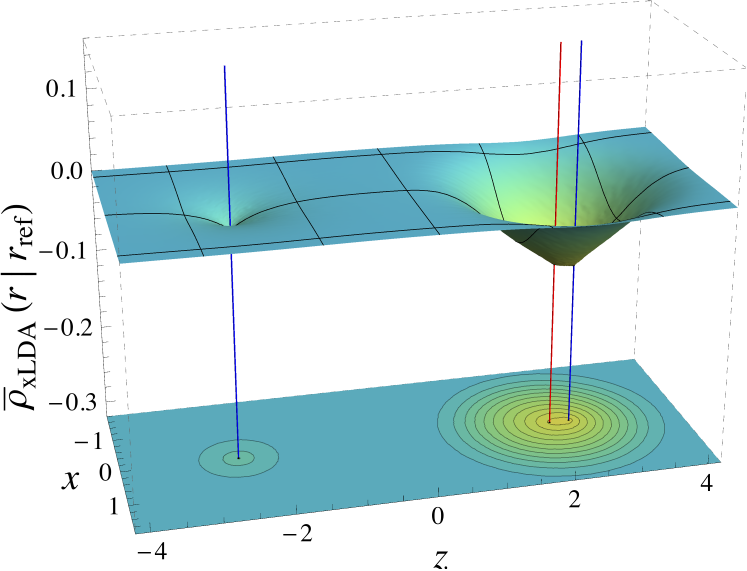}
    \end{minipage} \\
    \begin{minipage}[t]{0.48\textwidth}
      \caption{80\% of the LDA \ac{xc}-hole mixed with 20\% of the exact \ac{x-hole} at $R_{\text{H--H}} = 2.0$ Bohr. The reference electron is again located at at 0.3 Bohr to the left of the right nucleus ($\rref = (0,0,0.7)$ Bohr).}
      \label{fig:xLDAhole2}
    \end{minipage} &
    \begin{minipage}[t]{0.48\textwidth}
      \caption{The same xc hole model (80\% LDA and 20\% exact exchange) for $R_{\text{H--H}} = 5.0$ Bohr.}
      \label{fig:xLDAhole5}
    \end{minipage}
  \end{tabular}
\end{figure}

The improved description of the \ac{xc}-hole at short bond distances also improves the energy near equilibrium. The worse description of the \ac{xc}-hole at long bond distances deteriorates the energy, as is clear from the comparison of total energy from the \acs{B3LYP} functional with the other functionals and \ac{HF} in Fig.~\ref{fig:energyComp}. The \acs{B3LYP} functional performs extremely good near the equilibrium distance of H$_2$, but upon dissociation the \acs{B3LYP} hole does not fully localize and is outperformed by one of the oldest \acp{GGA}: the \ac{BP86}.

\subsection{The \acs{B3LYP} functional}
\makeatletter%
\AC@placelabel{LYP}%
\AC@placelabel{B3LYP}%
\AC@placelabel{B3PW91}%
\makeatother%
The \acs{B3LYP} functional is the most used and well known functional in chemistry, though one might wonder if it deserves this honour. The tale of the \ac{B3LYP} functional is a strange one. It starts with a hybrid functional originally proposed by Becke in 1993~\autocite{Becke1993} with 3 empirical parameters to mix several \acp{GGA} and \ac{LDA} with exact exchange
\begin{align}
E^{\text{\acs{B3PW91}}}_{\text{\acs{xc}}} &= E^{\text{\acs{VWN}5}}_{\text{\acs{xc}}} +
a_0\bigl(E^{\text{exact}}_{\text{x}} - E^{\text{\acs{VWN}5}}_{\text{x}}\bigr) + {} \notag \\
&\eqspace
a_x\bigl(E^{\text{\acs{B88}}}_{\text{x}} - E^{\text{\acs{VWN}5}}_{\text{x}}\bigr) +
a_c\bigl(E^{\text{\acs{PW91}}}_{\text{c}} - E^{\text{\acs{VWN}5}}_{\text{c}}\bigr),
\end{align}
Becke fitted the parameters $a_0$, $a_x$ and $a_c$ to a set of thermodynamic data. This functional from Becke has been implemented in \textsc{Gaussian} --- the most popular quantum chemistry package --- with the following modifications
\begin{itemize}
  \item The \ac{GGA} correlation part from the \ac{PW91} functional was replaced by the correlation from the \ac{LYP} functional.
  \item The \ac{VWN}5 \ac{LDA} parametrization was replaced by the \ac{VWN}3. This was probably a mistake, since \ac{VWN}3 has the wrong asymptotic behaviour and \ac{VWN} recommended \emph{not} to use this parametrization.
\end{itemize}
Such a large modification of the \ac{B3PW91} would normally require a refitting of the empirical parameters, but this has not been done. Although the proper motivations for the \ac{B3LYP} functional are virtually absent, the \ac{B3LYP} functional is the most used approximation for the \ac{xc} energy in chemistry.

%
%
%

\section{Is there any meaning to the \ac{KS} system?}

The \ac{KS} system was introduced to calculate $T_s[\rho]$ as an approximation to the real kinetic energy functional $T[\rho]$. Is there more physical meaning to the \ac{KS} system?

\subsection{The occupied \acs{KS} orbital energies}
The energy of the \ac{HOMO} in finite systems is exactly equal to minus the first ionisation energy
\begin{align}
\varepsilon_{\text{\acs{HOMO}}} = -I_0 = E_0(N) - E_0(N-1) \, .
\end{align}
The first step to prove this relation is to show that the density decays as
\begin{align}
\dens(r \to \infty) \sim \e^{-2\sqrt{2I_0}r}
\end{align}
far away from the system. The derivation is beyond the scope of this lecture, but can be found in~\autocite{AlmbladhBarth1985} and see also~\autocite{KatrielDavidson1980, Hoffmann-Ostenhof1977}. The next step is to realize that the asymptotic behaviour of the density in the \ac{KS} system is governed by the asymptotic decay of the \ac{HOMO}, since the \ac{KS} orbitals asymptotically decay as
\begin{align}\label{eq:asympKSorb}
\psi_k(r \to \infty) \sim \e^{-\sqrt{-2\epsilon_k}r} \, .
\end{align}
All the other occupied orbitals have a more negative orbital energy, so they decay faster than the \ac{HOMO} when $r \to \infty$. The asymptotic behaviour of the \ac{KS} density is therefore dictated by the \ac{HOMO}.

In practice, however, the \ac{LDA} \& \ac{GGA} \ac{HOMO} energies are $\sim 4-5$ eV too high. This too high value for the \ac{HOMO} energy is caused by the too fast decay of their corresponding \ac{xc} potentials. The exact $v_{\text{\acs{xc}}}(\vecr)$ decays as $-1/r$ for neutral systems, since if the electron is pulled away, it leaves a system behind with one electron less, so an effective positive charge.

The other occupied orbital energies provide good approximations to the ionisation to excited ion states, $I_k = E_k(N-1) - E_0(N)$, if these excited ion states are well described by a single Slater determinant~\autocite{ChongGritsenkoBaerends2002, GritsenkoBaerends2002}. This statement only holds for the exact \ac{xc} potential, which is in general badly approximated by \ac{LDA} \& \ac{GGA} functionals.

For this reason, special approximations for the \ac{xc} potential have been constructed which have explicitly built in the correct asymptotic decay. The first potential with the correct asymptotic decay is the \acs{LB94} potential~\autocite{LeeuwenBaerends1994}. Later the \acs{SAOP} potential~\autocite{GritsenkoSchipperBaerends1999, SchipperGritsenkoGisbergen2000} was developed to also improve the description of the inner part of the \ac{xc} potential, which is important of the other occupied orbital energies.
\makeatletter%
\AC@placelabel{LB94}%
\AC@placelabel{SAOP}%
\makeatother%
Unfortunately, only an expression for the \ac{xc} potential is provided and a corresponding $E_{\text{\acs{xc}}}$ is not available. Nevertheless, if only the orbital energies are of interest, especially the \acs{SAOP} potential typically provides very good \ac{KS} orbital energies.

\begin{exercise}
Show that the \ac{LDA} x-potential decays as $\e^{-\nfrac{2}{3}\sqrt{2I_0}r}$ for the exact density.
\end{exercise}

\begin{exercise}
Derive that $\varepsilon_{\text{\acs{HOMO}}} = -I_0$ by considering the \ac{KS} equation in the asymptotic limit. You first need to show that the \ac{KS} orbitals indeed decay as in~\eqref{eq:asympKSorb} and subsequently you can use the the \ac{KS} density should be equal to the exact density (by construction), so also its asymptotic behaviour.
\end{exercise}

\subsection{The unoccupied \acs{KS} orbital energies}

Since the \ac{KS} orbitals are generated by a local potential, also the unoccupied \ac{KS} orbitals feel an $N-1$ system for large $r$. Note that this situation is different from the unoccupied \ac{HF} orbitals which feel an $N$ particle system. This causes the \ac{HF} unoccupied orbital energies to provide approximations to affinities via Koopmans' theorem~\autocite{Koopmans1934}.
Since the unoccupied \ac{KS} orbitals feel an $N-1$ system, their energies do \emph{not} provide approximations to affinities, but their energy differences with the occupied \ac{KS} energies provide good approximations to local excitations, if the excited state can be well described by a single Slater determinant. Local valence excitations are already quite good on the \ac{LDA} \& \ac{GGA} level, but the Rydberg excitations are problematic. Because the \ac{LDA} \& \ac{GGA} potentials decay too fast the unoccupied \ac{KS} are too high in energy which cause the Rydberg excitations to be unbound states on the \ac{LDA} \& \ac{GGA} level. As you might expect, the model potentials \acs{LB94} and \acs{SAOP} are very effective to cure this deficiency of the \ac{LDA} \& \ac{GGA} potentials. Especially \ac{SAOP}, which is a more sophisticated version of \ac{LB94}.

\section{Epilogue}
These lecture notes provided a short introduction into \ac{DFT}, too short to highlight all its subtleties and difficulties. Though great care has been taken to maximize validity while retaining simplicity for an introductory course, there are some incorrect assumptions which should at least be mentioned.
\begin{itemize}
\item There are two important classes of functionals that would deserve attention in a more extended course on \ac{DFT}. The first class of functionals are the orbital depend functionals. Since the \ac{KS} orbitals are also dependent on the density, orbital functionals can also be used as density functionals, though great care is needed when taking functional derivatives and one typically needs to use the \ac{OEP} method which is a numerically unstable procedure which needs additional care to stabilize. The advantage of orbital depend functionals is that the can be constructed in a more systematic manner.

The second class of functionals are the so-called \acp{WDA}. Instead of pretending the density to be constant around the reference position, these density functional approximation cut the hole out of the true density~\autocite{GunnarssonJonsonLundqvist1977,GunnarssonJonsonLundqvist1979}. The \acp{WDA} performs about equally good as the \acp{GGA}, but at much more computational cost, so they have been not been very popular. However, recently there has been a resurgence of interest~\autocite{BahmannErnzerhof2008, Cuevas-SaavedraChakrabortyAyers2012, Cuevas-SaavedraChakrabortyRabi2012, GiesbertzLeeuwenBarth2013, AntayaZhouErnzerhof2014, PecechtlovaBahmannKaupp2014, PrecechtelovaBahmannKaupp2015}.

\item The Levy--Lieb functional $F_{\text{LL}}[\rho]$ is not convex, so a global minimiser is not guaranteed. A suitable convex generalisation is~\autocite{Lieb1983}
\begin{align}
F_{\text{L}}[\rho]
= \min_{\hat{\Gamma} \to \rho}\Trace\bigl\{\hat{\Gamma}(\hat{T} + \hat{W})\bigr\} \, ,
\end{align}
where $\hat{\Gamma} = \sum_Kw_K\ket{\Psi_K}\bra{\Psi_K}$ is a density operator with $w_K \geq 0$ and $\sum_K w_K = 1$.

\item The functional $F_{\text{L}}[\rho]$ is only differentiable at $v$-representable densities, so the $v$-representability question is still an important issue~\autocite{Leeuwen2003, Lammert2006a}.

\item There are several regularisation techniques to deal with $v$ represent\-ability: course graining by Lammert~\autocite{Lammert2006b, Lammert2010} and Moreau--Yosida regularisation~\autocite{KvaalEkstromTeale2014}.

\item In the \ac{KS} construction it is assumed that all interacting $N$-rep\-resent\-able densities are non-interacting $v$-representable. This is not true~\autocite{Lieb1983}.

\item The \ac{KS} kinetic energy should be generalised by extending the search to density matrices instead of only pure states. The \ac{KS} system will not be solved by a single Slater determinant anymore~\autocite{Lieb1983}. This can even be an issue for real systems~\autocite{SchipperGritsenko1998}.

\item In finite basis sets the \ac{KS} typically degenerates to \ac{1RDM} functional theory. As a simple example, consider the ground state of H$_2$ in a minimal basis. The \ac{KS} orbital is readily constructed from the \ac{CI} density as
\begin{align}
\phi(\vecr) = \sqrt{\cos^2(\theta)\abs{\sigma_g(\vecr)}^2 + \sin^2(\theta)\abs{\sigma_u(\vecr)}^2} \, .
\end{align}
Typically, only a complete basis will manage to reproduce this orbital at all distances, i.e.\ for all $\theta$. Otherwise, the \ac{KS} system will yield a superposition of the $(\sigma_g)^2$ and $(\sigma_u)^2$ states and reproduce even the exact \ac{1RDM}.

\item In the recent years much progress has been made in the investigation of the properties of the functional~\autocite{SeidlPerdewLevy1999, Seidl1999, SeidlGori-GiorgiSavin2007, Gori-GiorgiVignaleSeidl2009, SeidlGori-GiorgiSavin2007}.
\begin{align}
W_{\text{\acs{SCE}}}[\rho]
= \min_{\hat{\Gamma} \to \rho}\Trace\bigl\{\hat{\Gamma}\,\hat{W}\bigr\} \, ,
\end{align}
which can be considered as the counterpart of the \ac{KS} kinetic energy functional. Indeed we have the obvious inequality
\begin{align}
F[\rho] \geq T_s[\rho] + W_{\text{\acs{SCE}}}[\rho] .
\end{align}
As only the interaction remains, the electrons become effectively a classical system constrained to yield a smooth density. To minimise the electrostatic energy, the electrons are forced to move in a strictly correlated manner, sometimes referred to as a `floating' Wigner crystal. This is called the \acf{SCE} limit.

An important feature is that instead of higher order derivatives of the density, the functional use a cumulant function as an important ingredient (at least in 1D)
\begin{align}
N_e(x) = \binteg{y}{-\infty}{x}\rho(y) \, .
\end{align}

\end{itemize}

\appendix

\chapter{Prolate spheroidal coordinate system}
\label{ap:prolSperCoords}

For quantum problems with two nuclei, its is often convenient to work with the prolate spheroidal coordinate system. As there are two nuclei, a spherical coordinate system seems out of place as we have two nuclei which we would like to place at the focus. Therefore, one rather starts from ellipses which have two foci at which we can place both nuclei. By adding hyperbolae we get a 2D coordinate system which is called an elliptic coordinate system. Simply revolving it around the bonding axis (long axes of the ellipses) we obtain the prolate spheroidal coordinate system.

\begin{figure}[t]
\centering
\begin{tikzpicture}
  \draw[->] (-5,0) -- (5,0) node[anchor=west] {$z$};
  \draw[->] (0,-3.5) -- (0,3.5) node[anchor=south] {$x$};;
  \node[vertex, label=south:$\vecR_A$] (fa) at (-3,0) {};
  \node[vertex, label=south:$\vecR_B$] (fb) at (3,0) {};
  \foreach \xrad in {3.5,4} \draw (0,0) ellipse [x radius=\xrad, y radius={sqrt(\xrad*\xrad - 9)}];
  \node[rotate=8] at (0.6,-1.45) {$\xi = \frac{7}{6}$};
  \node[rotate=8] at (0.7,-2.3) {$\xi = \frac{4}{3}$};
  \foreach \et/\ettext in {{1/2}/{\half},{-1/3}/{\frac{1}{3}}}
    \draw plot[variable=\mu, domain=-1:1]  ({3*\et*cosh(\mu)},{3*sqrt(1-\et*\et)*sinh(\mu)})
      node[above] {$\eta = \ettext$};
  \node[vertex, red, label=north east:{\color{red}$\vecr$}] (r) at (7/4,{sqrt(39)/4}) {};
  \draw[red] (fa) -- node[pos=0.55, above] {$r_a$} (r) -- node[right] {$r_b$} (fb);
  \node[vertex, blue, label=north west:{\color{blue}$\vecr'$}] (rp) at ({-4/3},{sqrt(56)/3}) {};
  \draw[blue] (fa) -- node[pos=0.3, right] {$r'_a$} (rp) -- node[pos=0.2, above] {$r'_b$} (fb);
\end{tikzpicture}
\caption{Elliptic coordinates for the points $\vecr$ and $\vecr'$.}
\label{fig:ellipticSystem}
\end{figure}

Let us first focus on the elliptic system in the plane. In Fig.~\ref{fig:ellipticSystem} we show two points, $\vecr$ and $\vecr'$ and the corresponding intersection ellipses and hyperbolae. The elliptic coordinates of a points $\vecr$ are easy to calculate from the distance to the foci, $r_{A/B} \isDefinedAs \abs{\vecr - \vecR_{A/B}}$,
\begin{align}
\xi &\isDefinedAs \frac{r_A + r_B}{R} &
&\text{and}&
\eta &\isDefinedAs \frac{r_B - r_A}{R} ,
\end{align}
where $R \isDefinedAs \abs{\vecr_A -\vecr_B}$ is the distance between the foci.

You see that each elliptic coordinate $(\xi,\eta)$ results in an intersection in both the upper half plane and the lower half plane, so it corresponds to two points. With some fiddling around, you can also establish the elliptic reverse transformation to be
\begin{align}
x &= \frac{R}{2}\sqrt{(\xi^2 - 1)(1 - \eta^2)} &
&\text{and}&
z &= \frac{R}{2}\xi\eta .
\end{align}
Note that we only recover the points in the upper half plane with this parametrisation. To generate the full 3D space, we turn the elliptic system around the $z$-axis with an angle $\phi \in [0,2\pi)$. This is called the prolate spheroidal coordinate system.\footnote{Turning the elliptical system around the $x$-axis results in the oblate spheroidal coordinate system.} This immediately also generates the coordinates in the negative half plane
\begin{align}
\vecr &= \begin{pmatrix} x \\ y \\ z \end{pmatrix}
= \frac{R}{2}\begin{pmatrix} \sqrt{(\xi^2 - 1)(1 - \eta^2)}\cos(\phi) \\
\sqrt{(\xi^2 - 1)(1 - \eta^2)}\sin(\phi) \\
\xi\eta
\end{pmatrix} .
\end{align}
The prolate spheroidal basis vectors are readily obtained as
\begin{multline}
\begin{pmatrix} \vece_{\xi} &\vece_{\eta} &\vece_{\phi} \end{pmatrix}
= \mat{J}
= \begin{pmatrix}\frac{\du\vecr}{\du\xi} &\frac{\du\vecr}{\du\eta} &\frac{\du\vecr}{\du\phi}
\end{pmatrix}  \\
= \frac{R}{2}\begin{pmatrix}
\xi\sqrt{\frac{1-\eta^2}{\xi^2-1}}\cos(\phi)	&-\eta\sqrt{\frac{\xi^2-1}{1-\eta^2}}\cos(\phi)
	&-{\scriptstyle\sqrt{(\xi^2 - 1)(1 - \eta^2)}}\sin(\phi) \\
\xi\sqrt{\frac{1-\eta^2}{\xi^2-1}}\sin(\phi)	&-\eta\sqrt{\frac{\xi^2-1}{1-\eta^2}}\sin(\phi)
	&{\scriptstyle\sqrt{(\xi^2 - 1)(1 - \eta^2)}}\cos(\phi) \\
\eta	&\xi	&0
\end{pmatrix} .
\end{multline}
It is readily checked that this is an orthonormal coordinate system, so we have a diagonal metric and the scaling factors are the lengths of the basis vectors
\begin{subequations}
\begin{align}
\lambda_{\xi} &= \abs{\vece_{\xi}}
= \frac{R}{2}\sqrt{\frac{\xi^2 - \eta^2}{\xi^2 - 1}}, \\
\lambda_{\eta} &= \abs{\vece_{\eta}}
= \frac{R}{2}\sqrt{\frac{\xi^2 - \eta^2}{1 - \eta^2}}, \\
\lambda_{\phi} &= \abs{\vece_{\phi}}
= \frac{R}{2}\sqrt{(\xi^2 - 1)(1 - \eta^2)} .
\end{align}
\end{subequations}
The volume element is now simply obtained as the product of all scaling factors
\begin{align}
\ud\vecr = \lambda_{\xi}\lambda_{\eta}\lambda_{\phi} \, \ud\xi\ud\eta\ud\phi
= \frac{R^3}{8}(\xi^2 - \eta^2) \, \ud\xi\ud\eta\ud\phi
\end{align}
and the Laplacian becomes
\begin{align}
\nabla^2 &= \frac{1}{\lambda_{\xi}\lambda_{\eta}\lambda_{\phi}}
\left(\frac{\du}{\du\xi}\frac{\lambda_{\eta}\lambda_{\phi}}{\lambda_{\xi}}\frac{\du}{\du\xi} +
\frac{\du}{\du\eta}\frac{\lambda_{\xi}\lambda_{\phi}}{\lambda_{\eta}}\frac{\du}{\du\eta} +
\frac{\du}{\du\phi}\frac{\lambda_{\xi}\lambda_{\eta}}{\lambda_{\phi}}\frac{\du}{\du\phi}\right) \notag \\*
&= \frac{4}{R^2(\xi^2 - \eta^2)}\biggl[
\frac{\du}{\du\xi}(\xi^2 - 1)\frac{\du}{\du\xi} +
\frac{\du}{\du\eta}(1 - \eta^2)\frac{\du}{\du\eta} + {} \\*
&\eqspace\hphantom{\frac{4}{R^2(\xi^2 - \eta^2)}\biggl(}
\biggl(\frac{1}{\xi^2 - 1} + \frac{1}{1 - \eta^2}\biggr)\frac{\du^2}{\du\phi^2}\biggr] . \notag 
\end{align}

\chapter{Functionals and their derivatives}
\label{ap:functionals}

This explains gives a short introduction to functionals and their derivatives without going in details on the mathematical details like existence and all possible variants. The word functional is has a different meaning depending on the context, but we will define it to be a function which maps to scalars: either real or complex
\begin{equation}
F \colon X \to K ,
\end{equation}
where $X$ is some arbitrary set and $K$ is the scalar field either $\Reals$ or $\Complex$. For our purpose $X$ will be some function space, but for the introduction it is easier to use an $n$-dimensional vector space, which can either be real or complex $K^n$. By considering functions as vectors with a continuous index, we can easily deduce what the functional derivatives are for simple functionals of functions.

As a first example of a functional of a vector, we consider the functional which simply reports the $i$th component of a vector $\mat{v} \in K^n$
\begin{equation}
F_i[\mat{v}] = v_i .
\end{equation}
The derivative of the functional with respect to an arbitrary component of the vector is simply the well known partial derivative
\begin{equation}
\frac{\du F_i[\mat{v}]}{\du v_k} = \delta_{ik} .
\end{equation}
Now let us consider a functional which sums the vector components raised to some arbitrary power $\alpha$
\begin{equation}
F_{\alpha}[\mat{v}] = \sum_{i=1}^nv_i^{\alpha} .
\end{equation}
In that case the functional derivative becomes by the usual differentiation rules
\begin{equation}
\frac{\du F_{\alpha}[\mat{v}]}{\du v_k} = \alpha v_k^{\alpha-1} .
\end{equation}
You can easily check that this is consistent with the functional derivative of the previous functional $F_i$
\begin{align}
\frac{\du F_{\alpha}[\mat{v}]}{\du v_k}
&= \frac{\du}{\du v_k}\sum_{i=1}^n\bigl(F_i[\mat{v}]\bigr)^{\alpha}
= \sum_{i=1}^n \frac{\du}{\du v_k}\bigl(F_i[\mat{v}]\bigr)^{\alpha} \notag \\
&= \sum_{i=1}^n \alpha\bigl(F_i[\mat{v}]\bigr)^{\alpha-1}\frac{\du F_i[\mat{v}]}{\du v_k}
= \sum_{i=1}^n \alpha v_i^{\alpha-1} \delta_{ik}
= \alpha v_k^{\alpha-1} ,
\end{align}
where we used the chain-rule to go to the second line.

Now let us consider functionals of functions. The equivalent version of the functional $F_i$ for vectors would now be a functional that evaluates a function at a particular point $\vecx$
\begin{equation}
F_{\vecx}[f] = f(\vecx) .
\end{equation}
The functional derivative is now not so straightforward, but since the continuum version of the Kronecker delta is the Dirac delta function which is actually a distribution, we expect
\begin{equation}
\frac{\delta F_{\vecx}[f]}{\delta f(\vecx')} = \delta(\vecx - \vecx') = \delta(\vecr - \vecr')\delta_{\sigma,\sigma'} .
\end{equation}
We followed here the tradition to replace the $\du$ by $\delta$. Further, since the spin degree of freedom is represented by a finite vector space, this part of the Dirac delta distribution for our composite coordinate is just the usual Kronecker delta.

Let us check this by considering the function variant of the functional $F_{\alpha}$
\begin{equation}
F_{\alpha}[f] = \integ{\vecx}\bigl(f(\vecx)\bigr)^{\alpha} .
\end{equation}
Taking the functional derivative with respect to $f(\vecy)$ basically means that we consider the change of this functional when modifying the functional only at the point $\vecy$. So in analogy to the vector case, we obtain
\begin{equation}
\frac{\delta F_{\alpha}[f]}{\delta f(\vecy)}
= \alpha\bigl(f(\vecy)\bigr)^{\alpha-1} .
\end{equation}
Now we can check consistency between these derivatives in the same manner as before if we assume that the integral and derivative may be swapped
\begin{align}
\frac{\delta F_{\alpha}[f]}{\delta f(\vecy)}
&= \frac{\delta}{\delta f(\vecy)}\integ{\vecx}\bigl(F_{\vecx}[f]\bigr)^{\alpha}
= \integ{\vecx}\frac{\delta}{\delta f(\vecy)}\bigl(F_{\vecx}[f]\bigr)^{\alpha} \notag \\
&= \integ{\vecx}\alpha\bigl(F_{\vecx}[f]\bigr)^{\alpha-1}\frac{\delta F_{\vecx}[f]}{\delta f(\vecy)} \notag \\
&= \integ{\vecx}\alpha\bigl(f(x)\bigr)^{\alpha-1}\delta(\vecx - \vecy)
= \alpha\bigl(f(\vecy)\bigr)^{\alpha-1} .
\end{align}
Introducing the derivative of functionals of functions only via analogy to functionals of vectors may make you feel unsure, so let us also consider the actual definition of the derivative. In this case we wil use the definition of the Gâteaux derivative which is a generalization of the directional derivative. The Gâteaux derivative of the functional $F$ at the function $f(\vecx)$ in the `direction' $h(\vecx)$ is defined as
\begin{equation}
D_h F[f] = \lim_{\epsilon \to 0}\frac{F[f + \epsilon h] - F[f]}{\epsilon}
= \frac{\ud F[f + \epsilon h]}{\ud\epsilon}\biggr\rvert_{\epsilon = 0} .
\end{equation}
We can now apply this definition to the integral of the arbitrary power of $f(\vecx)$ to find
\begin{align}
D_h F_{\alpha}[f]
&= \lim_{\epsilon \to 0}\frac{1}{\epsilon}\integ{\vecx}
\Bigl[\bigl(f(\vecx) + \epsilon h(\vecx)\bigr)^{\alpha} - \bigl(f(\vecx)\bigr)^{\alpha} \Bigr] \notag \\
&= \lim_{\epsilon \to 0}\integ{\vecx} \alpha\bigl(f(\vecx)\bigr)^{\alpha-1}h(\vecx) + \Order(\epsilon) \notag \\
&= \integ{\vecx} \underbrace{\alpha\bigl(f(\vecx)\bigr)^{\alpha-1}}_{= \frac{\delta F_{\alpha}}{\delta f(\vecx)}}h(\vecx) ,
\end{align}
where we denoted the integral kernel $\alpha\bigl(f(\vecx)\bigr)^{\alpha-1}$ with $\delta F_{\alpha}/\delta f(\vecx)$ in analogy with the directional derivative for functionals of vectors when they can be written in terms of the gradient as \(D_{\mat{h}} F[\mat{v}] = \nabla_{\mat{v}} F \cdot \mat{h} \).

This also shows you that the expression \( \delta F / \delta f(\vecx) \) is not always valid as a total derivative by analogy to the function \( \abs{x} \) for example. The directional derivative exists at $x = 0$ so is Gâteaux differentiable, but its total derivative \(\ud\abs{x} / \ud x \) does not exist. So the notation $\delta F / \delta f(\vecx)$ does not always make sense for arbitrary functionals.

\section{Functionals containing derivative of functions}
In the course we typically circumvent problems with functionals depending on derivatives of functions by using partial integration. We can actually use the partial integration trick to deal with derivatives of functions already at the functional derivative level. Consider the following functional
\begin{equation}
L[f] = \integ{\vecx} \mathcal{L}\bigl(f(\vecx),\nabla f(\vecx)\bigr) ,
\end{equation}
where $\mathcal{L}(f,\nabla f)$ is an explicit and local expression the gradient and the function itself. Now consider the definition of the Gâteaux derivative
\begin{align}
D_h L[f]
&= \lim_{\epsilon \to 0}\frac{1}{\epsilon}\integ{\vecx}
\Bigl[\mathcal{L}\bigl(f(\vecx) + \epsilon h(\vecx),\nabla (f(\vecx) + \epsilon h(\vecx))\bigr) - 
\mathcal{L}\bigl(f(\vecx),\nabla f(\vecx)\bigr)\Bigr] \notag \\
&= \lim_{\epsilon \to 0}\integ{\vecx}\biggl[\frac{\du\mathcal{L}}{\du f}(\vecx)\,h(\vecx) + 
\frac{\du\mathcal{L}}{\du \nabla f}(\vecx)\cdot \nabla h(\vecx) + \Order(\epsilon)\biggr] \notag \\
&= \integ{\vecx}\biggl[\frac{\du\mathcal{L}}{\du f}(\vecx) - \nabla \cdot \frac{\du\mathcal{L}}{\du \nabla f}(\vecx) \biggr] h(\vecx) ,
\end{align}
where we assumed that the boundary terms vanish, e.g.\ Dirichlet boundary conditions for a finite volume or $L^2$ functions. Further, \( \frac{\du\mathcal{L}}{\du f}(\vecx) \) and \( \frac{\du\mathcal{L}}{\du \nabla f}(\vecx) \) mean that we take the derivative of $\mathcal{L}$ as if $f$ and $\nabla f$ respectively would be normal variables. Hence, we find
\begin{equation}
\frac{\delta L}{\delta f(\vecx)} = \frac{\du\mathcal{L}}{\du f}(\vecx) - \nabla \cdot \frac{\du\mathcal{L}}{\du \nabla f}(\vecx)
= \frac{\du L}{\du f(\vecx)} - \nabla \cdot \frac{\du L}{\du \nabla f(\vecx)} ,
\end{equation}
i.e.\ the famous Euler--Lagrange expression.

It is quite straightforward to generalize to higher order derivatives by using partial integration multiple times.

\chapter{Lagrange multipliers}
\label{ap:Lagrangian}
This appendix explains how Lagrange multipliers can be used to formulate the first order optimality conditions for an optimization problem under equality constraints, i.e.\ conditions that only involve first order derivatives. It is a slight modification of Appendix A from~\autocite{PhD-Giesbertz2010}, with some modifications to limit the discussion to equality constraints. The Lagrange multiplier technique can also be generalized to inequality constraints, which leads to the \acf{KKT} conditions~\autocite{Karush1939, KuhnTucker1951}.

First the problem has to be formulated more mathematically
\begin{align}
\label{eq:genKKTproblem}
&\min_{\vecr \in \Reals^N}&	&f(\vecr) \notag \\
&\text{subject to}&			&h_j(\vecr) = 0		&&\text{for $j=1,\dotsc,l$}, \notag 
\end{align}
where $f \colon \Reals^N \to \Reals$ and $h_j \colon \Reals^N \to \Reals$ are continuously differentiable. Strictly, it is not necessary for the domains of $f$ and $h_j$ to be $\Reals^N$, but a more general domain would require to specify some restrictions on it, which would only cloud the discussion. We will now illustrate the concept of Lagrange multipliers with a daily problem.

\begin{figure}[tbp]
\begin{tabular}{@{}l@{\hspace{0.04\textwidth}}r@{}}
\begin{minipage}[t]{0.48\textwidth}
  \centering
  \begin{tikzpicture}
    \draw [->] (0,0) -- (4,0);
    \draw (4,0) node [anchor = west] {$x$};
    \draw [->] (0,0) -- (0,3.5);
    \draw (0,3.5) node [anchor = south] {$y$};
    \draw (0,0) (0,-1.5);

    \draw (-0.5,2.85) cos (0,2.5) sin (1.5,1.5) cos (3,2.5) sin (4.5,3.5);
    \draw (0.8,2) node[anchor = west] {bar};
    \draw (2.8,2.2) node [anchor = west] {$h(x,y) = 0$};

    \filldraw (0,0) circle (2pt);
    \draw (0,0) node [anchor = north east] {$\mat{O}$};
    \filldraw (3,0) circle (2pt);
    \draw (3,0) node [anchor = north] {$\mat{F}$};
  \end{tikzpicture}
\end{minipage} &
\begin{minipage}[t]{0.48\textwidth}
  \centering
  \begin{tikzpicture}
    \draw [->] (0,0) -- (4,0);
    \draw (4,0) node [anchor = west] {$x$};
    \draw [->] (0,0) -- (0,3.5);
    \draw (0,3.5) node [anchor = south] {$y$};
    \draw (0,0) (0,-1.5);

    \draw (-0.5,2.85) cos (0,2.5) sin (1.5,1.5) cos (3,2.5) sin (4.5,3.5);
    \draw (0.8,2) node[anchor = west] {bar};
    \draw (2.8,2.2) node [anchor = west] {$h(x,y) = 0$};

    \filldraw (0,0) circle (2pt);
    \draw (0,0) node [anchor = north east] {$\mat{O}$};
    \filldraw (3,0) circle (2pt);
    \draw (3,0) node [anchor = north] {$\mat{F}$};
    
    \draw (1.5,0) ellipse (1.75 and 0.5);
    \draw (1.5,0) ellipse (2.0 and 1);
    \draw (1.5,0) ellipse (2.25 and 1.5);
    
    \filldraw (1.5,1.5) circle (2pt);
    \draw (1.5,1.5) node [anchor = north] {$\mat{B}$};
    \draw [dashed, thick] (0,0) -- (1.5,1.5);
    \draw [dashed, thick] (1.5,1.5) -- (3,0);
  \end{tikzpicture}
\end{minipage} \\
\begin{minipage}[t]{0.48\textwidth}
  \caption{Schematic representation of the problem. The student is located at $\mat{O}$ and wants to find the shortest path to his friend at point $\mat{F}$ via the bar.}
  \label{fig:problemScetch}
\end{minipage} &
\begin{minipage}[t]{0.48\textwidth}
  \caption{Graphical solution of the problem. Without constraining $\mat{B}$ to be on the bar, all points $\mat{B}$ that give the same total distance. The solution is the ellipse that has only one point in common with the bar.}
  \label{fig:graphSolution}
\end{minipage}
\end{tabular}
\end{figure}

A student goes to the pub to meet his%
\footnote{Every occurrence of he\slash{}his may be replaced by she\slash{}her, if the other gender is preferred.}
 friend. Just after he enters the pub he sees his friend sitting at table. However, he can not go there without a beer, so he needs to go to the bar first. He is a bit drunk already, so the bar does not seem to be completely straight. A schematic view of the situation is given in Fig.~\ref{fig:problemScetch}. The student enters at the origin, $O$, his friend is sitting at point $\mat{F}$ and the edge of the bar is described by the function $h(x,y) = 0$. The student has to find a point $\mat{B}$ on the edge of the bar [$h(\mat{B}) = 0$] such that the distance from $\mat{O}$ to $\mat{B}$, $d_{OB}$, plus the distance from $\mat{B}$ to $\mat{F}$, $d_{BF}$ is as short as possible. So we can introduce the following objective function
\begin{align}
f = d_{OB} + d_{BF}.
\end{align}
The problem can be solved graphically. Suppose we take the total distance to be some fixed value. If we plot all possible combinations of $d_{OB}$ and $d_{BF}$ which sum to this fixed value, we obtain an ellipse. By increasing the size of this ellipse till it just hits the bar ($h(x,y) = 0$), we find the optimal path that the student should take. In Fig.~\ref{fig:graphSolution} we show a couple of these ellipses. The outer ellipse is just large enough to touch the edge of the bar, so this is the point $\mat{B}$ that minimises the total distance $f$.

Note that at the point $\mat{B}$ that the ellipse is tangent to the the bar. In fact, this is not specific for this problem. For all optimization problems with equality constraints, the objective function will be tangent to the constraints. A more mathematical way to formulate this is to say that the normal vectors of both curves (surfaces in higher dimensions) are parallel. The normal vector of a curve or surface is given by the gradient, so this condition can be expressed mathematically as
\begin{align}\label{eq:equality_dual}
\nabla f(\mat{P}) + \lambda\nabla h(\mat{P}) = 0.
\end{align}
The unknown constant multiplier $\lambda$ is known as the Lagrange multiplier and it is necessary, because the magnitudes of the two gradients might be different.
\begin{figure}[tbp]
\begin{tabular}{@{}l@{\hspace{0.04\textwidth}}r@{}}
\begin{minipage}[t]{0.48\textwidth}
  \centering
  \begin{tikzpicture}
    \draw [->] (0,0) -- (4,0);
    \draw (4,0) node [anchor = west] {$x$};
    \draw [->] (0,0) -- (0,3.5);
    \draw (0,3.5) node [anchor = south] {$y$};

    \draw (-0.5,2.85) cos (0,2.5) sin (1.5,1.5) cos (3,2.5) sin (4.5,3.5);
    \draw (3.1,2.5) node [anchor = west] {$h = 0$};
    
    \draw (1.5,2) arc (90:140:2.5 and 1.75);
    \draw (1.5,2) arc (90:35:2.5 and 1.75) node [anchor=north] {$f = \text{const.}$};
    
    \filldraw (2.38,1.88) circle (2pt);
    \coordinate (C) at (2.38,1.89);
    \coordinate (F) at ($ (C) + (255:1.7) $);
    \coordinate (N) at (intersection of F--{$ (F) + (130:2) $} and C--{$ (C) + (220:2) $});
    \coordinate (H) at ($ (N) + (255:-1.7) $);
    \draw [-stealth, thick] (C) -- (F) node [anchor=west] {$-\nabla f$};
    \draw [-stealth, thick] (C) -- (H) node [anchor= south west] {$-\lambda\nabla h$};
    \draw [dashed, thick] (F) -- (N) -- (H);
    \draw [-stealth, thick] (C) -- (N) node [anchor = east] {$\mat{F}_{\text{net}}$};
 \end{tikzpicture}
\end{minipage} &
\begin{minipage}[t]{0.48\textwidth}
  \centering
  \begin{tikzpicture}
    \draw [->] (0,0) -- (4,0);
    \draw (4,0) node [anchor = west] {$x$};
    \draw [->] (0,0) -- (0,3.5);
    \draw (0,3.5) node [anchor = south] {$y$};

    \draw (-0.5,2.85) cos (0,2.5) sin (1.5,1.5) cos (3,2.5) sin (4.5,3.5);
    \draw (3.1,2.5) node [anchor = west] {$h = 0$};
    
    \draw (1.5,1.5) arc (90:140:2.25 and 1.5);
    \draw (1.5,1.5) arc (90:35:2.25 and 1.5) node [anchor=north] {$f = \text{const.}$};
    
    \filldraw (1.5,1.5) circle (2pt);
    \draw [-stealth, thick] (1.5,1.5) -- +(0,1.2) node [anchor=east] {$-\lambda\nabla h$};
    \draw [-stealth, thick] (1.5,1.5) -- +(0,-1.2) node [anchor=east] {$-\nabla f$};
  \end{tikzpicture}
\end{minipage} \\
\begin{minipage}[t]{0.48\textwidth}
  \caption{The force of the objective function $-\nabla f$ and the force of the constraint $-\lambda\nabla h$ are imbalanced. A net force $\mat{F}_{\text{net}}$ remains.}
  \label{fig:noBalance}
\end{minipage} &
\begin{minipage}[t]{0.48\textwidth}
  \caption{At the optimal point the forces of the constraint $-\lambda\nabla h$ exactly cancels the force of the objective function $-\nabla f$, so there is no net force.}
  \label{fig:balance}
\end{minipage}
\end{tabular}
\end{figure}
This expression, including the Lagrange multiplier has a nice physical interpretation. Consider a particle at $\vecr$ and $f$ to be its potential energy, so the force at $\vecr$ is given by $-\nabla f(\vecr)$. Since the constraint prevents the particle from going to the unconstraint minimum, it has to generate an opposing force, $-\lambda\nabla h(\vecr)$. If the forces are not parallel to each other, there will remain a net force pushing the particle to a region with a lower potential energy, without violating the constraint (Fig.~\ref{fig:noBalance}). At the point where the forces exactly cancel each other, $\vecr^*$, the particle is at a minimum of the potential energy $f$, satisfying the constraint $h(\vecr^*) = 0$. The Lagrange multiplier can be thought of a measure how hard $h(\vecr^*)$ has to pull in order to balance the force generated by $f$ (Fig.~\ref{fig:balance}).

Usually the use of Lagrange multipliers is formulated by introducing a Lagrangian. It is simply defined to be the objective function $f$, plus all the required equality constraints weighted by Lagrange multipliers
\begin{align}
L(\vecr,\mat{\lambda}) = f(\vecr) + \sum_{j=1}^l\lambda_jh_j(\vecr).
\end{align}
By taking the partial derivatives with respect to all coordinates (including $\mat{\lambda}$), all the optimality conditions are obtained
\begin{subequations}
\begin{align}
\nabla f(\vecr) &+ \sum_{j=1}^l\lambda_j\nabla h_j(\vecr) = \mat{0}, \\
h_j(\vecr) &= 0 \qquad \text{for $j = 1,\dotsc,l$}.
\end{align}
\end{subequations}

\begin{example}
The problem of the student can be solved if he is not too drunk, so the bar can be described by a simple straight line. However, due to the square roots in the objective function, the algebra is quite formidable, so it hardly serves as an example. Therefore, as a first example we consider a more simple mathematical problem
\begin{align}
\begin{split}
\min_{x,y \in \Reals} \qquad	f(x,y) &= x^2y \\
\text{subject to} \quad 		x^2 + y^2 &= 3.
\end{split}
\end{align}
As the constraint function we will take $h(x,y) = x^2+y^2-3$. The Lagrangian can be written as
\begin{align}
L(x,y,\lambda) = f(x,y) + \lambda h(x,y) = x^2y + \lambda(x^2+y^2-3).
\end{align}
Now we take all the partial derivates of the Lagrangian $L$
\begin{subequations}
\begin{align}
\label{eq:LagrExam_dLdx}
\frac{\du L}{\du x} &= 2xy + 2\lambda x = 0, \\*
\label{eq:LagrExam_dLdy}
\frac{\du L}{\du y} &= x^2 + 2\lambda y = 0, \\*
\label{eq:LagrExam_dLdlam}
\frac{\du L}{\du \lambda} &= x^2+y^2 - 3 = 0.
\end{align}
\end{subequations}
Equation~\eqref{eq:LagrExam_dLdx} gives $x=0$ or $y = -\lambda$. In the first case Eq.~\eqref{eq:LagrExam_dLdlam} gives $y = \pm\sqrt{3}$, so by Eq.~\eqref{eq:LagrExam_dLdy} we have $\lambda = 0$.

\begin{table}[bt]
\centering
\begin{tabular}{CCCC}
\toprule %
x		& y		& \lambda & f(x,y) \\
\midrule
-\sqrt{2}	&-1			& 1		&-2 \\
-\sqrt{2}	& 1			&-1		& 2 \\
0			& \sqrt{3}	& 0 		& 0 \\
0			&-\sqrt{3}	& 0 		& 0 \\
\sqrt{2}	&-1			& 1		&-2 \\
\sqrt{2}	& 1			&-1		& 2 \\
\bottomrule%
\end{tabular}
\caption{All six critical points of the function $f(x,y) = x^2y$ with $x$ and $y$ constraint to lay on a circle with radius $\sqrt{3}$.}
\label{tab:LagrExamCritPoints}
\end{table}

In the second case Eq.~\eqref{eq:LagrExam_dLdy} gives
\begin{align}
x^2 - 2y^2 = 0 \quad \Rightarrow \quad x^2 = 2y^2.
\end{align}
Using this result in Eq.~\eqref{eq:LagrExam_dLdlam} gives
\begin{align}
3y^2=3 \quad \Rightarrow \quad y = \pm 1.
\end{align}
So the Lagrange multiplier is $\lambda = \mp 1$ and $x = \pm\sqrt{2}$, where the sign of $x$ is arbitrary. So the Lagrange equations have six critical points which are summarised in table~\ref{tab:LagrExamCritPoints}. As can be seen from the table, the optimisation problem has two global solutions at $(-\sqrt{2},-1)$ and $(\sqrt{2},-1)$. From the Hessian of the Lagrangian $L$ it may be determined that $(0,\sqrt{3})$ is a local minimum.
\end{example}

\begin{example}
\label{ex:LagrangeStudent}
In this example we will work out the problem for the student when he is not too drunk, so the bar is still straight. In this case, the constraint function can in general be defined as
\begin{align}
h(x,y) = y - (\alpha x + h),
\end{align}
where $x$ and $y$ will be the coordinates of point $\mat{B}$. Its gradient is
\begin{align}
\nabla h(x,y) = (-\alpha, 1)^T.
\end{align}
An explicit expression for the objective function, i.e.\ the total length of the path can be written as
\begin{align}
f(x,y) = d_{OB} + d_{BF} = \sqrt{x^2+y^2} + \sqrt{(x-c)^2 + y^2},
\end{align}
where $c$ is the direct distance between the student and his friend ($c \isDefinedAs \abs{\mat{F} - \mat{O}}$). The square roots in this function might seem pretty harmless, but they will make the task of solving this problem quite formidable, even though we only treat a straight bar.

The derivatives of $f$ can be worked out as
\begin{subequations}
\begin{align}
\frac{\du f}{\du x} &= \frac{2x}{2\sqrt{x^2+y^2}} + \frac{2(x-c)}{2\sqrt{(x-c)^2+y^2}}
= \frac{x}{d_{OB}} + \frac{x-c}{d_{BF}}, \\
\frac{\du f}{\du y} &= \frac{2y}{2\sqrt{x^2+y^2}} + \frac{2y}{2\sqrt{(x-c)^2+y^2}}
= \frac{y}{d_{OB}} + \frac{y}{d_{BF}}.
\end{align}
\end{subequations}
Using these derivatives, the stationarity conditions become (at the point $\mat{B} = \mat{P}$)
\begin{subequations}
\begin{align}
\label{eq:dLdx}
\frac{\du L}{\du x} = \frac{\du f}{\du x} + \lambda\frac{\du g}{\du x}
= \frac{x}{d_{OP}} + \frac{x-c}{d_{PC}} - \alpha\lambda &= 0, \\
\label{eq:dLdy}
\frac{\du L}{\du y} = \frac{\du f}{\du y} + \lambda\frac{\du g}{\du y}
= \frac{y}{d_{OP}} + \frac{y}{d_{PC}} + \lambda &= 0, \\
\label{eq:stud_eq_cond}
\alpha x + h &= y.
\end{align}
\end{subequations}
Since the partial derivative of the Lagrangian $L$ with respect to $y$ [Eq.~\eqref{eq:dLdy}] directly gives an expression for the Lagrange multiplier $\lambda$ in terms of $x$ and $y$, the Lagrange multiplier in Eq.~\eqref{eq:dLdx} can be eliminated. In principle, using the equality condition [Eq.~\eqref{eq:stud_eq_cond}] to eliminate $y$, we obtain an equation with only the variable $x$. However, this equation is quite formidable to solve due to all the square root terms. So feel free to skip the algebra and to jump to the answer at the end of this section.

It is convenient only to substitute for $\lambda$ only and not yet for $y$. The equation needs to be reordered, so that all the terms containing a square root, $d_{OP}$ and $d_{PC}$, are isolated on one site of the equation
\begin{align}
&&
\biggl(\frac{1}{d_{OP}} + \frac{1}{d_{PC}}\biggr)x - \frac{c}{d_{PC}} &+ \alpha\biggl(\frac{1}{d_{OP}} + \frac{1}{d_{PC}}\biggr)y = 0 \notag \\
&\Leftrightarrow&
\biggl(\frac{1}{d_{OP}} + \frac{1}{d_{PC}}\biggr)&(x + \alpha y) = \frac{c}{d_{PC}} \notag \\
&\Leftrightarrow&
\biggl(\frac{d_{PC}}{d_{OP}} + 1\biggr)&(x + \alpha y) = c \notag \\
&\Leftrightarrow&
\frac{d_{PC}}{d_{OP}} &= \frac{c}{x + \alpha y} - 1 \notag \\
&\Leftrightarrow&
\sqrt{\frac{(x-c)^2+y^2}{x^2+y^2}} &= \frac{c - x - \alpha y}{x + \alpha y}.
\end{align}
Now the square root can simply be eliminated by taking the square on both sides. Later, we have to check our solution in the original equation, since the number of solutions of the squared equation is twice as large. Before eliminating $y$, the equation can be cleaned up a bit further
\begin{align}
&&
\frac{(x-c)^2+y^2}{x^2+y^2} &= \frac{(x + \alpha y - c)^2}{(x + \alpha y)^2} \notag \\
&\Leftrightarrow&
\bigl((x-c)^2 + y^2\bigr)(x + \alpha y)^2 &= \bigl(x^2 + y^2\bigr)(x + \alpha y - c)^2 \notag \\
&\Leftrightarrow&
\begin{split}
(x^2 + y^2)(x + \alpha y)^2 + (c^2 - 2cx)&(x + \alpha y)^2 \\
= (x^2 + y^2)&(x + \alpha y)^2 + (x^2 + y^2)\bigl(c^2 - 2c(x + \alpha y)\bigr)
\end{split} \notag \\
&\Leftrightarrow&
(c - 2x)(x + \alpha y)^2 &= (x^2 + y^2)\bigl(c - 2(x + \alpha y)\bigr).
\end{align}
To proceed, the equality condition $y=\alpha x + h$ has to be inserted. However, the equation become quite formidable, so we will deal with the left- and righthand side of the equation separately. 
\begin{subequations}
\begin{align}
\text{l.h.} &= (c - 2x)\bigl(x + \alpha(\alpha x + h)\bigr)^2 \notag \\*
&= (c-2x)\bigl((1+\alpha^2)x + \alpha h\bigr)^2 \notag \\*
&= (c-2x)\bigl((1+\alpha^2)^2x^2 + 2\alpha h(1+\alpha^2)x + \alpha^2h^2\bigr) \notag \\*
&= -2(1+\alpha^2)^2x^3 - 4\alpha h(1+\alpha^2)x^2 + c(1+\alpha^2)^2x^2 - {} \notag \\*
&\eqspace
2\alpha^2h^2x + 2\alpha ch(1+\alpha^2)x + \alpha^2ch^2, \\
\text{r.h.} &= \bigl(x^2 + (\alpha x + h)^2\bigr)\bigl(c - 2(x + \alpha(\alpha x + h))\bigr) \notag \\
&= \bigl((1+\alpha^2)x^2 + 2\alpha h x + h^2\bigr)\bigl(c - 2\alpha h - 2(1+\alpha^2)x\bigr) \notag \\
&= -2(1+\alpha^2)^2x^3 - 4\alpha h(1+\alpha^2)x^2 + (c-2\alpha h)(1+\alpha^2)x^2 - {} \notag \\
&\eqspace
2h^2(1+\alpha^2)x + 2\alpha h(c-2\alpha)x + h^2(c-2\alpha h).
\end{align}
\end{subequations}
Comparing the left- and righthand side, we see that the first two terms are equal. Therefore, we will only be left with a polynomial of order two. Subtracting the righthand side from the lefthand side gives
\begin{align}
0 &= \text{l.h.} - \text{r.h.} \notag \\
&= \bigl[(1+\alpha^2)\bigl(c(1+\alpha^2) - (c - 2\alpha h)\bigr)\bigr]x^2 +
\bigl[\alpha^2ch^2 - h^2(c - 2\alpha h)\bigr] + {} \notag \\
&\eqspace
2\bigl[c\alpha h(1+\alpha^2) - \alpha^2h^2 + h^2(1+\alpha^2) - \alpha h(c-2\alpha)\bigr]x \notag \\
&= (1+\alpha^2)(c + \alpha^2c - c + 2\alpha h)x^2 +
\bigl[\alpha^2ch^2 -ch^2 + 2\alpha h^3\bigr]\notag \\
&\eqspace
2(\alpha ch + \alpha^3ch - \alpha^2h^2 + h^2 + \alpha^2h^2 - \alpha ch + 2\alpha^2h)x \notag \\
&= \alpha(1+\alpha^2)(\alpha c + 2h)x^2 + 2h(\alpha^3c + 2\alpha^2h + h)x +
h^2(\alpha^2c + 2\alpha h - c).
\end{align}
In principle the equation is now easy to solve. However, the coefficients in the polynomial are rather cumbersome, so it is actually still a tough task. First consider the discriminant $D$.
\begin{align}
D &= h^2(\alpha^3c + 2\alpha^2h + h)^2 + h^2(c - \alpha^2c - 2\alpha h)\alpha(1+\alpha^2)(\alpha c + 2h) \notag \\
&= h^2\bigl[(\underline{\alpha^3c + 2\alpha^2h} + h)\{\alpha^3c + 2\alpha^2 h + h\} + {} \notag \\
&\eqspace\hphantom{h^2\bigl[}
\bigl(\alpha c - (\underline{\alpha^3c + 2\alpha^2h})\bigr)
\bigl(\{\alpha^3c + 2\alpha^2h + h\} + h + \alpha c\bigr)\bigr] \notag \\
&= h^2\bigl[h(\underline{\alpha^3c + 2\alpha^2 h} + h) +
\alpha c(\underline{\alpha^3c + 2\alpha^2h} + 2h + \alpha c) - {} \notag \\
&\eqspace\hphantom{h^2\bigl[}
(\underline{\alpha^3c + 2\alpha^2h})(h + \alpha c) \notag \\
&= h^2(h^2 + 2\alpha c h + \alpha^2c^2) \notag \\
&= h^2(\alpha c + h)^2.
\end{align}
With this result for the discriminant, the final solution becomes quite simple
\begin{align}
x = \frac{-h(\alpha^3c+2\alpha h+h) + h(\alpha c+h)}{\alpha(1+\alpha^2)(\alpha c + 2h)}
= h\frac{c - 2\alpha h - \alpha^2c}{(1+\alpha^2)(\alpha c+ 2h)}.
\end{align}
The $y$ coordinate of $\mat{B}$ is now simply found using the equality condition~\eqref{eq:stud_eq_cond}
\begin{align}
y &= h\biggl(\alpha\frac{c - 2\alpha h - \alpha^2c}{(1+\alpha^2)(\alpha c+ 2h)} + 1\biggr) \notag \\*
&= h\frac{\alpha c - 2\alpha^2h - \alpha^3c + \alpha c + 2h + \alpha^3c + 2\alpha^2h}
{(1+\alpha^2)(\alpha c+ 2h)} \notag \\*
&= \frac{2h(\alpha c + h)}{(1+\alpha^2)(\alpha c+ 2h)}.
\end{align}
The expression for the total distances $d_{OP}$ and $d_{PC}$ and the Lagrange multiplier are quite horrendous, so we do not show them.
\end{example}

\chapter*{Acronyms}

\begin{acronym}[meta-GGA]
\acro{1RDM}{one-body reduced density matrix}
\acro{2RDM}{two-body reduced density matrix}
\acro{a.u.}{atomic units\acroextra{: $\hbar = e = m_e = 4\pi\epsilon_0$}}
\acro{B3LYP}{Replacement of the \acs{PW91} correlation part by the \acs{LYP} functional in the \acs{B3PW91}~\cite{KimJordan1994, StephensDevlinChabalowski1994} and originally also the replacement of \acs{VWN}5 by \acs{VWN}3}
\acro{B3PW91}{A hybrid functional by Becke~\cite{Becke1993}}
\acro{B88}{\acs{GGA} approximation to exchange by Becke~\cite{Becke1988}}
\acro{BP86}{\acs{GGA} composed of \acs{B88} and the correlation functional by Perdew from 1986~\cite{Perdew1986}}
\acro{c-hole}{correlation hole}
\acro{CI}{configuration(s) interaction}
\acro{DFT}{density functional theory}
\acro{GEA}{gradient expansion approximation}
\acro{GGA}{generalized gradient approximation}
\acro{GTO}{Gaussian type orbital}
\acro{H}{Hartree\acroextra{classical Coulomb}}
\acro{HEG}{homogeneous electron gas}
\acro{HF}{Hartree--Fock}
\acro{HK}{Hohenberg--Kohn}
\acro{KKT}{Karush--Kuhn--Tucker}
\acro{HOMO}{highest occupied molecular orbital}
\acro{KS}{Kohn--Sham}
\acro{LB94}{The model \acs{xc} potential by Van Leeuwen and Baerends~\cite{LeeuwenBaerends1994}}
\acro{LDA}{local density approximation}
\acro{LUMO}{lowest unoccupied molecular orbital}
\acro{LYP}{\acs{GGA} correlation functional by Lee, Yang and Parr~\cite{LeeYangParr1988}}
\acro{meta-GGA}{A functional one step beyond the \acs{GGA}, so it also includes the Laplacian of the density}
\acro{MP}{Møller--Plesset}
\acro{OEP}{optimized effective potential}
\acro{PBE}{The \acs{GGA} by Perdew, Burke and Ernzerhof~\cite{PerdewBurkeErnzerhof1996}}
\acro{PW91}{The \acs{GGA} by Perdew and Wang~\cite{WangPerdew1991, PerdewChevaryVosko1992}}
\acro{RHF}{restricted \acs{HF}}
\acro{ROHF}{restricted open shell \acs{HF}}
\acro{SAOP}{The statistical averaging of (model) orbital potentials~\cite{GritsenkoSchipperBaerends1999, SchipperGritsenkoGisbergen2000}}
\acro{SCAN}{Strongly Constrained and Appropriately Normed~\cite{SunRuzsinszkyPerdew2015}}
\acro{SCE}{strictly correlated electrons~\cite{SeidlPerdewLevy1999, Seidl1999, SeidlGori-GiorgiSavin2007, Gori-GiorgiVignaleSeidl2009, SeidlGori-GiorgiSavin2007}}
\acro{SCF}{self-consistent field}
\acro{STO}{Slater type orbital}
\acro{TPSS}{The \acs{meta-GGA} by Tao, Perdew, Staroverov and Scuseria~\cite{TaoPerdewStaroverov2003}}
\acro{UHF}{unrestricted \acs*{HF}}
\acro{VWN}{Vosko, Wilk and Nusair}
\acro{WDA}{weighted density approximation}
\acro{x-hole}{exchange hole}
\acro{xc}{exchange-correlation}
\end{acronym}

\printbibliography

\printindex

\end{document}